\newtheorem{theorem}{Theorem}
\newtheorem{lemma}{Lemma}
\newtheorem{proposition}{Proposition}
\newtheorem{definition}{Definition}
\newtheorem{corollary}{Corollary}
\newtheorem*{conjecture*}{Conjecture}
\numberwithin{equation}{section}
\begin{document}

\title{Hidden spatiotemporal symmetries\\ and intermittency in turbulence}
\author{Alexei A. Mailybaev}
\date{Instituto de Matem\'atica Pura e Aplicada -- IMPA, Rio de Janeiro, Brazil \\
Email: alexei@impa.br}

\maketitle

\begin{abstract}
We consider general infinite-dimensional dynamical systems with the Galilean and spatiotemporal scaling symmetry groups. Introducing the equivalence relation with respect to temporal scalings and Galilean transformations, we define a representative set containing a single element within each equivalence class. Temporal scalings and Galilean transformations do not commute with the evolution operator (flow) and, hence, the equivalence relation is not invariant. Despite of that, we prove that a normalized flow with an invariant probability measure can be introduced on the representative set, such that symmetries are preserved in the statistical sense. We focus on hidden symmetries, which are broken in the original system but restored in the normalized system. The central motivation and application of this construction is the intermittency phenomenon in turbulence. We show that hidden symmetries yield power law scaling for structure functions, and derive formulas for their exponents in terms of normalized measures. The use of Galilean transformation in the equivalence relation leads to the Quasi--Lagrangian description, making the developed theory applicable to the Euler and Navier--Stokes systems.
\end{abstract}


\section{Introduction}

Symmetry principles play important role in understanding the laws of nature~\cite{brading2003symmetries}. In particular, they provide powerful tools for the analysis of complex systems through self-similarity and renormalization~\cite{cardy1996scaling}. In this work, we focus on symmetries, which shape the modern understanding of developed turbulence: the Galilean and spatiotemporal scaling groups~\cite{frisch1999turbulence,falkovich2009symmetries}. Symmetry considerations are central in Kolmogorov’s theory of 1941 \cite{kolmogorov1941local}, which assumes a homogeneous, isotropic and scale invariant stationary state. These symmetries are understood in the statistical sense, i.e., being satisfied by probabilistic quantities rather than exact solutions of equations of motion. This is an important distinction, since probabilistic formulations may lead to additional symmetries; see e.g.~\cite{kraichnan1965lagrangian,kraichnan1975remarks,oberlack2010new,waclawczyk2014statistical,mailybaev2020hidden,oberlack2022turbulence}. Whether or not, and in which sense solutions are symmetric is an important issue, both for the theory and applications. For example, the broken scale invariance of statistically stationary solutions underlines  the still not well understood phenomenon of intermittency in turbulence~\cite{frisch1999turbulence}. 

In this work, we investigate the particular role of symmetries that do not commute with the flow (evolution) operator $\Phi^t$. The two fundamental symmetries of this kind are temporal scalings and Galilean transformations. Their commutation with $\Phi^t$ relates states at different times or translated in physical space. We prove that such noncommutativity is responsible for the existence of sophisticated ``hidden'' symmetries of statistical solutions: these symmetries are broken in the original formulation but can be restored using equivalence relations. 

{\color{black}The suggested new formalism follows and gives a rigorous foundation to several phenomenological ideas in the turbulence theory. Their origin lies in the famous work of Kolmogorov in 1962~\cite{kolmogorov1962refinement}, where the concept of ``multipliers" first appeared (as they were called later). Kolmogorov’s hypothesis of self-similarity for these multipliers, which are ratios of velocity differences at distinct scales, can be seen as the first manifestation of the hidden symmetry. This idea was inspired by the theory of multiplicative stochastic processes and further discussed in~\cite{benzi1993intermittency,chen2003kolmogorov,eyink2003gibbsian,biferale2017optimal,PhysRevX.11.021063}. Another idea came from the work of Parisi and Frisch in 1983 on the multifractal model~\cite{frisch1985singularity,frisch1999turbulence}. Those authors remarked that ``Since the Navier-Stokes equations (in the zero viscosity limit) are invariant under the group of
scaling transformations (defined in eq.(2.2)) for \textit{any} value of $h$, singularities of arbitrary exponents (and mixtures thereof) are consistent with the equations." The hidden-symmetry formalism presented below naturally unifies the ideas of Kolmogorov with those of Parisi-Frisch. We prove that our construction fuses the one-parameter family of space-time scaling symmetries (depending on $h$) into the single hidden symmetry, therefore, reducing the Parisi--Frisch argument to the restoration of the hidden symmetry alone and in the usual sense. Existence of such kind of symmetry in intermittent turbulence was also anticipated in the work of She and Leveque in 1994 on the log-Poisson model~\cite{she1994universal}, where the
authors wrote that “We believe that this relation is a consequence of some hidden (statistical) symmetries in the solution of the Navier-Stokes equations.”}

\subsection{Spatiotemporal symmetries}
\label{secEuler}

Let us introduce a group of space-time symmetries of interest by examining the Euler system, which describes a flow of ideal incompressible fluid of unit density. Its equations have the form
	\begin{equation}
	\frac{\partial \mathbf{u}}{\partial t}
	+ \mathbf{u}\cdot \nabla \mathbf{u} = -\nabla p,\quad 
	\nabla \cdot \mathbf{u} = 0,
	\label{eqE3b}
	\end{equation}
where $\mathbf{u}(\mathbf{r},t) \in \mathbb{R}^d$ is the velocity field and $p(\mathbf{r},t) \in \mathbb{R}$ is the pressure in physical space $\mathbf{r} \in \mathbb{R}^d$ of dimension $d$. Given a solution $\mathbf{u}(\mathbf{r},t)$ the following relations generate new solutions as\\
	\begin{equation}
	\begin{array}{rll}
		\textrm{temporal translation:}& 
		\mathbf{u}(\mathbf{r},t) \mapsto \mathbf{u}(\mathbf{r},t'+t),&
		t' \in \mathbb{R};
		\\[2pt]
		\textrm{spatial translation:}& 
		\mathbf{u}(\mathbf{r},t) \mapsto \mathbf{u}(\mathbf{r}+\mathbf{r}',t),&
		\mathbf{r}' \in \mathbb{R}^d;
		\\[2pt]
		\textrm{rotation:}& 
		\mathbf{u}(\mathbf{r},t) \mapsto \mathbf{Q}^{-1}\mathbf{u}(\mathbf{Q}\mathbf{r},t),&
		\mathbf{Q} \in \mathrm{O}(d);
		\\[2pt]
		\textrm{Galilean transformation:}& 
		\mathbf{u}(\mathbf{r},t) \mapsto \mathbf{u}(\mathbf{r}+\mathbf{v}t,t)-\mathbf{v},&
		\mathbf{v} \in \mathbb{R}^d;
		\\[2pt]
		\textrm{temporal scaling:}& 
		\mathbf{u}(\mathbf{r},t) \mapsto \mathbf{u}(\mathbf{r},t/a)/a,&
		a > 0;
		\\[2pt]
		\textrm{spatial scaling:}& 
		\mathbf{u}(\mathbf{r},t) \mapsto b\mathbf{u}(\mathbf{r}/b,t),&
		b > 0,
	\end{array}	
	\label{eq2}
	\end{equation}
where $\mathrm{O}(d)$ is the orthogonal group;
the pressure is not included because it can be expressed through velocity~\cite{frisch1999turbulence}. Transformations (\ref{eq2}) generate the sum of Galilean and spatiotemporal scaling groups. 

We now write transformations (\ref{eq2}) in terms of the evolution operator (flow) $\Phi^t$ and mappings acting on velocity fields at a fixed time. In this description, points of the configuration space $\mathcal{X}$ are time-independent velocity fields $x = \mathbf{u}(\mathbf{r})$, and the flow $\Phi^t: \mathcal{X} \mapsto \mathcal{X}$ relates velocity fields at different times with the property $\Phi^{t_1+t_2} = \Phi^{t_1} \circ \Phi^{t_2}$ for any $t_1$ and $t_2$. The flow $\Phi^t$ is associated with temporal translations, and remaining relations in (\ref{eq2}) taken at $t = 0$ yield the maps $s:\mathcal{X} \mapsto \mathcal{X}$ as
	\begin{equation}
	\begin{array}{rlll}
		s^{\mathbf{r}'}_{\mathrm{s}}:& 
		\mathbf{u}(\mathbf{r}) \mapsto \mathbf{u}(\mathbf{r}+\mathbf{r}'),&
		\mathbf{r}' \in \mathbb{R}^d, & (\textrm{spatial translation})
		\\[2pt]
		s^{\mathbf{Q}}_{\mathrm{r}}:& 
		\mathbf{u}(\mathbf{r}) \mapsto \mathbf{Q}^{-1}\mathbf{u}(\mathbf{Q}\mathbf{r}),&
		\mathbf{Q} \in \mathrm{O}(d), & (\textrm{rotation})
		\\[2pt]
		s^{\mathbf{v}}_{\mathrm{g}}:& 
		\mathbf{u}(\mathbf{r}) \mapsto \mathbf{u}(\mathbf{r})-\mathbf{v},&
		\mathbf{v} \in \mathbb{R}^d, & (\textrm{Galilean transformation})
		\\[2pt]
		s^{a}_{\mathrm{ts}}:& 
		\mathbf{u}(\mathbf{r}) \mapsto \mathbf{u}(\mathbf{r})/a,&
		a > 0, & (\textrm{temporal scaling})
		\\[2pt]
		s^{b}_{\mathrm{ss}}:& 
		\mathbf{u}(\mathbf{r}) \mapsto b\mathbf{u}(\mathbf{r}/b),&
		b > 0. & (\textrm{spatial scaling})
	\end{array}	
	\label{eq3}
	\end{equation}

Table~\ref{tab1} describes commutation relations for the flow $\Phi^t$ and all mappings in (\ref{eq3}) in agreement with time-dependent transformations (\ref{eq2}). In our study, {\color{black}we will not refer to any particular system, except in explicit examples, but instead consider Tab.~\ref{tab1} as a definition, which is  based on fundamental physical properties of space and time. Namely,} we assume the existence of flow $\Phi^t$ and other maps from Tab.~\ref{tab1} acting on some configuration space $\mathcal{X}$ and generating a group with the composition operation. 

{\color{black}The assumed existence of a flow (or semiflow) operator $\Phi^t$ deserves a special remark, because it is a still unresolved issue for the Euler equations (\ref{eqE3b}); see e.g. \cite{gibbon2008three}. In the traditional approach of developed turbulence~\cite{frisch1999turbulence}, symmetries of Tab.~\ref{tab1} are considered in the asymptotic sense, corresponding to the inviscid limit of Navier--Stokes equations. The Navier--Stokes system is supposed to have a unique solution, though this has not yet been rigorously proven~\cite{fefferman2006existence}. Having this approach in mind (developed in more details in Section~\ref{subsec_conj}), we assume the existence of a flow map $\Phi^t$, therefore, bypassing the lack of global-in-time existence and uniqueness results for particular systems of interest. On the other hand, recent studies~\cite{mailybaev2016spontaneously,biferale2018rayleigh,mailybaev2021spontaneously} indicate that the inviscid limit yields spontaneously stochastic solutions, in which case the map $\Phi^t$ is defined as acting on probability distributions (for both velocity fields~\cite{thalabard2020butterfly} and particle trajectories~\cite{falkovich2001particles,eyink2006turbulent,eyink2007turbulent}) rather than on specific deterministic states. We expect that the hidden symmetry formalism presented here can later be extended to such systems, along with the development of the theory of spontaneous stochasticity.}

\begin{table}
\begin{center}
\normalsize
 \begin{tabular}{| c | l l l l l l |} 
\hline
	& $\ \Phi^t\qquad\qquad\quad$ & 
	$s^{\mathbf{r}}_{\mathrm{s}}\qquad\qquad$ & 
	$s^{\mathbf{Q}}_{\mathrm{r}}\qquad\quad\quad$ & 
	$s^{\mathbf{v}}_{\mathrm{g}}\qquad\qquad\quad$ &
	$s^{a}_{\mathrm{ts}}\qquad\quad$ & 
	$s^{b}_{\mathrm{ss}}\qquad\quad$ 
\\[3pt] \hline 
	$\ \ \Phi^t\ \ $ & 
	$\ \Phi^{t_1+t_2}$ & 
	$s^{\mathbf{r}}_{\mathrm{s}} \circ \Phi^t$ & 
	$s^{\mathbf{Q}}_{\mathrm{r}} \circ \Phi^t$ & 
	$s^{\mathbf{v}t}_{\mathrm{s}} \circ s^{\mathbf{v}}_{\mathrm{g}} \circ \Phi^t$ &
	$s^a_{\mathrm{ts}} \circ \Phi^{t/a}$ & 
	$s^{b}_{\mathrm{ss}} \circ \Phi^t$ 
\\[3pt] 
	$s^{\mathbf{r}}_{\mathrm{s}}$ & 
	$\ \Phi^t \circ s^{\mathbf{r}}_{\mathrm{s}}$ & 
	$s^{\mathbf{r}_1+\mathbf{r}_2}_{\mathrm{s}}$ & 
	$s^{\mathbf{Q}}_{\mathrm{r}} \circ s^{\mathbf{Q}\mathbf{r}}_{\mathrm{s}}$ & 
	$s^{\mathbf{v}}_{\mathrm{g}} \circ s^{\mathbf{r}}_{\mathrm{s}}$ &
	$s^{a}_{\mathrm{ts}} \circ s^{\mathbf{r}}_{\mathrm{s}}$ & 
	$s^{b}_{\mathrm{ss}} \circ s^{\mathbf{r}/b}_{\mathrm{s}}$ 
\\[3pt] 
	$s^{\mathbf{Q}}_{\mathrm{r}}$ & 
	$\ \Phi^t \circ s^{\mathbf{Q}}_{\mathrm{r}}$ & 
	$s^{\mathbf{Q}^{-1}\mathbf{r}}_{\mathrm{s}} \circ s^{\mathbf{Q}}_{\mathrm{r}}$ & 
	$s^{\mathbf{Q}_1\mathbf{Q}_2}_{\mathrm{r}}$ &
	$s^{\mathbf{Q}^{-1}\mathbf{v}}_{\mathrm{g}} \circ s^{\mathbf{Q}}_{\mathrm{r}}$ &
	$s^{a}_{\mathrm{ts}} \circ s^{\mathbf{Q}}_{\mathrm{r}}$ &
	$s^{b}_{\mathrm{ss}} \circ s^{\mathbf{Q}}_{\mathrm{r}}$ 
\\[3pt] 
	$s^{\mathbf{v}}_{\mathrm{g}}$ & 
	$\ s^{-\mathbf{v}t}_{\mathrm{s}} \circ \Phi^t \circ s^{\mathbf{v}}_{\mathrm{g}}$ & 
	$s^{\mathbf{r}}_{\mathrm{s}} \circ s^{\mathbf{v}}_{\mathrm{g}}$ & 
	$s^{\mathbf{Q}}_{\mathrm{r}} \circ s^{\mathbf{Q}\mathbf{v}}_{\mathrm{g}}$ & 
	$s^{\mathbf{v}_1+\mathbf{v}_2}_{\mathrm{g}}$ &
	$s^{a}_{\mathrm{ts}} \circ s^{a\mathbf{v}}_{\mathrm{g}}$ & 
	$s^{b}_{\mathrm{ss}} \circ s^{\mathbf{v}/b}_{\mathrm{g}}$ 
\\[3pt] 
	$s^{a}_{\mathrm{ts}}$ & 
	$\ \Phi^{at} \circ s^a_{\mathrm{ts}}$ & 
	$s^{\mathbf{r}}_{\mathrm{s}} \circ s^{a}_{\mathrm{ts}}$ & 
	$s^{\mathbf{Q}}_{\mathrm{r}} \circ s^{a}_{\mathrm{ts}}$ & 
	$s^{\mathbf{v}/a}_{\mathrm{g}} \circ s^{a}_{\mathrm{ts}}$ &
	$s^{a_1a_2}_{\mathrm{ts}}$ &
	$s^b_{\mathrm{ss}} \circ s^a_{\mathrm{ts}}$ 
\\[3pt] 
	$s^{b}_{\mathrm{ss}}$ & 
	$\ \Phi^t \circ s^{b}_{\mathrm{ss}}$ & 
	$s^{b\mathbf{r}}_{\mathrm{s}} \circ s^{b}_{\mathrm{ss}}$ & 
	$s^{\mathbf{Q}}_{\mathrm{r}} \circ s^{b}_{\mathrm{ss}}$ & 
	$s^{b\mathbf{v}}_{\mathrm{g}} \circ s^{b}_{\mathrm{ss}}$ &
	$s^{a}_{\mathrm{ts}} \circ s^{b}_{\mathrm{ss}}$ & 
	$s^{b_1b_2}_{\mathrm{ss}}$ 
\\[3pt] \hline
\end{tabular}
\end{center}
\caption{Commutation relations among the flow $\Phi^t$ and symmetry mappings (\ref{eq3}); the primes are dropped for simplicity. In these relations, the left-hand side is understood as $(\textrm{row map}) \circ (\textrm{column map})$ and the right-hand side is given in the main part of the table. For the diagonal elements, one assumes the index $1$ for the row and $2$ for the column.}
\label{tab1}
\end{table}

Focusing on statistical properties of the flow, we consider an invariant probability measure $\mu$ on the configuration space $\mathcal{X}$. The invariance signifies that the push-forward $\Phi^t_\sharp \mu = \mu$ for any time. Then, we consider symmetries in the statistical sense, as transformations of $\mu$ preserving its invariance. For example, one can see using the commutation relations of Tab.~\ref{tab1} that all maps in (\ref{eq3}), except for Galilean transformations, are symmetries: a push-forward of $\mu$ by these maps yield invariant measures. Galilean transformations become symmetries under an extra homogeneity condition for the invariant measure: $\left(s^{\mathbf{r}}_{\mathrm{s}}\right)_\sharp \mu = \mu$ for any translation $\mathbf{r}$ in physical space.

\subsection{Quotient construction} 

Our study will be developed around the two groups 
	\begin{eqnarray}
	\mathcal{H} & = & \big\{s^{a}_{\mathrm{ts}} \circ s^{\mathbf{v}}_{\mathrm{g}}:\ 
	a > 0,\ \mathbf{v} \in \mathbb{R}^d \big\},
	\label{eq_I_1a}
	\\[3pt]
	\mathcal{G} & = & \big\{s^{\mathbf{Q}}_{\mathrm{r}} \circ 
	s^b_{\mathrm{ss}}:\ 
	\mathbf{Q} \in \mathrm{O}(d),\,
	b > 0\big\}.
	\label{eq_I_1b}
	\end{eqnarray}
The group $\mathcal{H}$ contains maps $h: \mathcal{X} \mapsto \mathcal{X}$ generated by temporal scalings and Galilean transformations, which do not commute with the flow; see Tab.~\ref{tab1}. The commutation of $\Phi^t$ with $s^{a}_{\mathrm{ts}}$ leads to a different time $t/a$, while the commutation of $\Phi^t$ with $s^{\mathbf{v}}_{\mathrm{g}}$ contains an extra spatial translation $s^{\mathbf{v}t}_{\mathrm{s}}$. Maps $g: \mathcal{X} \mapsto \mathcal{X}$ of the group $\mathcal{G}$ are generated by spatial rotations and scalings, which commute with $\Phi^t$. Spatial translations $s^{\mathbf{r}}_{\mathrm{s}}$, which are not included in $\mathcal{H}$ and $\mathcal{G}$, will play an auxiliary role in our study.

Using the group $\mathcal{H}$, we define the equivalence relation between two states as
	\begin{equation}
	x \sim x' \quad \textrm{if} \quad
	x' = h(x),\ h \in \mathcal{H}.
	\label{eq_I_2}
	\end{equation}
Equivalence classes 
	\begin{equation}
	\mathcal{E}(x) = \{x' \in \mathcal{X}: x' \sim x \} 
	\label{eq_I_2b}
	\end{equation}
form a partition of the configuration space $\mathcal{X}$. Because of noncommutativity, this partition is not invariant with respect to the flow: generally, $\Phi^t(x)$ and $\Phi^t(x')$ are not equivalent for initially equivalent states $x \sim x'$; see Fig.~\ref{fig1}(a). However, due to the specific form of commutation relations, the equivalence can be ``repaired'' as follows. Using relations of Tab.~\ref{tab1}, we have
	\begin{equation}
	s^{-\mathbf{v}t}_{\mathrm{s}} \circ
	\Phi^{at} \circ s^{a}_{\mathrm{ts}} \circ s^{\mathbf{v}}_{\mathrm{g}} 
	= s^{a}_{\mathrm{ts}} \circ s^{\mathbf{v}}_{\mathrm{g}}
	\circ \Phi^{t}.
	\label{eq_I_3ext}
	\end{equation}
Hence, we can write
	\begin{equation}
	s^{\mathbf{r}}_{\mathrm{s}} \circ \Phi^{t'}(x') = h \circ \Phi^t(x), \quad
	t' = at,\quad \mathbf{r} = -\mathbf{v}t,
	\label{eq_I_3}
	\end{equation}
for $x' = h(x)$ with a general element $h = s^{a}_{\mathrm{ts}} \circ s^{\mathbf{v}}_{\mathrm{g}}$ of the group (\ref{eq_I_1a}). Thus, all initially equivalent states $x \sim x'$ are fit into the same equivalence class at larger times, if one assumes the specific time synchronization $t' = at$ and the extra spatial translation $\mathbf{r} = -\mathbf{v}t$ for each $x'$, as shown in Fig.~\ref{fig1}(b). This construction is determined by a selected representative element $x$, with respect to which all other equivalent states are ``synchronized''.

\begin{figure}[t]
\centering
\includegraphics[width=0.8\textwidth]{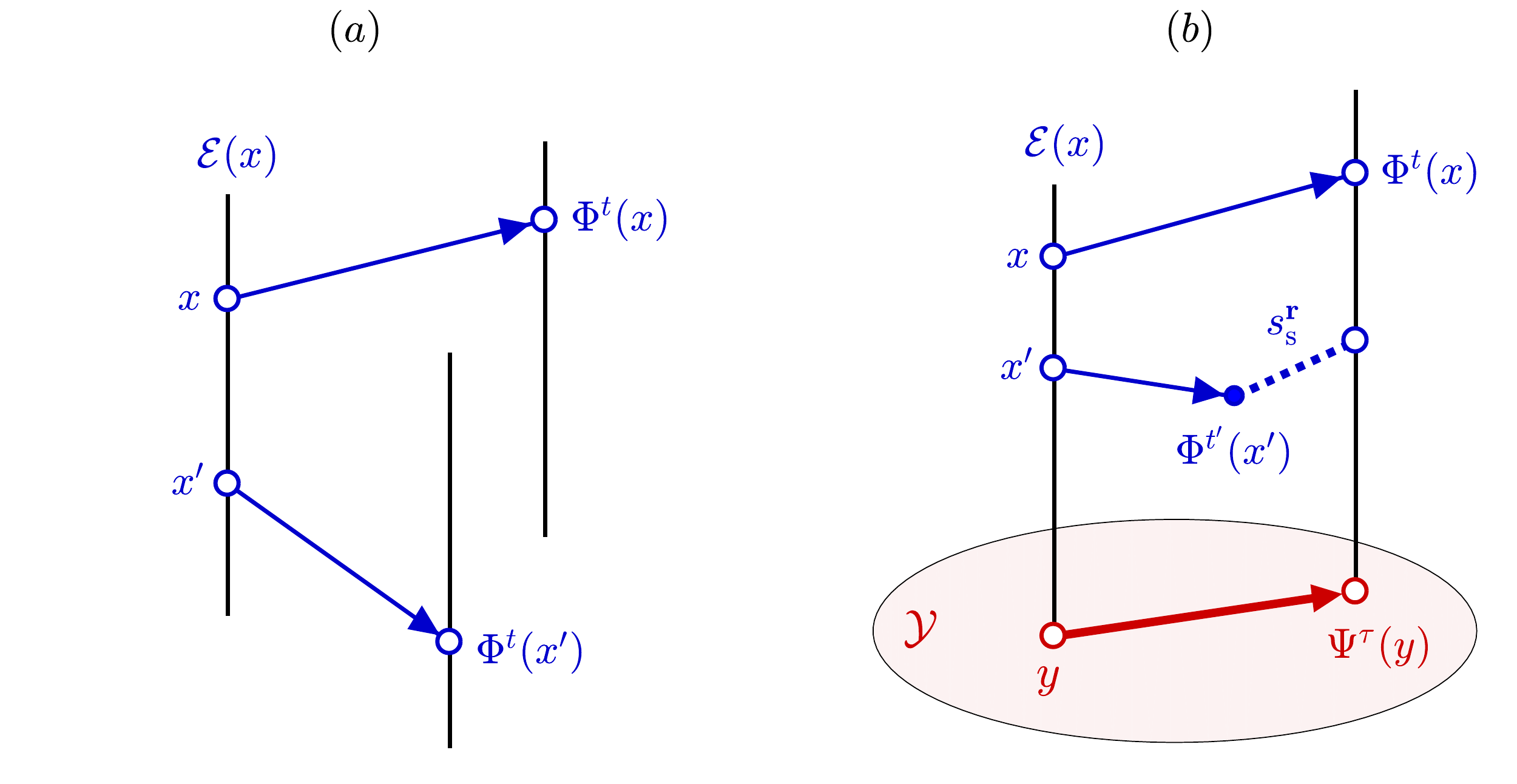}
\caption{Structure of configuration space $\mathcal{X}$ with a partition to equivalence classes (straight vertical lines) with respect to the symmetry group $\mathcal{H}$. (a) Due to noncommutativity with the flow, the equivalence relation $x \sim x'$ is not invariant: the states $\Phi^t(x)$ and $\Phi^t(x')$ are generally not equivalent. (b) The equivalence can be ``repaired'' by choosing a specific time $t'$ and  an extra spatial translation $s^{\mathbf{r}}_{\mathrm{s}}$, fitting the initially equivalent states $x \sim x'$ into the equivalence class of $\Phi^t(x)$ at a later time. Such construction can be introduced globally by synchronizing the flow with respect to a representative set $\mathcal{Y}$, which contains a single state from every equivalence class. This construction induces the dynamics in $\mathcal{Y}$ governed by a new normalized flow $\Psi^\tau$.}
\label{fig1}
\end{figure}

In this paper, we develop such a quotient-like construction globally in the configuration space $\mathcal{X}$ by introducing a representative set $\mathcal{Y} \subset \mathcal{X}$, which contains a single element $y \in \mathcal{Y}$ within each equivalence class $\mathcal{E}(x)$; see Fig.~\ref{fig1}(b). As a result, we reduce the original dynamical system in $\mathcal{X}$ to the dynamical system in $\mathcal{Y}$, which we call the \textit{normalized} system. We prove the following properties of this construction:
\begin{itemize}
\item 
There is a normalized flow $\Psi^\tau: \mathcal{Y} \mapsto \mathcal{Y}$ on the representative set, which is induced by $\Phi^t$ and the equivalence relation (\ref{eq_I_2}); see Fig.~\ref{fig1}(b).
\item 
The normalized flow $\Psi^\tau$ has the invariant measure $\nu$, which is explicitly related to the original invariant measure $\mu$. 
\item 
The group (\ref{eq_I_1b}) defines statistical symmetries in the normalized system. We introduce a transformation $\nu \mapsto g_\star\nu$ for any $g \in \mathcal{G}$, akin to the push-forward. This transformation preserves the group structure and the invariance of a measure with respect to $\Psi^\tau$.
\item
For any given $h \in \mathcal{H}$ and $g \in \mathcal{G}$, the symmetry of $\mu$ implies the symmetry of $\nu$ in the form
	\begin{equation}
	(g \circ h)_\sharp \mu = \mu \quad \Rightarrow \quad g_\star \nu = \nu.
	\label{eq_I_3add}
	\end{equation}
{\color{black}The converse is not true in general.}
\item
The property of statistical symmetry, $g_\star \nu = \nu$ for a given element $g \in \mathcal{G}$, does not depend on a choice of the representative set $\mathcal{Y}$.
\end{itemize}

Notice that the transformation from original to normalized system is time-dependent. In general, such transformations do not preserve statistical properties, e.g. the measure invariance. In fact, the listed properties follow in a nontrivial way from the specific commutation relations of Tab.~\ref{tab1}.

\subsection{Hidden symmetries, multifractality, intermittency and sweeping effects}
\label{subsec_hid}

The main motivation of the developed construction is related to the interplay between statistical symmetries in the original and normalized systems. For understanding a general idea, {\color{black}let us consider $g = s^b_{\mathrm{ss}}$ with $b = 2$ corresponding to the change of spatial scale by a factor of two and $h^a = s^a_{\mathrm{ts}}$ determining the temporal scaling with a particular factor $a > 0$. 
Using relations (\ref{eq2}) and (\ref{eq3}) for velocity fields, we see that the combined symmetry $g \circ h^a$ is associated with the spatiotemporal scaling transformation of the form
	\begin{equation}
	\mathbf{u}(\mathbf{r},t) \ \mapsto \ 2^{1-\alpha}\,
	\mathbf{u}\left(\frac{\mathbf{r}}{2},\frac{t}{2^\alpha}\right),
	\label{eq_I_5}
	\end{equation}
where $\alpha = \log_2 a$. According to (\ref{eq_I_3add}), every space-time symmetry $(g \circ h)_\sharp \mu = \mu$ implies $g_\star \nu = \nu$, but not vice versa. In particular, we can have situations when 
	\begin{equation}
	(g \circ h^a)_\sharp \mu \ne \mu,\quad 
	g_\star \nu = \nu,
	\label{eq_I_8}
	\end{equation}
where the first condition refers to any $a > 0$; see \cite{mailybaev2021solvable} for a rigorous example. 
This means that the normalized measure $\nu$ remains symmetric, while all symmetries of the original measure are broken. This is what we call the \textit{hidden symmetry}: a statistical symmetry is restored only in the normalized system.

Our central application is the demonstration that the hidden symmetry provides a rigorous foundation for the multifractal theory in turbulence~\cite{frisch1985singularity,sreenivasan1991fractals,frisch1999turbulence}. This phenomenological theory models an intermittent turbulent state as a sum of statistical behaviours (singularities) featuring different scaling laws (\ref{eq_I_5}) and supported in subspaces of different fractal dimensions. The intermittency is quantified using structure functions of different orders $p$ defined as the mean value $S_p(\ell) = \langle \|\delta_\ell \mathbf{u}\|^p \rangle$ for a difference of fluid velocities $\delta_\ell \mathbf{u} = \mathbf{u}(\mathbf{r}')-\mathbf{u}(\mathbf{r})$ at a distance $\ell  = \|\mathbf{r}'-\mathbf{r}\| > 0$. The multifractal statistics yields the asymptotic power law
	\begin{equation}
	S_p(\ell) \propto \ell^{\zeta_p}	
	\label{eq_I_9}
	\end{equation}
at small $\ell$ with the exponent $\zeta_p$ depending nonlinearly on $p$.}
In this work, we derive asymptotic power laws (\ref{eq_I_9}) from the assumption of hidden scaling symmetry (\ref{eq_I_8}). This derivation provides formulas for the exponent $\zeta_p$ in terms of Perron--Frobenius eigenvalues of operators constructed for the symmetric normalized measure $\nu$. We show that the resulting exponents $\zeta_p$ can be anomalous, i.e., depending nonlinearly on $p$. This leads us to the conjecture that the developed turbulent state in the inertial interval (where the dynamics is governed by the Euler system) possesses a hidden scaling symmetry (\ref{eq_I_8}). {\color{black}In fact, the formalism developed here was used in the subsequent works for verifying the hidden self-similarity and its implications in shell models of turbulence~\cite{mailybaev2020hidden,mailybaev2021solvable,mailybaev2022shell} and in the Navier--Stokes system~\cite{mailybaev2022hidden}.}

Finally, we mention the role of Galilean transformations in the equivalence relation of our quotient construction. Galilean transformations yield a normalized system in the form analogous to the Quasi--Lagrangian representation in fluid dynamics, i.e., describing the system in a reference frame moving with a Lagrangian (fluid) particle {\color{black}\cite{belinicher1987scale,l1991scale}}. As a consequence, the quotient construction removes the so-called sweeping effect caused by a large-scale motion, the well-known obstacle for describing statistical properties at small scales~\cite{frisch1999turbulence}. This makes the developed theory applicable to  real turbulence problems. Remarkably, our quotient construction imposes extra algebraic conditions, one of which corresponds to incompressibility in fluid dynamics.

\subsection{Structure of the paper} 

In Section~\ref{sec4}, we consider a simpler quotient construction by excluding Galilean transformations, i.e., the equivalence relation is considered only with respect to temporal scalings. We introduce the representative set $\mathcal{Y}$, the normalized flow $\Psi^\tau$, the invariant normalized measure $\nu$ and the group action $g_\star$, and investigate their basic properties. In Section~\ref{sec_shell1}, this procedure is carried out explicitly for a shell model of turbulence, providing the evidence of hidden scaling symmetry.

Section~\ref{sec_int} presents our central application. It shows that the hidden scaling symmetry implies asymptotic scaling laws for structure functions. The scaling exponents are obtained in terms of  Perron--Frobenius eigenvalues by exploiting the symmetry of the normalized measure $\nu$. {\color{black}These results are confirmed analytically and numerically for anomalous exponents of intermittent statistics in shell models~\cite{mailybaev2021solvable,mailybaev2022shell}.}

Section~\ref{sec_Gal} develops a quotient construction for the equivalence relation with respect to Galilean transformations. We show that this construction is possible assuming additional properties of the measure $\mu$. Remarkably, these properties have the physical meaning of spatial homogeneity and incompressibility, and the resulting normalized system is analogous to the Quasi--Lagrangian description in fluid dynamics. In Section~\ref{sec_fus}, we develop the final quotient construction, in which the equivalence takes into account both Galilean transformations and temporal scalings. We show how this construction can be applied to the study of turbulence in the Euler and Navier--Stokes systems. The Conclusion section contains a short summary. 

\section{Quotient construction with temporal scalings}
\label{sec4}

Let us consider a {\color{black}probability measure space $(\mathcal{X},\Sigma,\mu)$. Because of applications we have in mind, the space is assumed to be infinite-dimensional. By definition~\cite{cornfeld2012ergodic}, the flow operator $\Phi^t:\mathcal{X} \mapsto \mathcal{X}$} is a one-parameter group of one-to-one measurable maps such that $\Phi^{t_1} \circ \Phi^{t_2} = \Phi^{t_1+t_2}$ for all times. The flow  {\color{black}must also be} measurable as a function of $(x,t) \in \mathcal{X} \times \mathbb{R}$.
We will use the following notions.

{\color{black}
\begin{definition}\label{del_measure}
Here we introduce three interconnected concepts: the invariant measure, the symmetry map and the symmetric measure:
\begin{itemize}
\item A probability measure $\mu$ is said to be invariant for the flow $\Phi^t$ if the push-forward (image) $\Phi^t_\sharp \mu = \mu$ for all times. 
\item We call a one-to-one measurable map $s: \mathcal{X} \mapsto \mathcal{X}$ symmetry, if the invariance of any measure $\mu$ implies the invariance of $s_\sharp \mu$.  A set of symmetries $s \in \mathcal{S}$ with a group operation given by composition $s_1 \circ s_2$ is called a symmetry group. 
\item Let $s$ be a symmetry. A given measure $\mu$ is said to be symmetric with respect to $s$ if $s_\sharp\mu = \mu$. In the opposite situation, $s_\sharp\mu \ne \mu$, we say that the symmetry is broken.
\end{itemize}
\end{definition}
}

We emphasize that symmetries in this definition are understood in the \textit{statistical} sense: they are defined through their action on invariant probability measures. This, in particular, implies that symmetries do not necessarily commute with the flow $\Phi^t$. 

We will always assume that the measure $\mu$ is invariant. In this section, we consider a symmetry group given by a direct sum 
	\begin{equation}
	\mathcal{S} = \mathcal{H}_{\mathrm{ts}}+\mathcal{G}. 
	\label{eq2_0N}
	\end{equation}
Here $ \mathcal{H}_{\mathrm{ts}}$ is a one-parameter group of temporal scalings
	\begin{equation}
	\mathcal{H}_{\mathrm{ts}} = \big\{s^{a}_{\mathrm{ts}}:\ a > 0\big\}.
	\label{eq2_1}
	\end{equation}
We will adopt the shorter notation $h^a = s^{a}_{\mathrm{ts}}$. {\color{black}The group $\mathcal{G}$ can be taken in the form (\ref{eq_I_1b}), containing compositions of spatial rotations and scalings. Then,} elements $h^a \in \mathcal{H}_{\mathrm{ts}}$ and $g \in \mathcal{G}$ are one-to-one measurable maps in $\mathcal{X}$ satisfying the commutation relations {\color{black}(see Tab.~\ref{tab1})}
	\begin{eqnarray}
	\label{eq2_0ha}
	& h^{a_1} \circ h^{a_2} = h^{a_1a_2}, 
	\\
	& \Phi^t \circ g = g \circ \Phi^t,\quad
	g \circ h^a = h^a \circ g,
	\label{eq2_2com}
	\\
	& \Phi^t \circ h^a = h^a \circ \Phi^{t/a}.
	\label{eq2_2}
	\end{eqnarray}
Using these relations, it is straightforward to check that any element $s \in \mathcal{S}$ is a symmetry in the sense of Definition~\ref{del_measure}. Relations (\ref{eq2_0ha})--(\ref{eq2_2}) are all we need to know about the symmetry group {\color{black}for further derivations.} Notice that Galilean transformations will not be considered until Section~\ref{sec_Gal}.

\subsection{Normalized flow and invariant measure}\label{subsec_red}

Let us consider the equivalence relation with respect to temporal scalings $\mathcal{H}_{\mathrm{ts}}$ as
	\begin{equation}
	x \sim x' \quad \textrm{if} \quad x' = h^a(x),\ a > 0.
	\label{eq2_Eq1}
	\end{equation}
For each $x \in \mathcal{X}$, this relation defines the equivalence class
	\begin{equation}
	\mathcal{E}_{\mathrm{ts}}(x) = \{x' \in \mathcal{X}: x' \sim x \}.
	\label{eq2_Eq2}
	\end{equation}
Because of commutation relation (\ref{eq2_2}), for the equivalent states (\ref{eq2_Eq1}) we have
	\begin{equation}
	\Phi^{at}(x') = h^a \circ \Phi^t(x).
	\label{eq2_Eq3}
	\end{equation}
Hence, the equivalence relation is not preserved by the flow: the states $\Phi^t(x)$ and $\Phi^t(x')$ are generally not equivalent at the same time $t > 0$. However, the equivalence can be restored by considering a different time $t' = at$ for the state $x'$, which yields $\Phi^{t'}(x') \sim \Phi^t(x)$; see Fig.~\ref{fig2}. Such time synchronization requires a choice of a representative element $x$ in the equivalence class, and can be introduced globally using a representative set consisting of these elements. 

\begin{figure}[t]
\centering
\includegraphics[width=0.43\textwidth]{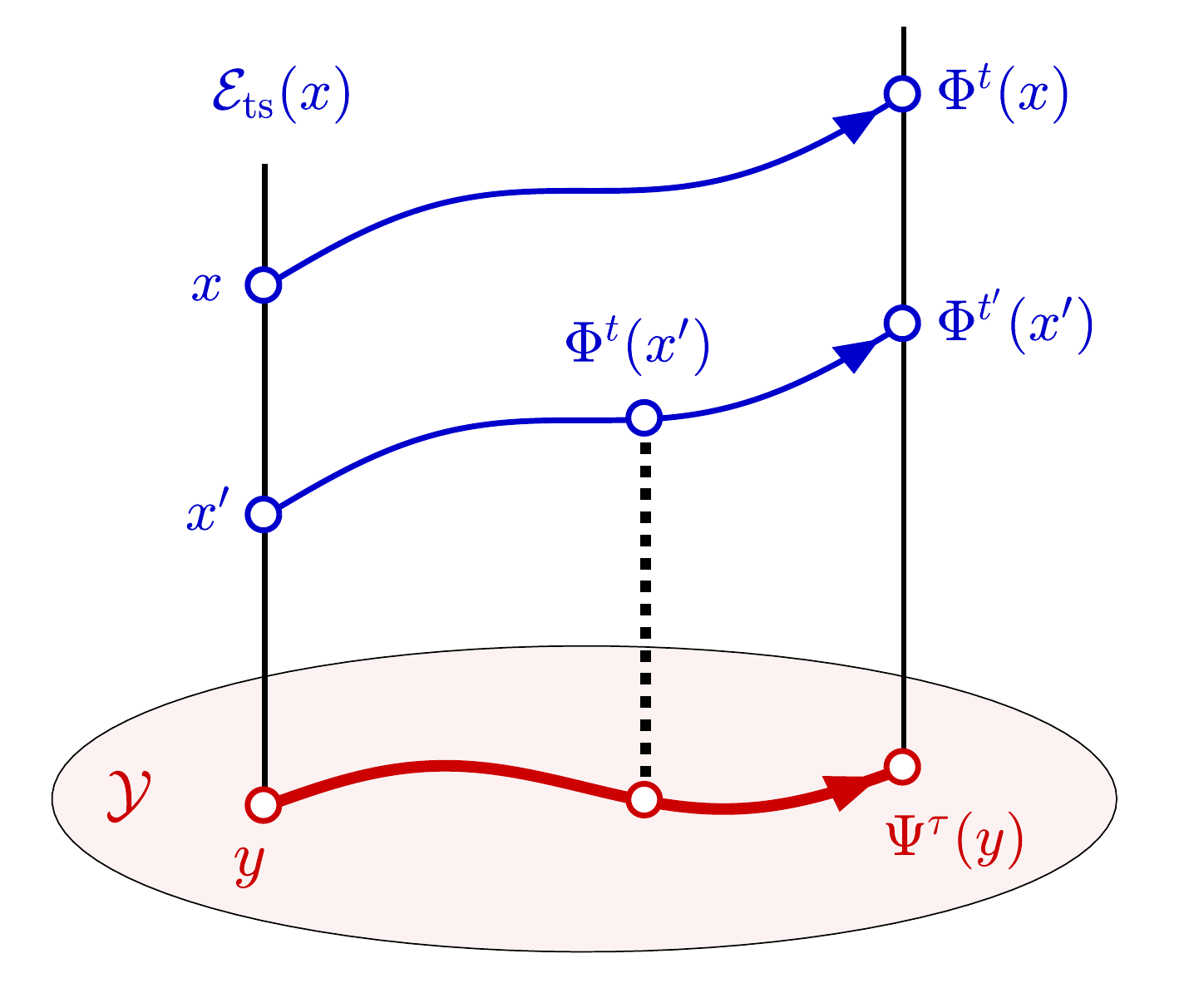}
\caption{Structure of configuration space $\mathcal{X}$ with a partition to equivalence classes (straight vertical lines) $\mathcal{E}_{\mathrm{ts}}(x)$. The equivalence relation $x \sim x'$ is not invariant: the states $\Phi^t(x)$ and $\Phi^t(x')$ are generally not equivalent as shown by the dotted line. The equivalence can be restored by choosing a different time $t' = at$ for $x'$. Such construction is introduced globally by synchronizing the flow with respect to a representative set $\mathcal{Y}$, which contains a single state from every equivalence class. This yields a normalized flow $\Psi^\tau$ in $\mathcal{Y}$.}
\label{fig2} 
\end{figure}

\begin{definition}
\label{def1}
We call $\mathcal{Y} \subset \mathcal{X}$ a representative set (with respect to the group $\mathcal{H}_{\mathrm{ts}}$), if the following properties are satisfied. For any $x \in \mathcal{X}$, there exists a unique value $a = A(x) > 0$ such that $h^a(x) \in \mathcal{Y}$. The function $A: \mathcal{X} \mapsto \mathbb{R}_+$ is measurable with $\int A \,d\mu < \infty$. 
\end{definition}

Thus, a representative set $\mathcal{Y}$ contains a single state within every equivalence class. From   Definition~\ref{def1} and relation (\ref{eq2_0ha}) it follows that the function $A(x)$ has the property
	\begin{equation}
	A \circ h^a(x) = \frac{A(x)}{a}, \quad A(y) = 1
	\label{eq2_2Ap}
	\end{equation}
for any $h^a \in \mathcal{H}_{\mathrm{ts}}$, $x \in \mathcal{X}$ and $y \in \mathcal{Y}$. We introduce a measurable projector $P: \mathcal{X} \mapsto \mathcal{Y}$ as 
	\begin{equation}
	P(x) = h^{A(x)}(x). 
	\label{eq2_2P}
	\end{equation}
We will need the following known property of invariant measures under a change of time. 

\begin{proposition}[\cite{cornfeld2012ergodic}]
\label{prop_time}
For a positive measurable function $A(x)$, one can introduce a new flow $\Phi^\tau_A$ with a new time $\tau \in \mathbb{R}$ defined by the relations
	\begin{equation}
	\Phi^\tau_A(x) = \Phi^t(x),\quad
	\tau = \int_0^t A\circ \Phi^s (x) ds.
	\label{eq2_A1}
	\end{equation}
The flow $\Phi^\tau_A$ has the invariant measure $\mu_A$, which is absolutely continuous with respect to $\mu$ as 
	\begin{equation}
	\frac{d\mu_A}{d\mu} = \frac{A(x)}{\int Ad\mu}. 
	\label{eq2_A1ac}
	\end{equation}
\end{proposition}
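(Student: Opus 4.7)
The plan is to proceed in three steps: (i) show that $\Phi^\tau_A$ is a well-defined one-parameter group of measurable maps; (ii) check that $\mu_A$ is a probability measure; (iii) establish the invariance $(\Phi^\sigma_A)_\sharp \mu_A = \mu_A$.

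For step (i), since $A > 0$ everywhere, the map $t \mapsto \tau(t, x) := \int_0^t A(\Phi^s x)\, ds$ is strictly increasing and continuous for each fixed $x$, so it admits a measurable inverse $t(\tau, x)$, and I would set $\Phi^\tau_A(x) := \Phi^{t(\tau, x)}(x)$ (extending to $\tau < 0$ by the same formula using the two-sided flow). The group property $\Phi^{\tau_1+\tau_2}_A = \Phi^{\tau_1}_A \circ \Phi^{\tau_2}_A$ would then reduce to the cocycle identity $\tau(t_1 + t_2, x) = \tau(t_1, x) + \tau(t_2, \Phi^{t_1} x)$, which follows from the flow property of $\Phi^t$ after a substitution $s \mapsto s + t_1$ in the defining integral. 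Step (ii) is immediate: $\mu_A$ has total mass $1$ because $\int A\, d\mu < \infty$ by Definition~\ref{def1}.

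The core step is (iii). My plan is to show that $I(\sigma) := \int F \circ \Phi^\sigma_A\, d\mu_A$ is independent of $\sigma$ for every bounded measurable $F$. By the group property it suffices to check $dI/d\sigma = 0$ at $\sigma = 0$. Differentiating the defining identity $\int_0^{t(\sigma, x)} A(\Phi^s x)\, ds = \sigma$ at $\sigma = 0$ gives $\partial_\sigma t\,\bigl|_{\sigma=0} = 1/A(x)$; the chain rule then yields
\begin{equation*}
\frac{dI}{d\sigma}\bigg|_{\sigma = 0} = \frac{1}{\int A\, d\mu} \int LF(x)\cdot \frac{1}{A(x)}\cdot A(x)\, d\mu(x) = \frac{1}{\int A\, d\mu}\int LF\, d\mu = 0,
\end{equation*}
where $LF(x) := \partial_t F(\Phi^t x)\bigl|_{t=0}$ and the final equality is the infinitesimal form of the invariance of $\mu$ under $\Phi^t$. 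Note how the factor $1/A$ produced by the time change cancels the factor $A$ in the density of $\mu_A$, so that the result depends only on the already-assumed invariance of $\mu$.

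The main obstacle is justifying this infinitesimal computation in the infinite-dimensional measurable (rather than smooth) setting assumed here, where the generator $L$ has no a priori global meaning. The robust alternative is to work with the integrated identity $\int_0^{\tau(T, x)} F(\Phi^\tau_A x)\, d\tau = \int_0^T F(\Phi^t x)\, A(\Phi^t x)\, dt$, obtained directly by the substitution $\tau = \tau(t, x)$ with $d\tau = A(\Phi^t x)\,dt$, then integrate against $\mu$, apply Fubini, and use the invariance of $\mu$ under $\Phi^t$ on the right-hand side; this translates the invariance of $\mu_A$ under $\Phi^\sigma_A$ into a Birkhoff-type averaging identity that can be localized by standard density and approximation arguments. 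Threaded through the whole argument is the need to verify joint measurability of $t(\tau, x)$ and to keep the finiteness $\int A\, d\mu < \infty$ under control — this is where most of the technical care lies.
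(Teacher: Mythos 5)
The paper does not prove Proposition~\ref{prop_time} at all: it is imported verbatim from Cornfeld--Fomin--Sinai \cite{cornfeld2012ergodic} as a known fact about time changes of flows, so there is no internal argument to compare yours against. Judged on its own, your outline follows the standard textbook route, and steps (i) and (ii) are fine modulo one point you should make explicit: for $\Phi_A^\tau$ to be defined for \emph{all} $\tau\in\mathbb{R}$ you need $\tau(t,x)=\int_0^t A\circ\Phi^s(x)\,ds\to\pm\infty$ as $t\to\pm\infty$, which is not automatic from $A>0$ alone but does hold for $\mu$-a.e.\ $x$ (the set where the Birkhoff average $\bar A$ vanishes is invariant and carries zero $\mu$-mass since $A>0$ and $\int_{\{\bar A=0\}}A\,d\mu=\int_{\{\bar A=0\}}\bar A\,d\mu=0$). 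Also note that the hypothesis $\int A\,d\mu<\infty$ is part of what makes $\mu_A$ well defined for a general positive measurable $A$; it is not supplied by Definition~\ref{def1} except in the particular instance used later.

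The real issue is step (iii). The infinitesimal computation is only heuristic, as you concede. The ``robust alternative'' is the right idea, but the sentence ``this translates the invariance of $\mu_A$ \ldots into a Birkhoff-type averaging identity that can be localized by standard density and approximation arguments'' hides exactly the step that constitutes the proof. Integrating the substitution identity against $\mu$ and using invariance handles the right-hand side, but the left-hand side has the $x$-dependent upper limit $\tau(T,x)$, so Fubini gives you nothing clean; what you actually obtain, after dividing by $T$ and letting $T\to\infty$, is that the time average $F^*(x)=\lim_{\tau\to\infty}\tau^{-1}\int_0^\tau F\circ\Phi_A^\rho(x)\,d\rho$ exists a.e.\ and equals $\overline{FA}/\bar A$ (Birkhoff for $\Phi^t$ applied to $FA$ and to $A$). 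To get from there to $(\Phi_A^\sigma)_\sharp\mu_A=\mu_A$ you still need two observations that are not ``density arguments'': first, $\int F^*\,d\mu_A=\int F\,d\mu_A$, which follows because $\overline{FA}/\bar A$ is $\Phi^t$-invariant and $\int g A\,d\mu=\int g\bar A\,d\mu$ for invariant $g$; second, $(F\circ\Phi_A^\sigma)^*=F^*$, so that $\int F\circ\Phi_A^\sigma\,d\mu_A=\int(F\circ\Phi_A^\sigma)^*\,d\mu_A=\int F^*\,d\mu_A=\int F\,d\mu_A$. With those two lines supplied the argument closes; without them the proposal is a correct plan rather than a proof. (The reference \cite{cornfeld2012ergodic} instead realizes both flows as special flows over a common base with different roof functions, which yields the density $A/\int A\,d\mu$ directly; either route is acceptable.)
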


We adopt the subscript notation $\mu_A$ for transformation (\ref{eq2_A1ac}) from now on. In (\ref{eq2_A1}), the function $A(x)$ plays the role of a ``relative speed'' between the original and new times. By construction, $\mu_A$ is a probability measure. For consecutive changes of time with relative speeds $A_1(x)$ and $A_2(x)$, one can verify the relations
	\begin{equation}
	(\mu_{A_1})_{A_2} = \mu_{A}, \quad A(x) = A_1(x)A_2(x).
	\label{eq2_A1r}
	\end{equation}

We now \textit{normalize} the system by reducing the dynamics to the representative set $\mathcal{Y}$. This is the central part of our construction, which yields a normalized flow $\Psi^\tau$ and a corresponding normalized measure $\nu$ on $\mathcal{Y}$ by synchronizing the original time $t$ in $\mathcal{X}$ with the time $\tau$ in $\mathcal{Y}$; see Fig.~\ref{fig2}.  

\begin{theorem}
\label{theorem0}
The map 
	\begin{equation}
	\Psi^\tau(y) = P\circ \Phi_A^\tau(y)
	\label{eq2_AmF}
	\end{equation}
with $y \in \mathcal{Y}$ defines a flow in the representative set. It has the invariant probability measure 
	\begin{equation}
	\nu = P_\sharp \mu_A.
	\label{eq2_Am}
	\end{equation}
\end{theorem}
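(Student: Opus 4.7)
My plan is to reduce Theorem~\ref{theorem0} to two intertwining identities that follow from the group axioms (\ref{eq2_0ha})--(\ref{eq2_2}): first, that $P \circ h^a = P$ on all of $\mathcal{X}$, expressing the fact that $P$ depends only on the equivalence class; and second, that the rescaled flow commutes with temporal scalings, $\Phi_A^\tau \circ h^a = h^a \circ \Phi_A^\tau$. Together these two give the master identity $\Psi^\tau \circ P = P \circ \Phi_A^\tau$, from which both the flow property of $\Psi^\tau$ and the invariance of $\nu$ will follow by a short pushforward manipulation using Proposition~\ref{prop_time}.

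The first identity is immediate from the cocycle $A \circ h^a = A/a$ in (\ref{eq2_2Ap}) combined with $h^b \circ h^a = h^{ba}$: for every $x$,
\[
P(h^a(x)) = h^{A(h^a(x))}(h^a(x)) = h^{A(x)/a} \circ h^a(x) = h^{A(x)}(x) = P(x).
\]

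The second identity is where I expect the real work, because two different reparametrisations have to be tracked simultaneously. Starting from the definition (\ref{eq2_A1}) at the point $h^a(x)$, I would substitute the flow-scaling commutation $\Phi^s \circ h^a = h^a \circ \Phi^{s/a}$ from (\ref{eq2_2}), apply $A \circ h^a = A/a$ inside the integrand, and change variables $u = s/a$ to obtain
\[
\int_0^t A \circ \Phi^s(h^a(x))\,ds
= \int_0^{t/a} A \circ \Phi^u(x)\,du.
\]
This says that the proper time $\tau$ needed to reach $\Phi^t(h^a(x)) = h^a(\Phi^{t/a}(x))$ from $h^a(x)$ equals the proper time needed to reach $\Phi^{t/a}(x)$ from $x$, hence $\Phi_A^\tau(h^a(x)) = h^a(\Phi_A^\tau(x))$. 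This is the main obstacle in the proof, since it is the only place where the non-trivial commutation relation (\ref{eq2_2}) is really exploited and the time rescaling inside the cocycle must match the scaling of $A$ exactly.

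Assembly is then direct. For any $x$, using both identities in succession,
\[
\Psi^\tau \circ P(x) = P \circ \Phi_A^\tau \circ h^{A(x)}(x) = P \circ h^{A(x)} \circ \Phi_A^\tau(x) = P \circ \Phi_A^\tau(x),
\]
so $\Psi^\tau \circ P = P \circ \Phi_A^\tau$. Iterating gives $\Psi^{\tau_1} \circ \Psi^{\tau_2}(y) = P \circ \Phi_A^{\tau_1} \circ \Phi_A^{\tau_2}(y) = \Psi^{\tau_1+\tau_2}(y)$, while $\Psi^0(y) = P(y) = h^{A(y)}(y) = y$ because $A(y)=1$ on $\mathcal{Y}$, establishing the flow property. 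For invariance, Proposition~\ref{prop_time} gives $(\Phi_A^\tau)_\sharp \mu_A = \mu_A$, so
\[
\Psi^\tau_\sharp \nu = (\Psi^\tau \circ P)_\sharp \mu_A = (P \circ \Phi_A^\tau)_\sharp \mu_A = P_\sharp \mu_A = \nu,
\]
and $\nu$ is a probability measure since $\mu_A$ is one and $P$ is measurable.
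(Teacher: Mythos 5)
Your proposal is correct and follows essentially the same route as the paper: the commutation $\Phi_A^\tau \circ h^a = h^a \circ \Phi_A^\tau$ you establish via the change of variables $u = s/a$ is exactly the paper's Lemma~\ref{lemma1}, and your master identity $\Psi^\tau \circ P = P \circ \Phi_A^\tau$ is the paper's equation (\ref{eq2_D2}), from which both the semigroup property and the invariance $\Psi^\tau_\sharp \nu = \nu$ are deduced by the same pushforward manipulation using Proposition~\ref{prop_time}. No gaps.
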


For all proofs, see Subsection~\ref{sec_2proofs}. Notice that the invariance of measure (\ref{eq2_Am}) is not a trivial fact, because it depends on the measure $\mu_A$ on the full space $\mathcal{X}$ while the flow (\ref{eq2_AmF}) is determined by $\Phi_A^\tau$ restricted to $\mathcal{Y}$. 
The important property of $\nu$ is that it is not affected by temporal scalings:

\begin{proposition} \label{prop_h}
All invariant measures $\widetilde{\mu} = h^a_\sharp \mu$ with $a > 0$ yield the same normalized measure $\nu = P_\sharp \widetilde{\mu}_A$ by Theorem~\ref{theorem0}.
\end{proposition}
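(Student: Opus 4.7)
The plan is to show that the map $\mu \mapsto h^a_\sharp\mu$ commutes with the normalization procedure in a strong sense: the rescaled measure $\widetilde{\mu}_A$ differs from $\mu_A$ only by the push-forward under $h^a$, and the projector $P$ collapses precisely this difference. The central observation, which I would establish first, is the identity $P \circ h^a = P$. This follows directly from (\ref{eq2_2Ap}) and (\ref{eq2_0ha}): since $A \circ h^a(x) = A(x)/a$, one gets
\[
P(h^a(x)) \;=\; h^{A(h^a(x))}(h^a(x)) \;=\; h^{A(x)/a} \circ h^a(x) \;=\; h^{A(x)}(x) \;=\; P(x).
\]

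Next, I would prove the intermediate equality $\widetilde{\mu}_A = h^a_\sharp \mu_A$. This is where the main computation lies. A change of variables gives $\int A\, d\widetilde{\mu} = \int A \circ h^a\, d\mu = a^{-1}\int A\, d\mu$, so Proposition~\ref{prop_time} applied to $\widetilde{\mu}$ yields $d\widetilde{\mu}_A/d\widetilde{\mu} = aA/\!\int\! A\, d\mu$. On the other hand, since $h^a$ is a measurable bijection with inverse $h^{1/a}$, the Radon--Nikodym derivative of $h^a_\sharp\mu_A$ with respect to $h^a_\sharp\mu = \widetilde{\mu}$ equals $(d\mu_A/d\mu) \circ h^{1/a} = (A \circ h^{1/a})/\!\int\! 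A\, d\mu = aA/\!\int\! A\, d\mu$. The two densities coincide, hence $\widetilde{\mu}_A = h^a_\sharp \mu_A$.

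Combining the two steps completes the proof:
\[
P_\sharp \widetilde{\mu}_A \;=\; P_\sharp\, h^a_\sharp \mu_A \;=\; (P \circ h^a)_\sharp \mu_A \;=\; P_\sharp \mu_A \;=\; \nu.
\]

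The main potential obstacle is the density manipulation in the second step. Mixing a push-forward by $h^a$ with an absolute-continuity relation requires composing the density with the inverse map $h^{1/a}$, and the factors of $a$ arising from the normalization constant $\int A\, d\widetilde{\mu}$ must cancel exactly against those from $A \circ h^{1/a}$. This cancellation is not coincidental: it is forced by the homogeneity relation $A \circ h^a = A/a$ inherited from (\ref{eq2_2Ap}), which is the only nontrivial structural input beyond routine measure theory. Once this bookkeeping is in place, the identity $P \circ h^a = P$ finishes the argument mechanically.
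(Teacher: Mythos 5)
Your proof is correct and follows essentially the same route as the paper's: both reduce the claim to the projector identity $P\circ h^a = P$ together with the intermediate measure identity $\widetilde{\mu}_A = (h^a_\sharp\mu)_A = h^a_\sharp\mu_A$, and then conclude by $P_\sharp h^a_\sharp\mu_A = (P\circ h^a)_\sharp\mu_A = P_\sharp\mu_A$. The only difference is cosmetic: you verify the intermediate identity by matching Radon--Nikodym densities through the inverse map $h^{1/a}$, whereas the paper obtains it as the special case $f=h^a$, $B=A$ of Lemma~\ref{lemmaMs} (proved by a test-function change of variables); both hinge on the same homogeneity $A\circ h^a = A/a$ and on the invariance of the time-changed measure under rescaling $A$ by a constant.
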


In applications, one often explores statistical properties of a system using test functions (also called observables), which are averaged with respect to time for particular solutions or with respect to statistical ensembles. Let us consider measurable functions $\varphi: \mathcal{X} \mapsto \mathbb{R}$ for the original system and $\psi: \mathcal{Y} \mapsto \mathbb{R}$ for the normalized system. We introduce their temporal and ensemble averages as
	\begin{eqnarray}
	&&\displaystyle
	\langle \varphi \rangle_t(x) = \lim_{t \to \infty} \frac{1}{t}\int_0^t \varphi \circ \Phi^s(x) \,ds,\quad
	\langle \varphi \rangle_\mu = \int \varphi\, d\mu,
	\label{eq2_Avr1}
	\\[7pt]
	&&\displaystyle
	\langle \psi \rangle_\tau(y) = \lim_{\tau \to \infty} 
	\frac{1}{\tau}\int_0^\tau \psi \circ \Psi^\sigma(y) \,d\sigma,\quad
	\langle \psi \rangle_\nu = \int \psi\, d\nu,
	\label{eq2_Avr2}
	\end{eqnarray}
where the limits are assumed to exist; in general, the temporal averages depend on the initial state $x$ or $y$.

\begin{proposition} \label{prop_erg}
For averages (\ref{eq2_Avr1}) and (\ref{eq2_Avr2}) the following relations hold
	\begin{equation}
	\langle \psi \rangle_\tau(y) = \frac{\langle \varphi \rangle_t(x)}{\langle A \rangle_t(x)},
	\quad
	\langle \psi \rangle_\nu = \frac{\langle \varphi \rangle_\mu}{\langle A \rangle_\mu},
	\label{eq2_Avr3}
	\end{equation}
where $y = P(x)$, $\varphi(x) = \psi \circ P(x) A(x)$, and averages of $A(x)$ are assumed to be finite and nonzero.
\end{proposition}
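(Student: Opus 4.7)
The plan is to treat the ensemble average and the temporal average separately, since the two parts use different mechanisms even though they look formally parallel.

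For the ensemble identity I would simply unwind the definitions. By definition $\nu = P_\sharp \mu_A$, so $\langle \psi\rangle_\nu = \int \psi\, d\nu = \int \psi\circ P\, d\mu_A$. Using the Radon--Nikodym derivative from Proposition~\ref{prop_time}, this equals $\int (\psi\circ P)\,A\, d\mu / \int A\, d\mu$. Since $\varphi = (\psi\circ P)\cdot A$ by assumption, this is exactly $\langle \varphi\rangle_\mu/\langle A\rangle_\mu$. No obstacle here.

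For the temporal identity the main idea is a change of variables between the $\tau$-time of $\Psi^\tau$ and the $t$-time of $\Phi^t$. Starting from $y\in\mathcal{Y}$ and writing $\Phi^\tau_A(y)=\Phi^{t}(y)$ with $\tau=\int_0^t A\circ\Phi^s(y)\,ds$, the substitution $d\tau = A\circ\Phi^t(y)\,dt$ turns
\begin{equation}
\int_0^\tau \psi\circ \Psi^\sigma(y)\,d\sigma = \int_0^\tau (\psi\circ P)\circ \Phi_A^\sigma(y)\,d\sigma
\end{equation}
into $\int_0^{t}(\psi\circ P)\circ\Phi^s(y)\cdot A\circ\Phi^s(y)\,ds = \int_0^{t}\varphi\circ\Phi^s(y)\,ds$, while $\tau$ itself becomes $\int_0^t A\circ\Phi^s(y)\,ds$. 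Dividing and sending $\tau\to\infty$ (equivalently $t\to\infty$, assuming the relevant limits exist) gives $\langle\psi\rangle_\tau(y) = \langle\varphi\rangle_t(y)/\langle A\rangle_t(y)$.

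The remaining step is to replace $y$ by the original $x$ on the right-hand side. Since $y = P(x) = h^{a_0}(x)$ with $a_0 = A(x)$, the commutation relation (\ref{eq2_2}) yields $\Phi^{t'}(y) = h^{a_0}\circ \Phi^{t'/a_0}(x)$. Combined with the identities $A\circ h^{a_0} = A/a_0$ from (\ref{eq2_2Ap}) and $P\circ h^{a_0}=P$ (which follows immediately from (\ref{eq2_2P}) and (\ref{eq2_2Ap})), one gets $\varphi\circ h^{a_0} = \varphi/a_0$ and $A\circ h^{a_0}= A/a_0$, so that $\varphi\circ\Phi^{t'}(y) = \varphi\circ\Phi^{t'/a_0}(x)/a_0$ and similarly for $A$. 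After the trivial change of variable $s = t'/a_0$ in each time integral, the factors of $a_0$ cancel in the ratio, giving $\langle\varphi\rangle_t(y)/\langle A\rangle_t(y) = \langle\varphi\rangle_t(x)/\langle A\rangle_t(x)$ as claimed.

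The main obstacle, as far as there is one, is purely bookkeeping: making sure that the change of time $\sigma\leftrightarrow t$ is applied consistently and that the rescaling factor $a_0$ introduced when passing from $y$ back to $x$ cancels cleanly between numerator and denominator. Everything else is an algebraic consequence of (\ref{eq2_2Ap}), (\ref{eq2_2P}), the commutation relation (\ref{eq2_2}), and the Radon--Nikodym formula (\ref{eq2_A1ac}).
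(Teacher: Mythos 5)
Your proposal is correct and follows essentially the same route as the paper's own proof: the ensemble identity by unwinding $\nu = P_\sharp\mu_A$ with the Radon--Nikodym formula (\ref{eq2_A1ac}), and the temporal identity by the change of time $d\sigma = A\circ\Phi^s\,ds$ followed by the commutation relation (\ref{eq2_2}) with the homogeneity properties (\ref{eq2_2Ap}) and $P\circ h^a = P$ to transfer from $y$ to $x$. The only difference is cosmetic ordering --- the paper substitutes $y = h^{A(x)}(x)$ inside the integrals and rescales time before taking the limit, whereas you first prove the identity based at $y$ and then cancel the factor $a_0 = A(x)$ in the ratio --- but the mechanism is identical.
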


This Proposition shows that both temporal and ensemble averages of any observable $\psi(y)$ in the normalized system are related to respective averages of the observable $\varphi(x)$ in the original system. Hence, the normalized system inherits some of ergodic properties of the original flow: if temporal and ensemble averages are equal for $\varphi(x)$ and $A(x)$ in the original system, the same is true for $\psi(y)$ in the normalized system. Recall that, in the definition of SRB (physical) measures~\cite{eckmann1985ergodic}, such equality is assumed for almost all initial states and bounded continuous test functions.

\subsection{Symmetries of the normalized measure}
\label{subsec_SNM}

Here we are going to extend the symmetry group $\mathcal{G}$ to the normalized system. First, let us establish the action of symmetries on the normalized measure $\nu$.

\begin{theorem} \label{prop_g}
Consider invariant measures $\mu$ and $g_\sharp \mu$ of the flow $\Phi^t$ for some $g \in \mathcal{G}$. We denote by $\nu$ and $g_\star \nu$ the corresponding invariant measures of the flow $\Psi^\tau$ given by Theorem~\ref{theorem0}. Then,
	\begin{equation}
	g_\star \nu = (P \circ {g})_\sharp\nu_C, \quad
	C = A \circ g,
	\label{eq2_A7}
	\end{equation}
where $\nu_C$ is an absolutely continuous measure with respect to $\nu$ such that 	
	\begin{equation}
	\frac{d\nu_C}{d\nu} = \frac{C(y)}{\int C d\nu}. 
	\label{eq2_A7b}
	\end{equation}
\end{theorem}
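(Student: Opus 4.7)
The plan is to test the claimed identity against an arbitrary measurable $\psi:\mathcal{Y}\to\mathbb{R}$ and reduce each side to a single integral against $\mu$. Because $g$ commutes with $\Phi^t$ by (\ref{eq2_2com}), the push-forward $g_\sharp\mu$ is again $\Phi^t$-invariant, so Theorem~\ref{theorem0} applies to it and $g_\star\nu=P_\sharp(g_\sharp\mu)_A$. Unfolding the push-forward by $g$ together with the Radon--Nikodym derivative (\ref{eq2_A1ac}) yields
\[
\int\psi\,dg_\star\nu \;=\; \frac{1}{\int C\,d\mu}\int (\psi\circ P\circ g)\,C\,d\mu,
\]
with $C=A\circ g$, while the right-hand side of (\ref{eq2_A7})--(\ref{eq2_A7b}) similarly gives
\[
\int\psi\,d(P\circ g)_\sharp\nu_C \;=\; \frac{1}{\int C\,d\nu}\int (\psi\circ P\circ g)\,C\,d\nu,
\]
where in the second expression $C$ is understood as $C|_{\mathcal{Y}}$. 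The task reduces to showing that these coincide for every $\psi$.

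The bridge between them comes from two pointwise identities produced by the commutation $g\circ h^a=h^a\circ g$ in (\ref{eq2_2com}) together with the scaling property (\ref{eq2_2Ap}) of $A$. For any $x\in\mathcal{X}$ I set $y=P(x)$, so that $x=h^{1/A(x)}(y)$; commutativity then gives $g(x)=h^{1/A(x)}(g(y))$. Applying $A$ and using (\ref{eq2_2Ap}),
\[
C(x)=A(g(x))=A(g(y))\cdot A(x)=C(y)\cdot A(x),
\]
so $C$ decomposes as an equivalence-class invariant times $A$. Next, from the definition (\ref{eq2_2P}) of $P$ and the group law (\ref{eq2_0ha}),
\[
P(g(x))=h^{A(g(x))}(g(x))=h^{C(y)A(x)}\bigl(h^{1/A(x)}(g(y))\bigr)=h^{C(y)}(g(y))=P(g(y)),
\]
so $P\circ g=P\circ g\circ P$, i.e.\ $P\circ g$ factors through $P$. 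This is exactly the statement that $g$ preserves equivalence classes.

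Using these two identities, I substitute $C(x)=C(P(x))\cdot A(x)$ and $(\psi\circ P\circ g)(x)=(\psi\circ P\circ g)(P(x))$ inside the $\mu$-integral, then invoke $d\mu_A/d\mu=A/\!\int A\,d\mu$ and $\nu=P_\sharp\mu_A$ to push the integration down to $\mathcal{Y}$, obtaining
\[
\int(\psi\circ P\circ g)\,C\,d\mu \;=\; \int A\,d\mu\cdot\int (\psi\circ P\circ g)\,C\,d\nu.
\]
Specializing to $\psi\equiv 1$ yields the auxiliary relation $\int C\,d\mu=\int A\,d\mu\cdot\int C\,d\nu$, after which dividing by $\int C\,d\mu$ produces exactly the target right-hand expression, finishing the proof of (\ref{eq2_A7})--(\ref{eq2_A7b}). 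The main obstacle is exactly the identification of the factorization $P\circ g=P\circ g\circ P$ and the multiplicative decomposition of $C$; both hinge on $g$ commuting with the entire one-parameter group $\mathcal{H}_{\mathrm{ts}}$, so without (\ref{eq2_2com}) there would be no clean way to reduce the $\mu$-integral to a $\nu$-integral, and the theorem would fail.
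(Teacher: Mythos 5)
Your proposal is correct and follows essentially the same route as the paper: the two pointwise identities you isolate, $P\circ g\circ P = P\circ g$ and $C\circ P\cdot A = C$, are exactly the paper's (\ref{eq2_B2x}) and (\ref{eq2_H6}), and the remaining bookkeeping differs only in that you unfold everything into integrals against a test function $\psi$ where the paper packages the same manipulations through the push-forward/change-of-time algebra of Lemma~\ref{lemmaMs}.
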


Here (\ref{eq2_A7b}) is the change-of-time transformation (\ref{eq2_A1ac}), which is applied to the normalized measure $\nu$. In the following, we assume $\int C d\nu = \int A \circ g \,d\nu < \infty$ for all $g \in \mathcal{G}$, implying that all measures $g_\star \nu$ exist. By Theorem~\ref{prop_g}, elements of the group $\mathcal{G}$ define transformations of normalized invariant measures through the relation $\nu \mapsto g_\star \nu$, which is a normalized counterpart of the push-forward $\mu \mapsto g_\sharp \mu$ for the original measure. Therefore, $g_\star$ preserves the group structure: 

\begin{corollary}
\label{theorem1}
For any $g$ and $g' \in \mathcal{G}$, we have
	\begin{equation}
	(g' \circ g)_\star\nu = g'_{\star} \big(g_{\star}\nu\big),
	\label{eq2_A7c}
	\end{equation}
where the action of $g_\star$ is defined by (\ref{eq2_A7}) and (\ref{eq2_A7b}).
\end{corollary}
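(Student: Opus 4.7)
The plan is to reduce the corollary to two successive applications of Theorem~\ref{prop_g} together with the uniqueness, built into Theorem~\ref{theorem0}, of the normalized measure attached to a given invariant measure of $\Phi^t$.

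First I would pick any invariant measure $\mu$ of $\Phi^t$ such that $\nu = P_\sharp \mu_A$, so that we are in the setting of Theorem~\ref{prop_g}. Since each $g \in \mathcal{G}$ is a symmetry in the sense of Definition~\ref{del_measure}, the measures $\mu' := g_\sharp \mu$ and $\mu'' := g'_\sharp \mu' = (g' \circ g)_\sharp \mu$ are again invariant for $\Phi^t$; here I used the elementary identity $(g' \circ g)_\sharp = g'_\sharp \circ g_\sharp$ for push-forwards. Theorem~\ref{theorem0} then assigns to each of them a uniquely determined normalized measure on $\mathcal{Y}$, obtained by the prescription $\widetilde{\mu} \mapsto P_\sharp \widetilde{\mu}_A$.

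Next I would identify these normalized measures in two different ways. Applying Theorem~\ref{prop_g} to the pair $(\mu,\mu')$ with symmetry $g$ gives that the normalized measure of $\mu'$ is $g_\star \nu$. Applying it a second time to the pair $(\mu',\mu'')$ with symmetry $g'$ gives that the normalized measure of $\mu''$ is $g'_\star(g_\star \nu)$. On the other hand, since $\mathcal{G}$ is a group we have $g' \circ g \in \mathcal{G}$, and a single application of Theorem~\ref{prop_g} to the pair $(\mu,\mu'')$ with symmetry $g' \circ g$ shows that the normalized measure of $\mu''$ is also $(g' \circ g)_\star \nu$. Equality (\ref{eq2_A7c}) now follows from the uniqueness of the normalized measure associated to $\mu''$.

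The only technical point that requires care is the admissibility of both applications of Theorem~\ref{prop_g}: we need $\int A \circ g \, d\nu < \infty$ and $\int A \circ g' \, d\nu' < \infty$, the latter being the hypothesis applied to the measure $\nu'=g_\star\nu$. Both are covered by the blanket assumption $\int A \circ g \, d\nu < \infty$ for all $g \in \mathcal{G}$ stated just before the corollary, used in particular for $g$, $g'$ and $g' \circ g$. I expect this admissibility bookkeeping to be the only mildly delicate point; once it is granted, the argument is purely formal and no direct manipulation of the densities $C/\int C d\nu$ or of the change-of-time relation (\ref{eq2_A1r}) is needed, since the required multiplicativity is already packaged inside the composition rule for push-forwards.
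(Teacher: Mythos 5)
Your proposal is correct and is essentially the paper's own (implicit) argument: the corollary is stated without a separate proof precisely because Theorem~\ref{prop_g} identifies $g_\star\nu$ with the normalized measure of $g_\sharp\mu$, so the group property follows from $(g'\circ g)_\sharp = g'_\sharp\circ g_\sharp$ and the uniqueness of the normalized measure $P_\sharp\widetilde{\mu}_A$ attached to each invariant $\widetilde{\mu}$. Your admissibility bookkeeping also checks out, since the identity $(A\circ g'\circ P\circ g)\,(A\circ g) = A\circ(g'\circ g)$ (cf.\ (\ref{eq2_H6})) gives $\int A\circ g'\,d(g_\star\nu) = \int A\circ(g'\circ g)\,d\nu\,\big/\int A\circ g\,d\nu$, which is finite under the blanket assumption.
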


We say that the normalized measure $\nu$ is symmetric with respect to $g$ if $g_\star \nu = \nu$. Combining Proposition~\ref{prop_h} and Theorem~\ref{prop_g}, we see that this relation is not sensitive to temporal scalings:

\begin{corollary}\label{corr_sym_mes}
If the measure $\mu$ is symmetric with respect to a composition $g \circ h^a$ for some $g \in \mathcal{G}$ and $h^a \in \mathcal{H}_{\mathrm{ts}}$, then the normalized measure $\nu$ is symmetric with respect to $g$:
	\begin{equation}
	\label{eq2_A8ex}
	(g \circ h^a)_\sharp \mu = \mu \quad \Rightarrow \quad g_\star \nu = \nu.
	\end{equation}
\end{corollary}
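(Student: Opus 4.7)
The plan is to exploit the fact that $g$ and $h^a$ commute, stated in (\ref{eq2_2com}), together with Proposition~\ref{prop_h}, which guarantees that the normalized measure is insensitive to temporal rescalings of the original invariant measure. The key observation will be that the hypothesis can be rewritten as an identity between $g_\sharp \mu$ and a temporal rescaling of $\mu$, after which the conclusion should follow without any further analysis.

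First I would use the commutation $g \circ h^a = h^a \circ g$ to re-express the symmetry hypothesis $(g \circ h^a)_\sharp \mu = \mu$ as $(h^a \circ g)_\sharp \mu = \mu$, i.e.\ $h^a_\sharp(g_\sharp \mu) = \mu$. Composing both sides with $h^{1/a}_\sharp$ and invoking $h^{1/a} \circ h^a = \mathrm{id}$ from (\ref{eq2_0ha}), I would then obtain the clean identity $g_\sharp \mu = h^{1/a}_\sharp \mu$ between invariant measures of $\Phi^t$.

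Next, I would read Theorem~\ref{prop_g} not as a formula to be manipulated but as a definition: $g_\star \nu$ is, by construction, the normalized invariant measure assigned by Theorem~\ref{theorem0} to the invariant measure $g_\sharp \mu$. Combining this with the identity $g_\sharp \mu = h^{1/a}_\sharp \mu$ derived above, and applying Proposition~\ref{prop_h} to the right-hand side (so that $h^{1/a}_\sharp \mu$ induces the same normalized measure $\nu$ as $\mu$ does), the conclusion $g_\star \nu = \nu$ falls out immediately.

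The hard part, such as it is, is purely interpretational: one has to notice that the explicit change-of-time formula (\ref{eq2_A7})--(\ref{eq2_A7b}) stated in Theorem~\ref{prop_g} is a red herring for this corollary, and that the result depends only on the definitional fact that $g_\star \nu$ is attached to $g_\sharp \mu$ via Theorem~\ref{theorem0}. Once that is recognized, no integration against densities or detailed computation of the projector $P$ is needed; the argument reduces to a few lines of group-theoretic bookkeeping on $\mathcal{X}$ plus one invocation of Proposition~\ref{prop_h}.
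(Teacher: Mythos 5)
Your proof is correct and takes essentially the same route as the paper, which establishes this corollary simply by ``combining Proposition~\ref{prop_h} and Theorem~\ref{prop_g}'' without spelling out the details. Your explicit reduction of the hypothesis to $g_\sharp \mu = h^{1/a}_\sharp \mu$ via the commutation relation (\ref{eq2_2com}), followed by one application of Proposition~\ref{prop_h}, is a clean way of making that combination precise, and your observation that only the definitional content of Theorem~\ref{prop_g} (namely that $g_\star\nu$ is the normalized measure attached to $g_\sharp\mu$) is needed here, rather than the formula (\ref{eq2_A7}), is accurate.
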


We see that the normalized system inherits the symmetry group $\mathcal{G}$ in the statistical sense. {\color{black}As we mentioned in Section~\ref{subsec_hid}, the normalized measure $\nu$ may be symmetric while $\mu$ is not, manifesting a ``hidden'' form of symmetry.} 

A specific form of the normalized system depends on a choice of the representative set $\mathcal{Y}$. The next statement ensures that all choices are equivalent as far as the symmetry of the normalized measure is concerned.

\begin{theorem}
\label{theorem3}
Assume that the normalized measure $\nu$ from Theorem~\ref{theorem0} is symmetric with respect to $g \in \mathcal{G}$ for some representative set: $g_\star \nu = \nu$. Then the same is true for any representative set.
\end{theorem}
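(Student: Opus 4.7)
The plan is to relate the two normalized measures on $\mathcal{Y}$ and $\mathcal{Y}'$ via a canonical bijection, and then observe that this relation is natural enough to commute with the operation $\mu\mapsto g_\sharp\mu$ upstairs. Since each equivalence class of $\mathcal{H}_{\mathrm{ts}}$ intersects $\mathcal{Y}$ and $\mathcal{Y}'$ in exactly one point each, I define $\Pi:\mathcal{Y}\to\mathcal{Y}'$ by $\Pi:=P'|_\mathcal{Y}$, so that $\Pi(y)=h^{A'(y)}(y)$ for $y\in\mathcal{Y}$. Here $P'$ and $A'$ are the projector and scaling function attached to $\mathcal{Y}'$ exactly as in (\ref{eq2_2Ap})--(\ref{eq2_2P}).

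The core step is to establish the identity
\begin{equation*}
\nu'=\Pi_\sharp\nu_{A'},\qquad \frac{d\nu_{A'}}{d\nu}=\frac{A'}{\langle A'\rangle_\nu},
\end{equation*}
in which $A'$ is implicitly restricted to $\mathcal{Y}$. To prove it, I start from $\nu'=P'_\sharp\mu_{A'}$, write $\int\psi'\,d\nu'=\langle A'\rangle_\mu^{-1}\int\psi'(P'(x))A'(x)\,d\mu(x)$, and rewrite the integrand in the form $\psi(P(x))A(x)$ required by Proposition~\ref{prop_erg}. The suitable choice is $\psi(y):=\psi'(\Pi(y))A'(y)$, and verifying that this works reduces to two algebraic identities: $P'(x)=\Pi(P(x))$ and $A'(x)=A'(P(x))\,A(x)$. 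Both follow directly from $P(x)=h^{A(x)}(x)$, $P'(x)=h^{A'(x)}(x)$, the group law (\ref{eq2_0ha}), and the covariance property (\ref{eq2_2Ap}). Applying Proposition~\ref{prop_erg} then converts the $\mu$-integral into a $\nu$-integral, and matching normalizations by setting $\psi'\equiv 1$ yields $\langle A'\rangle_\nu=\langle A'\rangle_\mu/\langle A\rangle_\mu$, completing the identity.

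With this identity in hand, the theorem follows immediately. The derivation above uses nothing specific about $\mu$ beyond invariance: the bijection $\Pi$ and the weight $A'$ depend only on the two representative sets. Applying the same identity to the invariant measure $g_\sharp\mu$, whose normalized measures on $\mathcal{Y}$ and $\mathcal{Y}'$ are by definition $g_\star\nu$ and $g_\star\nu'$ (Theorem~\ref{prop_g}), I obtain $g_\star\nu'=\Pi_\sharp(g_\star\nu)_{A'}$. Substituting the hypothesis $g_\star\nu=\nu$ yields $g_\star\nu'=\Pi_\sharp\nu_{A'}=\nu'$, as desired.

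The main obstacle is the bookkeeping for the key identity: one has to recognize that the ratio $A'(x)/A(x)$ factors through the projector $P$ (descending to the function $A'|_\mathcal{Y}$ on the representative set), and keep careful track of the two normalizing constants $\langle A\rangle_\mu$ and $\langle A'\rangle_\mu$. Once this identity is in place, the symmetry-transfer step $g_\star\nu=\nu\Rightarrow g_\star\nu'=\nu'$ is just the observation that the operation $\Pi_\sharp(\,\cdot\,)_{A'}$ depends only on the upstairs measure, so it intertwines the two normalizations under any replacement $\mu\mapsto g_\sharp\mu$.
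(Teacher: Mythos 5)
Your proof is correct, and it takes a somewhat different route from the paper's. The paper works entirely ``upstairs'': it uses Proposition~\ref{prop_muC} to write both symmetry conditions as push-forward identities for time-changed versions of $\mu$ itself, namely $P_\sharp \mu_A = (P\circ g)_\sharp\mu_C$ and $\widetilde{P}_\sharp \mu_{\widetilde{A}} = (\widetilde{P}\circ g)_\sharp \mu_{\widetilde{C}}$, then passes from the first to the second by applying the change of time with relative speed $\widetilde{A}$ to both sides, simplifying the resulting densities via the same covariance identities you use (the paper's computations (\ref{eq2_H9})--(\ref{eq2_H10}) are exactly your $A'(x)=A'(P(x))A(x)$ applied to $x$ and to $g(x)$), and finally pushing forward by $\widetilde{P}$. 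You instead isolate a clean intermediate lemma: the canonical bijection $\Pi=P'|_{\mathcal{Y}}$ intertwines the two normalization constructions via $\nu'=\Pi_\sharp\nu_{A'}$, an identity valid for an arbitrary measure $\mu$ upstairs; the theorem then follows by naturality, applying the lemma to $g_\sharp\mu$ and invoking the definition of $g_\star$ from Theorem~\ref{prop_g}. The two arguments rest on the same algebra ($P'\circ P = P'$, the homogeneity (\ref{eq2_2Ap}), and the composition rule for time changes), but your packaging buys a reusable statement --- an explicit transfer operator between normalized systems built on different representative sets --- and makes the logical structure (``the intertwining map does not see $\mu$, hence commutes with $\mu\mapsto g_\sharp\mu$'') more transparent, at the cost of a short detour through Proposition~\ref{prop_erg} to verify the normalization constant $\langle A'\rangle_\nu = \langle A'\rangle_\mu/\langle A\rangle_\mu$. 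Both arguments implicitly require the integrability of $A$ and $A'$ against $g_\sharp\mu$, which is the same standing assumption the paper makes after Theorem~\ref{prop_g}.
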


It is useful to express $g_\star \nu$ in terms of the original measure $\mu$.
\begin{proposition}
\label{prop_muC}
Under conditions of Theorem~\ref{prop_g}, the following relation holds:
	\begin{equation}
	g_\star \nu = (P \circ {g})_\sharp\mu_C.
	\label{eq2_A7_mu}
	\end{equation}
\end{proposition}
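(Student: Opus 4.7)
The plan is to start from the formula $g_\star \nu = (P \circ g)_\sharp \nu_C$ given by Theorem~\ref{prop_g} and rewrite everything in terms of $\mu$ using the definitions $\nu = P_\sharp \mu_A$ and the absolute continuity formulas (\ref{eq2_A1ac}), (\ref{eq2_A7b}). The identity we must establish is, for every bounded measurable test function $\psi:\mathcal{Y}\to\mathbb{R}$,
\begin{equation*}
\int \psi\circ P\circ g \,\, d\nu_C \;=\; \int \psi\circ P\circ g \,\, d\mu_C .
\end{equation*}
Both sides are quotients whose denominators are the normalizing constants $\int C\,d\nu$ and $\int C\,d\mu$; the strategy is to unfold the left-hand side into an integral against $\mu$ and reconcile the two expressions using the commutation of $g$ with temporal scalings.

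The key algebraic step will be to show that $P\circ g\circ P = P\circ g$ and $C\circ P = C/A$ pointwise. First I would use the definition $P(x)=h^{A(x)}(x)$ and the commutation $g\circ h^a = h^a\circ g$ from (\ref{eq2_2com}) to write $g\circ P(x) = h^{A(x)}\circ g(x)$. Then, combining (\ref{eq2_2Ap}) with the group law (\ref{eq2_0ha}), I compute
\begin{equation*}
A\bigl(h^{A(x)}(g(x))\bigr) = \frac{A(g(x))}{A(x)} = \frac{C(x)}{A(x)},
\end{equation*}
which immediately gives $C\circ P(x) = C(x)/A(x)$, and
\begin{equation*}
P\bigl(h^{A(x)}(g(x))\bigr) = h^{A(g(x))/A(x)}\circ h^{A(x)}(g(x)) = h^{A(g(x))}(g(x)) = P(g(x)),
\end{equation*}
establishing $P\circ g\circ P = P\circ g$.

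With these two identities in hand, unfolding the left-hand side is routine. Using $\nu = P_\sharp\mu_A$ and the density (\ref{eq2_A7b}),
\begin{equation*}
\int \psi\circ P\circ g\, d\nu_C
= \frac{\int (\psi\circ P\circ g\circ P)(x)\, C(P(x))\, d\mu_A(x)}{\int C\, d\nu}
= \frac{\int \psi(P(g(x)))\,\frac{C(x)}{A(x)}\,\frac{A(x)}{\int A\,d\mu}\,d\mu(x)}{\int C\,d\nu}.
\end{equation*}
The factors of $A(x)$ cancel, and the resulting numerator becomes $\bigl(\int A\,d\mu\bigr)^{-1}\int \psi\circ P\circ g\,\,C\,d\mu$. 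Doing the same unfolding for the normalizing constant $\int C\,d\nu = \bigl(\int A\,d\mu\bigr)^{-1}\int C\,d\mu$ cancels the $\int A\,d\mu$ factors, leaving $\bigl(\int \psi\circ P\circ g\,C\,d\mu\bigr)/\bigl(\int C\,d\mu\bigr)$, which by (\ref{eq2_A1ac}) applied to $\mu$ with the weight $C$ equals $\int \psi\circ P\circ g\,d\mu_C$.

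The main obstacle is the clean derivation of $P\circ g\circ P=P\circ g$ and $C\circ P=C/A$; once those hold, the rest is bookkeeping with the change-of-measure formula. Conceptually this is the statement that the projector $P$ is ``equivariant enough'' with respect to $g\in\mathcal{G}$ modulo a controlled rescaling, which is ultimately a consequence of the commutation $g\circ h^a=h^a\circ g$ in (\ref{eq2_2com}). No other property of $\mathcal{G}$ beyond this commutation with $\mathcal{H}_{\mathrm{ts}}$ is used.
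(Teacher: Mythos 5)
Your proof is correct, but it runs in the opposite direction from the paper's. In the paper, the identity $g_\star\nu=(P\circ g)_\sharp\mu_C$ is obtained \emph{first}, as a one-line consequence of the definition: $g_\star\nu = P_\sharp(g_\sharp\mu)_A = P_\sharp(g_\sharp\mu_{A\circ g}) = (P\circ g)_\sharp\mu_C$, using only Lemma~\ref{lemmaMs} (relation (\ref{eq2_D2exD})); the version with $\nu_C$ stated in Theorem~\ref{prop_g} is then derived \emph{from} this, and the Proposition's proof is just a pointer back to that intermediate step (\ref{eq2_A7ex}). You instead take the packaged statement $(P\circ g)_\sharp\nu_C$ of Theorem~\ref{prop_g} as your starting point and unfold $\nu_C$ back into $\mu_C$, which forces you to rederive the two identities $P\circ g\circ P = P\circ g$ and $C\circ P = C/A$ — these are exactly the paper's (\ref{eq2_B2x}) and (\ref{eq2_H6}), and your derivations of them are the same and correct. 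So the ingredients coincide; what differs is the order of deduction. The paper's route is shorter because $\mu_C$ appears naturally en route to $\nu_C$; yours has the virtue of being self-contained given only the \emph{statement} of Theorem~\ref{prop_g} rather than its proof, at the cost of redoing the change-of-measure bookkeeping (which you carry out correctly, including the cancellation of the normalizing factors $\int A\,d\mu$). Your closing remark that only the commutation $g\circ h^a=h^a\circ g$ and the homogeneity (\ref{eq2_2Ap}) of $A$ are needed is accurate and matches the paper's use of these relations.
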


In summary, we developed a quotient-like construction for the flow $\Phi^t$ with respect to the group of temporal scalings $\mathcal{H}_{\mathrm{ts}}$. It yields the normalized flow $\Psi^\tau$ with the normalized invariant measure $\nu$, which are not sensitive to temporal scalings. Symmetries of the remaining group $\mathcal{G}$ persist in the form of transformations $g_\star \nu$ for normalized  invariant measures. 

\subsection{Proofs of Theorems~\ref{theorem0}--\ref{theorem3} and Propositions~\ref{prop_h}--\ref{prop_muC}}
\label{sec_2proofs}

We will need the following lemmas:

\begin{lemma}
\label{lemmaMs}
For any measurable map $f: \mathcal{X} \mapsto \mathcal{X}$ and positive measurable functions $B: \mathcal{X} \mapsto \mathbb{R}_+$ and $B':\mathcal{X} \mapsto \mathbb{R}_+$  the following relations hold:
	\begin{eqnarray}
	\label{eq2_D2exD}
	(f_\sharp \mu)_B & = & f_\sharp \,\mu_{B \circ f}, \\
	(f_\sharp \mu_B)_{B'} & = & f_\sharp \mu_F, \quad F = (B' \circ f)B.
	\label{eq2_C2g}
 	\end{eqnarray}
\end{lemma}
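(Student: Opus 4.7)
The plan is to verify both identities by direct computation on an arbitrary measurable set, reducing everything to the explicit Radon--Nikodym formula of Proposition~\ref{prop_time} and the change-of-variables identity
\[
\int \varphi \, d(f_\sharp \mu) \;=\; \int \varphi \circ f \, d\mu
\]
for nonnegative measurable $\varphi$. Throughout, one should assume implicitly that the normalizing integrals appearing in the subscript notation (i.e.\ $\int B\,d\mu$, $\int B\circ f\,d\mu$, $\int F\,d\mu$) are finite and positive, which is precisely what is needed for the measures $\mu_B$, $(f_\sharp\mu)_B$, and $\mu_F$ to be well defined probability measures.

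For relation~(\ref{eq2_D2exD}), I would take a measurable set $E\subset\mathcal{X}$ and evaluate both sides using (\ref{eq2_A1ac}) combined with the change-of-variables identity. On the left,
\[
(f_\sharp \mu)_B(E) \;=\; \frac{\int_E B\,d(f_\sharp\mu)}{\int B\,d(f_\sharp\mu)} \;=\; \frac{\int_{f^{-1}(E)} B\circ f \,d\mu}{\int B\circ f\, d\mu},
\]
while on the right $f_\sharp \mu_{B\circ f}(E)=\mu_{B\circ f}(f^{-1}(E))$ equals the same ratio by direct application of (\ref{eq2_A1ac}) to the function $B\circ f$. Since $E$ is arbitrary, the two measures coincide.

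For relation~(\ref{eq2_C2g}), I would iterate the same argument. Using (\ref{eq2_D2exD}) first gives $(f_\sharp \mu_B)_{B'} = f_\sharp (\mu_B)_{B'\circ f}$, and then I would apply the consistency rule (\ref{eq2_A1r}) to the pair $(B,\,B'\circ f)$ inside the push-forward, obtaining $(\mu_B)_{B'\circ f}=\mu_{F}$ with $F=(B'\circ f)B$. Alternatively, one can verify (\ref{eq2_C2g}) from scratch by the same set-wise computation as above, in which the densities $B/\!\int B\,d\mu$ and $(B'\circ f)/\!\int B'\circ f\,d\mu_B$ multiply and all normalizing constants collapse to $\int F\,d\mu$; the numerators similarly combine to $\int_{f^{-1}(E)} F\,d\mu$, matching $f_\sharp \mu_F(E)$ exactly.

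There is no conceptual obstacle here: the lemma is a formal identity expressing the compatibility of the change-of-time/reweighting operation $\mu\mapsto\mu_A$ with push-forwards. The only delicate point is keeping track of the normalizing denominators so that the finiteness assumptions implicit in Proposition~\ref{prop_time} are transferred correctly from $\mu$ to $f_\sharp\mu$ via the change-of-variables formula; once this bookkeeping is done, both identities reduce to equality of two explicit ratios of integrals.
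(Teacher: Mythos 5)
Your proposal is correct and follows essentially the same route as the paper: the set-wise computation with indicator functions is just the test-function computation of the paper's proof specialized to $\varphi=\mathbf{1}_E$, and your derivation of (\ref{eq2_C2g}) by combining (\ref{eq2_D2exD}) with the consistency rule (\ref{eq2_A1r}) is exactly the paper's argument.
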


\begin{proof}
Equality of these measures can be verified by integrating them with a measurable function $\varphi: \mathcal{X} \mapsto \mathbb{R}$.
Using (\ref{eq2_A1ac}) and the classical change-of-variables formula for a push-forward measure, one has
	\begin{equation}
	\label{eq2_prL1_1}
	\int \varphi \, d(f_\sharp \mu)_B 
	= \frac{\int \varphi B  \, d(f_\sharp \mu)}{\int B \, d(f_\sharp \mu)} 
	= \frac{\int (\varphi  \circ f) (B \circ f)  \, d\mu}{\int B \circ f \, d\mu} 
	= \int \varphi \circ f \, d\mu_{B \circ f}
	= \int \varphi\, d\left(f_\sharp \,\mu_{B \circ f}\right),
 	\end{equation}
proving (\ref{eq2_D2exD}). Equality (\ref{eq2_C2g}) is obtained by combining (\ref{eq2_A1r}) and (\ref{eq2_D2exD}).
\end{proof}

\begin{lemma}
\label{lemma1}
The maps $\Phi_A^\tau$ and $h^a$ commute for any $a > 0$ and time $\tau$.
\end{lemma}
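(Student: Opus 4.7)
My plan is to verify $\Phi_A^\tau \circ h^a = h^a \circ \Phi_A^\tau$ by unfolding the defining relation (\ref{eq2_A1}) of $\Phi_A^\tau$ at the point $h^a(x)$, and converting it via the commutation (\ref{eq2_2}) and the scaling property (\ref{eq2_2Ap}) of $A$ into the defining relation of $\Phi_A^\tau$ at $x$. No measure-theoretic considerations are required; the claim is purely about how the integral appearing in (\ref{eq2_A1}) rescales under $h^a$.

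Concretely, fix $x \in \mathcal{X}$, $a > 0$ and $\tau \in \mathbb{R}$. Let $t_1$ be the unique time satisfying $\tau = \int_0^{t_1} A \circ \Phi^s(h^a(x))\,ds$, so that $\Phi_A^\tau(h^a(x)) = \Phi^{t_1}(h^a(x))$. Using $\Phi^s \circ h^a = h^a \circ \Phi^{s/a}$ from (\ref{eq2_2}) together with $A \circ h^a = A/a$ from (\ref{eq2_2Ap}), the integrand becomes $A \circ \Phi^{s/a}(x)/a$. The substitution $u = s/a$ then yields
\[
\tau = \int_0^{t_1/a} A \circ \Phi^u(x)\,du,
\]
which, by the definition of $\Phi_A^\tau$ applied at $x$, forces $t_1/a = t$, where $t$ is the time such that $\Phi_A^\tau(x) = \Phi^t(x)$. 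A final application of (\ref{eq2_2}) gives $\Phi^{t_1}(h^a(x)) = \Phi^{at}(h^a(x)) = h^a \circ \Phi^t(x) = h^a \circ \Phi_A^\tau(x)$, which is the desired equality.

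The only genuinely delicate point is the implicit well-definedness of the time change: the map $t_1 \mapsto \tau$ must be a bijection of $\mathbb{R}$ so that the argument makes sense at both base points $x$ and $h^a(x)$. This is precisely what Proposition~\ref{prop_time} ensures using positivity of $A$; since $A \circ h^a = A/a > 0$ as well, the construction of the rescaled flow is equally well-defined at $h^a(x)$ with no additional hypotheses. Once the change of variables is licit, the proof collapses to two short manipulations using the commutation and scaling relations, so I expect no real obstacle.
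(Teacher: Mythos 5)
Your argument is correct and follows essentially the same route as the paper's proof: unfold the definition of $\Phi_A^\tau$ at $h^a(x)$, apply the commutation relation (\ref{eq2_2}) and the homogeneity property (\ref{eq2_2Ap}) of $A$, change variables $s' = s/a$ in the time integral to identify $t_1 = at$, and conclude via one more application of (\ref{eq2_2}). The extra remark on the well-definedness of the time change via positivity of $A$ is a sensible addition but does not alter the argument.
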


\begin{proof}
Using expressions (\ref{eq2_A1}) and (\ref{eq2_2}), we write 
	\begin{equation}
	h^a \circ \Phi_A^\tau(x)  = h^a \circ \Phi^t(x)  = \Phi^{at} \circ h^a(x).
	\label{eq2_A3}
	\end{equation}
Similarly, using (\ref{eq2_A1}) {\color{black}for the state $x_1 = h^a(x)$, we express
	\begin{equation}
	\Phi_A^\tau \circ h^a(x) 
	= \Phi_A^\tau(x_1)
	= \Phi^{t_1}(x_1)
	= \Phi^{t_1} \circ h^a(x),
	\label{eq2_A4}
	\end{equation}
}where the time $t_1$ is determined by the equation 
	\begin{equation}
	\tau 
	= \int_0^{t_1} A\circ \Phi^s(x_1)ds 
	= \int_0^{t_1} A\circ \Phi^s \circ h^a(x)ds.
	\label{eq2_A5}
	\end{equation}
Using (\ref{eq2_2}) and (\ref{eq2_2Ap}) in (\ref{eq2_A5}), we obtain
	\begin{equation}
	\tau
	= \int_0^{t_1} A\circ h^a \circ \Phi^{s/a}(x)ds
	= \int_0^{t_1} A\circ \Phi^{s/a}(x)\,\frac{ds}{a}
	= \int_0^{t_1/a} A\circ \Phi^{s'}(x) ds',
	\label{eq2_A6}
	\end{equation}
where the last equality follows from the change of integration variable $s' = s/a$.
Comparing (\ref{eq2_A6}) with the second expression in (\ref{eq2_A1}), we find $t = t_1/a$.  Then, expressions (\ref{eq2_A3}) and (\ref{eq2_A4}) yield the commutativity property $h^a \circ \Phi_A^\tau = \Phi_A^\tau \circ h^a$.
\end{proof}

\begin{proof}[Proof of Theorem~\ref{theorem0}]
Let us show that
	\begin{equation}
	\label{eq2_D2}
	P\circ \Phi_A^\tau \circ P(x) = P\circ \Phi_A^\tau(x)
	\end{equation}
for any $x \in \mathcal{X}$. In the left-hand side, we use (\ref{eq2_2P}) and the commutation relation of Lemma~\ref{lemma1}, which yields $P\circ \Phi_A^\tau \circ P(x) = P\circ h^{A(x)} \circ \Phi_A^\tau(x)$. Then, equality (\ref{eq2_D2}) follows from the projector property (see Definition~\ref{def1})
	\begin{equation}
	\label{eq2_PP}
	P \circ h^a = P.
	\end{equation}

By definitions (\ref{eq2_AmF}) and (\ref{eq2_Am}), we have 
	\begin{equation}
	\label{eq2_D1ex}
	\Psi^\tau_\sharp \nu 
	= (P\circ \Phi_A^\tau)_\sharp \left(P_\sharp \mu_A\right)
	= (P\circ \Phi_A^\tau \circ P)_\sharp \mu_A
	= (P\circ \Phi_A^\tau)_\sharp \mu_A
	= P_\sharp \mu_A = \nu,	
	\end{equation}
where we used (\ref{eq2_D2}) and the invariance of the measure $\mu_A$ for the flow $\Phi_A^\tau$ by Proposition~\ref{prop_time}. Hence, the measure $\nu$ is invariant for the normalized flow $\Psi^\tau$.

It remains to prove the property $\Psi^{\tau_1}\circ\Psi^{\tau_2} = \Psi^{\tau_1+\tau_2}$. Using definition (\ref{eq2_AmF}), we  have
	\begin{equation}
	\label{eq2_D1}
	\Psi^{\tau_1}\circ\Psi^{\tau_2} 
	= P\circ \Phi_A^{\tau_1} \circ P \circ \Phi_A^{\tau_2} 
	= P\circ \Phi_A^{\tau_1} \circ \Phi_A^{\tau_2} 
	= P\circ \Phi_A^{\tau_1+\tau_2}
	= \Psi^{\tau_1+\tau_2},
	\end{equation}
where we used (\ref{eq2_D2}) and the flow relation $\Phi_A^{\tau_1} \circ \Phi_A^{\tau_2} = \Phi_A^{\tau_1+\tau_2}$. 
\end{proof}

\begin{proof}[Proof of Proposition \ref{prop_h}]
By Theorem~\ref{theorem0}, the normalized measure for $\widetilde{\mu} = h^a_\sharp \mu$ is found as
	\begin{equation}
	\label{eq2_D2ex}
	\widetilde{\nu} = P_\sharp \widetilde{\mu}_A
	= P_\sharp \left(h^a_\sharp \mu\right)_A.
 	\end{equation}
Taking $f = h^a$ and $B = A$ in (\ref{eq2_D2exD}), we obtain
	\begin{equation}
	\label{eq2_D2exC}
	\left(h^a_\sharp \mu\right)_A = h^a_\sharp \mu_{A\circ h^a} 
	= h^a_\sharp \mu_{A/a} = h^a_\sharp \mu_A,
 	\end{equation}
where we used (\ref{eq2_2Ap}) and the observation that dividing by a constant $a$ in $A(x)/a$ does not change the measure (\ref{eq2_A1ac}). Using (\ref{eq2_D2exC}) in (\ref{eq2_D2ex}) yields
	\begin{equation}
	\label{eq2_D3ex}
	\widetilde{\nu} 
	= P_\sharp \left(h^a_\sharp \mu_A\right) 
	= (P \circ h^a)_\sharp \mu_A = P_\sharp \mu_A = \nu, 
 	\end{equation}
where we used the projector property (\ref{eq2_PP}).
\end{proof}

\begin{proof}[Proof of Proposition \ref{prop_erg}]
Using (\ref{eq2_AmF}), we have
	\begin{equation}
	\label{eq2_prP3_1}
	\int_0^\tau \psi \circ \Psi^\sigma(y) \,d\sigma
	= \int_0^\tau \psi \circ P\circ \Phi_A^\sigma(y) \,d\sigma.
	\end{equation}
By Proposition \ref{prop_time}, we substitute $\Phi^\sigma_A(y) = \Phi^s(y)$ and $d\sigma = A\circ \Phi^s (y) ds$. This yields
	\begin{equation}
	\label{eq2_prP3_2}
	\int_0^\tau \psi \circ \Psi^\sigma(y) \,d\sigma
	= \int_0^t \psi \circ P\circ \Phi^s(y) \,A \circ \Phi^s(y)\,ds,
	\end{equation}
where 
	\begin{equation}
	\label{eq2_prP3_3}
	\tau = \int_0^t A \circ \Phi^s(y) \,ds.
	\end{equation}

Taking $y = P(x) = h^{A(x)}(x)$ from (\ref{eq2_2P}) and using commutation relations (\ref{eq2_2}), we reduce (\ref{eq2_prP3_2}) to the form
	\begin{equation}
	\label{eq2_prP3_4}
	\begin{array}{rcl}
	\displaystyle
	\int_0^\tau \psi \circ \Psi^\sigma(y) \,d\sigma
	& = & {\color{black}
	\displaystyle
	\int_0^t \psi \circ P \circ \Phi^s \circ h^{A(x)}(x) \,
	A \circ \Phi^{s} \circ h^{A(x)}(x)\,ds 
	}
	\\[12pt]
	& = & 
	\displaystyle
	\int_0^t \psi \circ P \circ h^{A(x)} \circ \Phi^{s/A(x)}(x) \,
	A \circ h^{A(x)} \circ \Phi^{s/A(x)}(x)\,ds 
	\\[12pt]
	& = & 
	\displaystyle
	\int_0^t \psi \circ P\circ \Phi^{s/A(x)}(x) \,A \circ \Phi^{s/A(x)}(x)\,\frac{ds}{A(x)},
	\end{array}
	\end{equation}
where the {\color{black}third} equality follows from properties (\ref{eq2_2Ap}) and (\ref{eq2_PP}).
Denoting $\varphi = (\psi \circ P) A$ and performing the linear change of time $s' = s/A(x)$, expression (\ref{eq2_prP3_4}) becomes (dropping the primes)
	\begin{equation}
	\label{eq2_prP3_5}
	\int_0^\tau \psi \circ \Psi^\sigma(y) \,d\sigma
	= \int_0^T \varphi \circ \Phi^s(x) \,ds, \quad T = \frac{t}{A(x)}.
	\end{equation}	
Similarly, (\ref{eq2_prP3_3}) is reduced to the form
	\begin{equation}
	\label{eq2_prP3_6}
	\tau = \int_0^T A \circ \Phi^s(x) \,ds.
	\end{equation}
Since the average $\langle A \rangle_t(x)$ of a positive function $A$ is assumed to be finite and nonzero, one can see from (\ref{eq2_prP3_6}) and (\ref{eq2_Avr1}) that the limit $T \to \infty$ implies $\tau \to \infty$ and vice versa. Hence, using (\ref{eq2_prP3_5}) and (\ref{eq2_prP3_6}), we have 
	\begin{equation}
	\label{eq2_prP3_7}
	\langle \psi \rangle_\tau (y) 
	= \lim_{\tau \to \infty} \frac{1}{\tau}\int_0^\tau \psi \circ \Psi^\sigma(y) \,d\sigma
	= \lim_{T \to \infty}
	\frac{\frac{1}{T}\int_0^{T}\varphi \circ \Phi^s(x) \,ds}{\frac{1}{T}
	\int_0^{T}A \circ \Phi^s(x) \,ds}
	=\frac{\langle \varphi \rangle_t (x)}{\langle A \rangle_t (x)}.
	\end{equation}

Using (\ref{eq2_Am}) in (\ref{eq2_Avr1})--(\ref{eq2_Avr2}) with the change-of-variables formula for the push-forward measure, we have
	\begin{equation}
	\label{eq2_prP3_B1}
	\langle \psi \rangle_\nu 
	= \int \psi\, d\nu = \int \psi \, d\left(P_\sharp \mu_A\right)
	= \int \psi \circ P \, d \mu_A
	= \frac{\int (\psi \circ P) A \, d \mu}{\int A\,d\mu} 
	= \frac{\langle \varphi \rangle_\mu}{\langle A \rangle_\mu}, 
	\end{equation}
where the last two equalities follow from (\ref{eq2_A1ac}) and $\varphi(x) = \psi \circ P(x)\, A(x)$.
\end{proof}

\begin{proof}[Proof of Theorem \ref{prop_g}]
We first derive two simple identities. The first is
	\begin{equation}
	P\circ g \circ P(x) = P\circ g \circ h^{A(x)}(x) = P \circ g(x),
	\label{eq2_B2x}
	\end{equation}
where we substituted (\ref{eq2_2P}) and used commutation relation (\ref{eq2_2com}) with projector property (\ref{eq2_PP}). The second identity is
	\begin{equation}
	C \circ P(x)A(x) = A \circ g \circ h^{A(x)}(x)A(x) = A \circ h^{A(x)} \circ g(x)A(x) 
	= A \circ g(x) = C(x),
	\label{eq2_H6}
	\end{equation}
where we used sequentially $C = A \circ g$, (\ref{eq2_2P}), (\ref{eq2_2com}) and (\ref{eq2_2Ap}). 

By Theorem~\ref{theorem0} applied to the measure $\widetilde{\mu} = g_\sharp \mu$, we have
	\begin{equation}
	\label{eq2_A7ex}
	g_\star \nu 
	= P_\sharp \,\widetilde{\mu}_A 
	= P_\sharp \left(g_\sharp \mu\right)_A 
	= P_\sharp \left( {g}_\sharp\mu_C\right)
	= (P \circ g)_\sharp \mu_C, 
	\end{equation}
where we used the equality $(g_\sharp \mu)_A = g_\sharp \mu_C$ following from general relation (\ref{eq2_D2exD}) with $C = A \circ g$. Using (\ref{eq2_B2x}) in (\ref{eq2_A7ex}), we write
	\begin{equation}
	\label{eq2_A7exM}
	g_\star \nu = 
	{\color{black}
	(P \circ g \circ P)_\sharp \mu_C}
	= (P \circ g)_\sharp \big( P_\sharp\mu_C \big).
	\end{equation}
Similarly, using (\ref{eq2_Am}) we express
	\begin{equation}
	\label{eq2_A7exN}
	\nu_C = (P_\sharp \mu_A)_C
	= P_\sharp \mu_F, \quad F = (C\circ P)\,A,
	\end{equation}
where we used relation (\ref{eq2_C2g}) written for the measure $\nu$ with $f = P$, $B = A$ and $B' = C$. Notice that $F = C$ by the identity (\ref{eq2_H6}). Hence, we obtain (\ref{eq2_A7}) by combining (\ref{eq2_A7exM}) and (\ref{eq2_A7exN}).
\end{proof}

\begin{proof}[Proof of Proposition \ref{prop_muC}]
Expression (\ref{eq2_A7_mu}) has been verified in (\ref{eq2_A7ex}).
\end{proof}

\begin{proof}[Proof of Theorem \ref{theorem3}]
Let us consider two different representative sets, $\mathcal{Y}$ and $\widetilde{\mathcal{Y}}$, with the corresponding projectors, $P(x) = h^{A(x)}(x) \in \mathcal{Y}$ and $\widetilde{P}(x) = h^{\widetilde{A}(x)}(x) \in \widetilde{\mathcal{Y}}$. By Theorem~\ref{theorem0}, the normalized measures are expressed as
	\begin{equation}
	\nu = P_\sharp \mu_A, \quad
	\widetilde{\nu} = \widetilde{P}_\sharp \mu_{\widetilde{A}}.
	\label{eq2_H3}
	\end{equation}
Using $\nu$ from (\ref{eq2_H3}) and $g_\star \nu$ from (\ref{eq2_A7_mu}), we write the symmetry condition $\nu = g_\star \nu$ as
	\begin{equation}
	P_\sharp \mu_A = (P \circ g)_\sharp \mu_C.
	\label{eq2_H4y}
	\end{equation}
Similarly, for the second representative set $\widetilde{\mathcal{Y}}$, the symmetry condition $\widetilde{\nu} = g_\star \widetilde{\nu}$ is equivalent to
	\begin{equation}
	\widetilde{P}_\sharp \mu_{\widetilde{A}} = (\widetilde{P} \circ {g})_\sharp \mu_{\widetilde{C}}.
	\label{eq2_H5z}
	\end{equation}
The proof will be completed by deriving (\ref{eq2_H5z}) from (\ref{eq2_H4y}).

Changing time in both sides of (\ref{eq2_H4y}) with the relative speed $\widetilde{A}(x)$ {\color{black} yields 
	\begin{equation}
	\left(P_\sharp \mu_A\right)_{\widetilde{A}} = \left((P \circ {g})_\sharp \mu_C\right)_{\widetilde{A}}.
	\label{eq2_H7pre}
	\end{equation}
Using (\ref{eq2_C2g}) in both sides of (\ref{eq2_H7pre}),} we have
	\begin{equation}
	P_\sharp \mu_F = (P \circ {g})_\sharp \mu_H,
	\label{eq2_H7}
	\end{equation}
where
	\begin{equation}
	F = (\widetilde{A}\circ P)\,A, \quad
	H = (\widetilde{A} \circ P\circ g)\,C.
	\label{eq2_H8}
	\end{equation}
The function $F(x)$ is expressed using (\ref{eq2_2P}) and (\ref{eq2_2Ap}) as
	\begin{equation}
	F(x) = \widetilde{A}\circ h^{A(x)}(x)A(x)
	= \widetilde{A}(x).
	\label{eq2_H9}
	\end{equation}
Writing $C(x) = A \circ g(x) = A(x_g)$ with $x_g = g(x)$, we similarly express the function $H(x)$ as
	\begin{equation}
	H(x) 
	= \widetilde{A} \circ P(x_g)\, A(x_g) 
	= \widetilde{A} \circ h^{A(x_g)}(x_g)\, A(x_g) 
	= \widetilde{A}(x_g)
	= \widetilde{A} \circ g (x)
	= \widetilde{C}(x).
	\label{eq2_H10}
	\end{equation}
Combining (\ref{eq2_H7}) with (\ref{eq2_H9}) and (\ref{eq2_H10}), we have
	\begin{equation}
	P_\sharp \mu_{\widetilde{A}} = (P \circ {g})_\sharp \mu_{\widetilde{C}}.
	\label{eq2_H12}
	\end{equation}
Applying the push-forward $\widetilde{P}_\sharp$ in both sides of this expression and using the relation $\widetilde{P}\circ P = \widetilde{P}$ analogous to (\ref{eq2_PP}), yields the required identity (\ref{eq2_H5z}).
\end{proof}

\section{Hidden scaling symmetry in a shell model of turbulence}
\label{sec_shell1}

In this section we consider a popular toy-model, called a shell model, which mimics turbulent dynamics of incompressible three-dimensional Navier--Stokes equations~\cite{gledzer1973system,ohkitani1989temporal,biferale2003shell}.
It is represented by complex variables $u_n \in \mathbb{C}$ called  \textit{shell velocities} and indexed by integer \textit{shell numbers} $n$. Shell velocities are interpreted as amplitudes of velocity fluctuations at wavenumbers $k_n = 2^n$. Thus, small wavenumbers (smaller $n$) describe large-scale motion and large wavenumbers (larger $n$) correspond to small-scale dynamics. Equations of motion are constructed in analogy with the Navier--Stokes system (preserving some of its symmetries and global inviscid invariants) and take the form~\cite{l1998improved}
	\begin{equation}
	\label{eq3_1}
	\frac{du_n}{dt} = B_n-\mathrm{Re}^{-1} k_n^2u_n+f_n, \quad n \ge 0.
	\end{equation}
Here $B_n$ is the quadratic nonlinear term
	\begin{equation}
	\label{eq3_1b}
	B_n = \left\{
	\begin{array}{ll}
	i(k_{n+1}u_{n+2}u_{n+1}^*
	-k_{n-1}u_{n+1}u_{n-1}^* 
	+k_{n-2}u_{n-1}u_{n-2}), & n > 1;\\[3pt]
	i(k_2u_3u_2^*-k_0u_2u_0^*), & n = 1;\\[3pt]
	ik_1u_2u_1^*, & n = 0,
	\end{array}
	\right.
	\end{equation}
where $n = 0$ and $1$ are ``boundary'' shell numbers, $i$ is the imaginary unit, and the asterisks denote complex conjugation.
We consider constant (time independent) forcing terms $f_n$, which are nonzero only for the boundary shells $n = 0$ and $1$. Equations (\ref{eq3_1}) are written in non-dimensional form with characteristic integral scales set to unity. The viscous term $\mathrm{Re}^{-1} k_n^2u_n$ is multiplied by the inverse of the dimensionless Reynolds number $\mathrm{Re} > 0$. 

Along with (\ref{eq3_1}), we consider a shell model for the Euler equations of ideal flow. It is given by the equations
	\begin{equation}
	\label{eq3_2}
	\frac{du_n}{dt} = i\left(k_{n+1}u_{n+2}u_{n+1}^*
	-k_{n-1}u_{n+1}u_{n-1}^* 
	+k_{n-2}u_{n-1}u_{n-2}\right), \quad n \in \mathbb{Z},
	\end{equation}
where variables $u_n$ are introduced for all integer shell numbers $n$. Equations (\ref{eq3_2}) are obtained from (\ref{eq3_1}) and (\ref{eq3_1b}) for $n > 1$ after removing the forcing and viscous terms. We refer to~\cite{constantin2007regularity} for analytical properties of equations (\ref{eq3_1})--(\ref{eq3_2}), including the issues of existence and uniqueness of solutions. 
	
\subsection{Symmetries} 
\label{subsec_shell_sym}

In this subsection, we present the formal analysis of scaling symmetries for the ideal system (\ref{eq3_2}). The state variable $x = (u_n)_{n \in \mathbb{Z}}$ consists of all shell velocities. We assume the existence of a flow $\Phi^t: \mathcal{X} \mapsto \mathcal{X}$ in a properly defined configuration space $x \in \mathcal{X}$. Having a solution $\big(u_n(t)\big)_{n\in\mathbb{Z}} = \Phi^t(x)$ of (\ref{eq3_2}), new solutions are given by
	\begin{equation}
	\begin{array}{rll}
		\textrm{temporal scaling:}& 
		u_n(t) \mapsto u_n(t/a)/a,&
		a > 0;
		\\[2pt]
		\textrm{spatial scaling:}& 
		u_n(t) \mapsto k_m u_{n+m} (t),&
		m \in \mathbb{Z}.
	\end{array}	
	\label{eq3_3}
	\end{equation}
In terms of the state $x$ considered at initial time $t = 0$, relations (\ref{eq3_3}) define the mappings $h^a: \mathcal{X} \mapsto \mathcal{X}$ and $g^m: \mathcal{X} \mapsto \mathcal{X}$ acting on each shell velocity as
	\begin{eqnarray}
		x' = h^a(x), && \quad
		u'_n = u_n/a, \quad
		a > 0;
		\label{eq3_4a}
		\\[3pt]
		x' = g^m(x),&& \quad
		u_n' = k_m u_{n+m},\quad
		m \in \mathbb{Z}.
	\label{eq3_4b}
	\end{eqnarray}
Notice that $h^{a_1}\circ h^{a_2} = h^{a_1a_2}$ and $g^{m_1}\circ g^{m_2} = g^{m_1+m_2}$. These maps generate the two groups 
	\begin{equation}
	\mathcal{H}_{\mathrm{ts}} = \{h^a: a > 0\}, \quad
	\mathcal{G} = \{g^m: m \in \mathbb{Z}\}.
 	\label{eq3_4ex}
	\end{equation}
We will write $g^1 = g$, which represents the primary spatial scaling with the {\color{black}unit} change of shell numbers $n \mapsto n+1$. It is straightforward to see from (\ref{eq3_3})--(\ref{eq3_4b}) that the flow $\Phi^t$ and elements of the groups $\mathcal{H}_{\mathrm{ts}}$ and $\mathcal{G}$ satisfy composition and commutation relations (\ref{eq2_0ha})--(\ref{eq2_2}). Hence the theory of Section~\ref{sec4} applies to the shell model.

\subsection{Normalized system}\label{subsec_shell_norm}

The representative set $\mathcal{Y}$ is defined by a positive function $A(x)$ satisfying the homogeneity property (\ref{eq2_2Ap}). Given $x = \big(u_n\big)_{n\in\mathbb{Z}} \in \mathcal{X}$ the corresponding representative state $y = \big(U_n\big)_{n\in\mathbb{Z}} \in \mathcal{Y}$ is determined by the projector (\ref{eq2_2P}) as
	\begin{equation}
	\label{eq3_A2}
	y = P(x),\quad
	U_n = \frac{u_n}{A(x)}.
	\end{equation}
As an example, we consider
 	\begin{equation}
	\label{eq3_B1}
	A(x) = \sqrt{\sum_{n < 0} k_n^2|u_n|^2}.
	\end{equation}
For turbulent solutions, the sum in (\ref{eq3_B1}) converges as a geometric progression with the main contribution from the largest (close to zero) shells numbers~\cite{mailybaev2020hidden}.

Given a solution $\big(u_n(t)\big)_{n\in\mathbb{Z}} = \Phi^t(x)$ of system (\ref{eq3_2}), we now derive formally the equations for the normalized solution $\big(U_n(\tau)\big)_{n\in\mathbb{Z}} = \Psi^\tau(y)$. The normalized flow is defined by Theorem~\ref{theorem0} as $\Psi^\tau = P\circ \Phi_A^\tau$, which depends on the synchronized time given by expression (\ref{eq2_A1}) as
	\begin{equation}
	\tau = \int_0^t A \circ \Phi^s(x)\, ds.
	\label{eq3_6}
	\end{equation}
Using expressions (\ref{eq3_A2})--(\ref{eq3_6}) in (\ref{eq3_2}), after a long but elementary derivation one obtains~\cite{mailybaev2020hidden} 
	\begin{equation}
	\label{eq3_7}
	\begin{array}{rcl}
	\displaystyle
	\frac{dU_n}{d\tau} 
	& = & \displaystyle
	i\left(k_{n+1}U_{n+2}U_{n+1}^*
	-k_{n-1}U_{n+1}U_{n-1}^* 
	+k_{n-2}U_{n-1}U_{n-2}\right)
	\\[10pt]
	&& \displaystyle
	+\,U_n\sum_{j < 0} k_j^3
	\left(2\pi_{j+1}-\frac{\pi_j}{2}
	-\frac{\pi_{j-1}}{4}\right),\quad
	\pi_j = \mathrm{Im}\left(U_{j-1}^*U_j^*\,U_{j+1}\right).
	\end{array}
	\end{equation}
These are equations satisfied by solutions $U_n(\tau)$ of the normalized system.
The condition $A(y) = 1$ on the representative set is written using (\ref{eq3_B1}) as 
	\begin{equation}
	\sum_{n < 0} k_n^2|U_n|^2 = 1.
	\label{eq3_7A}
	\end{equation}
One can check that this condition is invariant for system (\ref{eq3_7}).

Now let us analyze statistical symmetries of the normalized system. By Theorem~\ref{theorem0}, the invariant measure $\mu$ of the flow $\Phi^t$ in the original system (\ref{eq3_2}) yields the invariant measure 
	\begin{equation}
	\nu  = P_\sharp \mu_A 
	\label{eq3_7B2}
	\end{equation}
of the flow $\Psi^\tau$ in the normalized system (\ref{eq3_7}). For any scaling map $g^m \in \mathcal{G}$, Theorem~\ref{prop_g} and Proposition~\ref{prop_muC} yield the new invariant normalized measure as
	\begin{equation}
	g^m_\star \nu 
	= (P \circ g^m)_\sharp\nu_C 
	= (P \circ g^m)_\sharp\mu_C, 
	\quad C = A \circ g^m.
	\label{eq3_7B}
	\end{equation}

The transformation $\nu \mapsto g_\star\nu$ can be associated with changes of variables.
Indeed, expressions (\ref{eq3_7B}) imply transformations of state and time in the form
	\begin{equation}
	\begin{array}{l}
	y \mapsto y^{(m)} = P \circ g^m(y), 
	\\[3pt]
	d\tau \mapsto d\tau^{(m)} = A \circ g^m(y)\,d\tau. 
	\end{array}
	\label{eq3_7D}
	\end{equation}
Using projector (\ref{eq3_A2}) and the scaling map from (\ref{eq3_4b}), the first relation of (\ref{eq3_7D}) is written as
	\begin{equation}
	U_n \mapsto U_n^{(m)} 
	= \frac{k_m U_{n+m}}{A \circ g^m(y)}.
	\label{eq3_7E}
	\end{equation}
Using (\ref{eq3_4b}), (\ref{eq3_B1}) and (\ref{eq3_7A}), we derive 
	\begin{equation}
	\label{eq3_B4}
	A \circ g^m(y) 
	= \sqrt{\sum_{n < m} k_n^2|U_n|^2}
	= \left\{
	\begin{array}{ll}
		\displaystyle
		\bigg(1+\sum_{0 \le n < m} k_n^2|U_n|^2\bigg)^{1/2}, & m > 0; \\[15pt]
		1, & m = 0; \\[10pt]
		\displaystyle
		\bigg(1-\sum_{m \le n < 0} k_n^2|U_n|^2\bigg)^{1/2}, & m < 0.
	\end{array}
	\right.
	\end{equation}
One can check that these transformations define symmetries of the normalized system, i.e., equations (\ref{eq3_7}) are satisfied by the velocities $U_n^{(m)}$ as functions of the time $\tau^{(m)}$ for any $m$, provided that $U_n(\tau)$ satisfy (\ref{eq3_7}). Notice that, though symmetries (\ref{eq3_4b}) of the original system are linear, the respective transformation (\ref{eq3_7D})--(\ref{eq3_B4}) for the normalized system is nonlinear.

\subsection{Hidden scaling symmetry}\label{subsec_conj}

Let us return to the original system (\ref{eq3_1}) with forcing and viscous terms, which is used to model the developed turbulence for large Reynolds numbers, $\mathrm{Re} \gg 1$. Dynamics of this model is not yet well understood theoretically, featuring important open problems of turbulence theory. The widely accepted conjecture~\cite{frisch1999turbulence,biferale2003shell} is that shell variables can be grouped into three different ranges (see Fig.~\ref{figHS}): the range of low wavenumbers $k_n \sim 1$ (small $n$) in which the forces are applied, the range of large wavenumbers $k_n \gtrsim \mathrm{Re}^{3/4}$ (large $n$) in which the dynamics is dominated by the viscous term, and the intermediate range of wavenumbers
	\begin{equation}
	\label{eq3_2ex}
	1 \ll k_n \ll \mathrm{Re}^{3/4}
	\end{equation}
called the inertial interval. In the inertial interval, the forcing and viscous terms are negligible, which yields the equations of ideal system (\ref{eq3_2}). 

\begin{figure}
\centering
\includegraphics[width=0.8\textwidth]{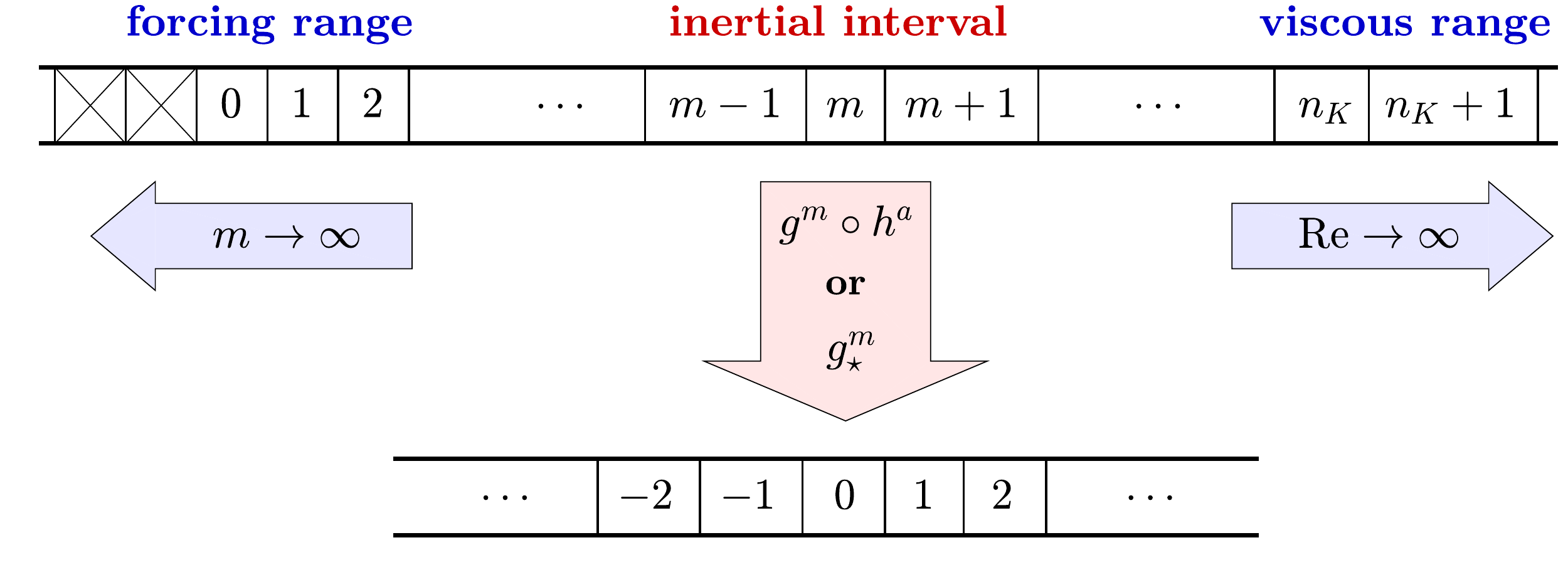}
\caption{Shell model describes a wide inertial interval, which separates a forcing range (shell numbers around zero) from a viscous range (shell numbers around $n_K = \log_2 \mathrm{Re}^{3/4}$). The double limit (\ref{eq3_T2lim}) extends the inertial interval to all shells. For invariant probability measures, this limit is expressed using spatiotemporal scalings $g^m \circ h^a$ in the original system or normalized scalings $g^m_\star$ in the normalized system.}
\label{figHS}
\end{figure}

For the scaling analysis, it is convenient to define shell variables for all $n \in \mathbb{Z}$. This can be done by assigning some constant values to $u_n$ with $n < 0$; see the crossed cells in Fig.~\ref{figHS}. We consider the system in a statistical equilibrium, i.e., assuming the existence of invariant measures denoted by $\mu^{\mathrm{Re}}$ and dependent on the Reynolds number. 
Using (\ref{eq3_4a}) and (\ref{eq3_4b}) we introduce the measure obtained by a combination of spatial and temporal scalings as
	\begin{equation}
	\label{eq3_T2a}
	s_\sharp \mu^{\mathrm{Re}}, \quad 
	s = g^m \circ h^a.
	\end{equation}
Notice that the choice $a = k_m^{2/3}$ corresponds to the scaling $u_n \mapsto k_m^{1/3}u_{n+m}$ assumed in the Kolmogorov theory~\cite{frisch1999turbulence,biferale2003shell}.
According to (\ref{eq3_4b}), the transformation $g^m$ performs the shift of shell numbers $n \mapsto n+m$. Hence, the inertial interval (\ref{eq3_2ex}) for measure (\ref{eq3_T2a}) is given by the conditions 
	\begin{equation}
	\label{eq3_C1ex}
	1 \ll k_{n+m} \ll \mathrm{Re}^{3/4}.
	\end{equation}
This condition is satisfied asymptotically for any $n \in \mathbb{Z}$ by considering the double limit 
	\begin{equation}
	\label{eq3_T2lim}
	\lim_{m \to \infty} \lim_{{\mathrm{Re}} \to \infty} 
	\end{equation}
Here, the first limit $\mathrm{Re} \to \infty$ moves the viscous range $k_{n+m} \sim \mathrm{Re}^{3/4}$ to infinitely large positive shells $n \to +\infty$, and the second limit $m \to \infty$ moves the forcing range $k_{n+m} \sim 1$ to infinitely large negative shells $n \to -\infty$; see Fig.~\ref{figHS}. Notice the importance of the limit order in this argument. 

As we mentioned above, the dynamics in the inertial interval is governed by the ideal system (\ref{eq3_2}). Hence, we can use the symmetry properties described in Sections~\ref{subsec_shell_sym} and \ref{subsec_shell_norm}. It is known that spatiotemporal scaling symmetries $s = g^m \circ h^a$ from (\ref{eq3_T2a}) are all broken in the inertial interval as a consequence of the intermittency phenomenon~\cite{frisch1999turbulence,biferale2003shell}, which signifies that the limit (\ref{eq3_T2lim}) of the measure (\ref{eq3_T2a}) does not exist. We now argue, that a similar limit may exist for the scaled normalized measure, which is defined according to (\ref{eq3_7B}) as
	\begin{equation}
	\label{eq3_T2bRe}
	g_\star^m \nu^{\mathrm{Re}} 
	= (P \circ g^m)_\sharp\mu^{\mathrm{Re}}_C,
	\quad C = A \circ g^m.
	\end{equation}
Precisely, the asymptotic symmetry condition is formulated as 

\begin{definition}
\label{conj1}
We say that the statistical stationary state of the shell model has a hidden scaling symmetry if the double limit  
	\begin{equation}
	\label{eq3_T2}
	\nu^\infty = \lim_{m \to \infty} \lim_{{\mathrm{Re}} \to \infty} 
	g^m_\star\nu^{\mathrm{Re}}
	\end{equation}
converges weakly {\color{black}(for a proper, e.g., standard product topology).} The limiting measure $\nu^\infty$ is symmetric: 
	\begin{equation}
	\label{eq3_T2b}
	g_\star \nu^\infty = \nu^\infty.
	\end{equation}
\end{definition}

Notice that (\ref{eq3_T2b}) follows from (\ref{eq3_T2}) and the group property (\ref{eq2_A7c}) in Corollary~\ref{theorem1}. We consider the convergence in (\ref{eq3_T2}) as a conjecture. Despite we are unable to prove it (the limit of high Reynolds numbers is still not well understood for the shell model), the hidden scaling symmetry can be tested by numerical simulations.

\subsection{Numerical results}\label{subsec_shell_nr}

Here we present a brief account of numerical results supporting the conjecture of hidden scaling symmetry; we refer to~\cite{mailybaev2020hidden} for further details on numerical simulations and statistical analysis. For approximating the limit (\ref{eq3_T2}), we took the very high Reynolds number $\mathrm{Re} = 2.5 \times 10^{11}$ leading to the large inertial interval $1 \ll k_m \ll \mathrm{Re}^{3/4} \approx k_{28}$. Equations (\ref{eq3_1}) and (\ref{eq3_1b}) with the forcing terms $f_0 = 2f_1$ and $f_1 = 1+i$ were integrated numerically for the variables $u_0,\ldots,u_{39}$ (with $u_n = 0$ for $n \ge 40$) in the large time interval $0 \le t \le 100$. 

Statistical properties of the normalized measure $g^m_\star\nu^{\mathrm{Re}}$ from (\ref{eq3_T2bRe}) can be accessed using Proposition~\ref{prop_erg}, which relates averages in the original and normalized system; see (\ref{eq2_Avr1})--(\ref{eq2_Avr3}). The results presented below are obtained by means of temporal averages, assuming that the temporal and statistical ensemble averages are equal (ergodicity property); the latter is a usual though not rigorously proven assumption. Using relations of Section~\ref{subsec_shell_norm}, analysis of the normalized measure $g^m_\star\nu^{\mathrm{Re}}$ reduces to computing temporal averages of the normalized and scaled shell velocities $U_n^{(m)}$ as functions of the normalized and scaled times $\tau^{(m)}$. Since the Reynold number is already taken very large, we test the convergence of the limit (\ref{eq3_T2}) by verifying that probability density functions (PDFs) of the variables $U_n^{(m)}$ do not depend on $m$ in the inertial interval. 

Figure~\ref{fig3} shows PDFs for the normalized velocities $U_{-2}^{(m)}$, $U_{-1}^{(m)}$, $U_0^{(m)}$ and $U_1^{(m)}$ for ten different values $m = 12,\ldots,21$ chosen in the central part of the inertial interval. The coincidence of curves for different $m$ provides a clear evidence of convergence. Figure~\ref{fig4} compares PDFs for $U_0^{(m)}$ with PDFs for the rescaled variables $k_m^{1/3}u_m$ considered in the Kolmogorov theory~\cite{frisch1999turbulence,biferale2003shell}; see also (\ref{eq3_T2a}) and the related discussion in Section~\ref{subsec_conj}. While Fig.~\ref{fig4}(a) confirms self-similarity for the normalized variable $U_0^{(m)}$ up to numerical fluctuations, Fig.~\ref{fig4}(b) demonstrates the symmetry breaking (a persistent drift of PDFs with a change of $m$) {\color{black}for the original variables}. 

\begin{figure}[t]
\centering
\includegraphics[width=0.4\textwidth]{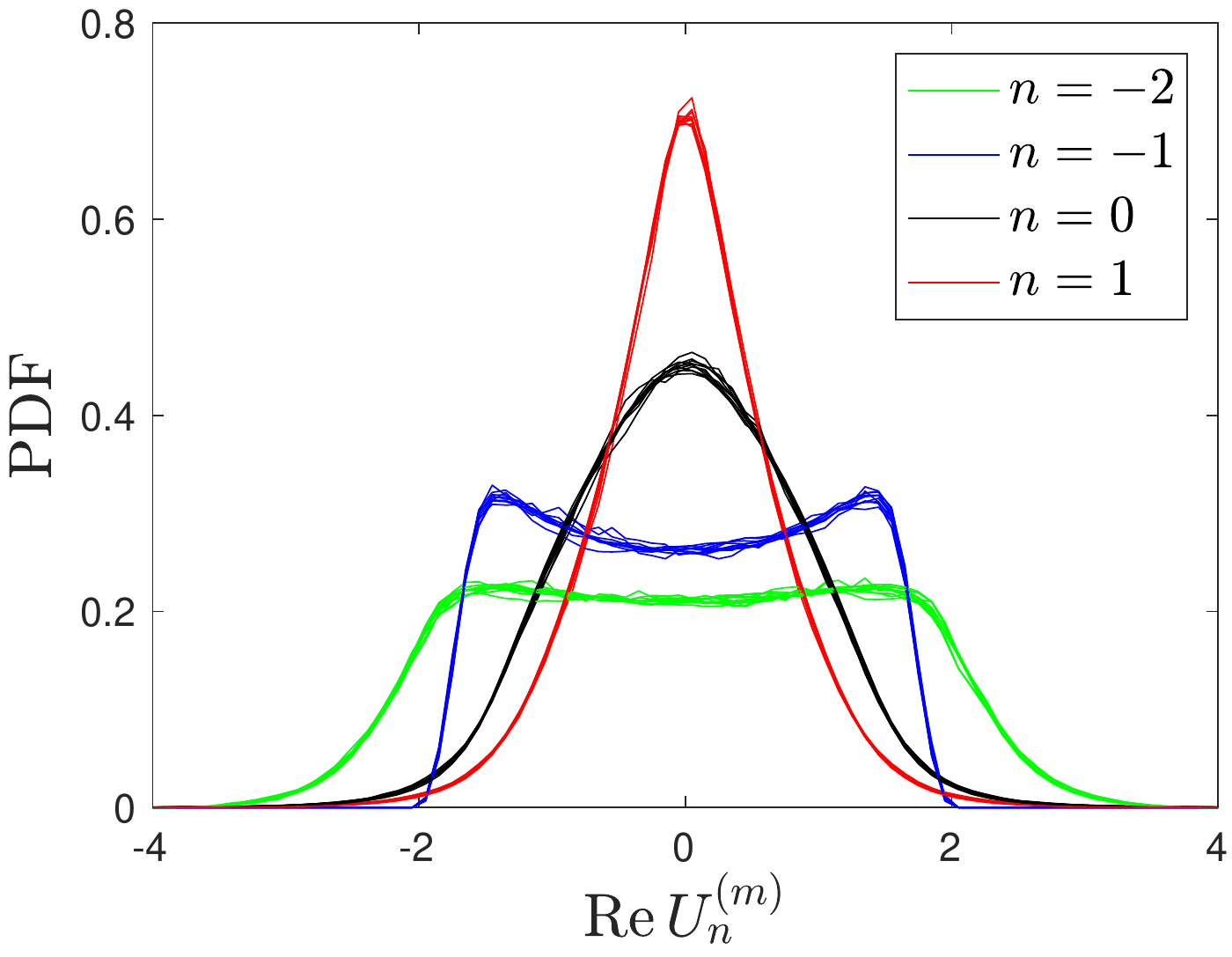}
\caption{PDFs of real parts of normalized and scaled shell velocities $U_{-2}^{(m)}$, $U_{-1}^{(m)}$, $U_0^{(m)}$ and $U_1^{(m)}$ (green, blue, black and red) computed numerically. For each velocity, ten PDFs are shown for $m = 12,\ldots,21$ in the inertial range. The collapse of PDFs onto a single profile verifies the hidden scaling symmetry. }
\label{fig3}
\end{figure}
\begin{figure}
\centering
\includegraphics[width=0.4\textwidth]{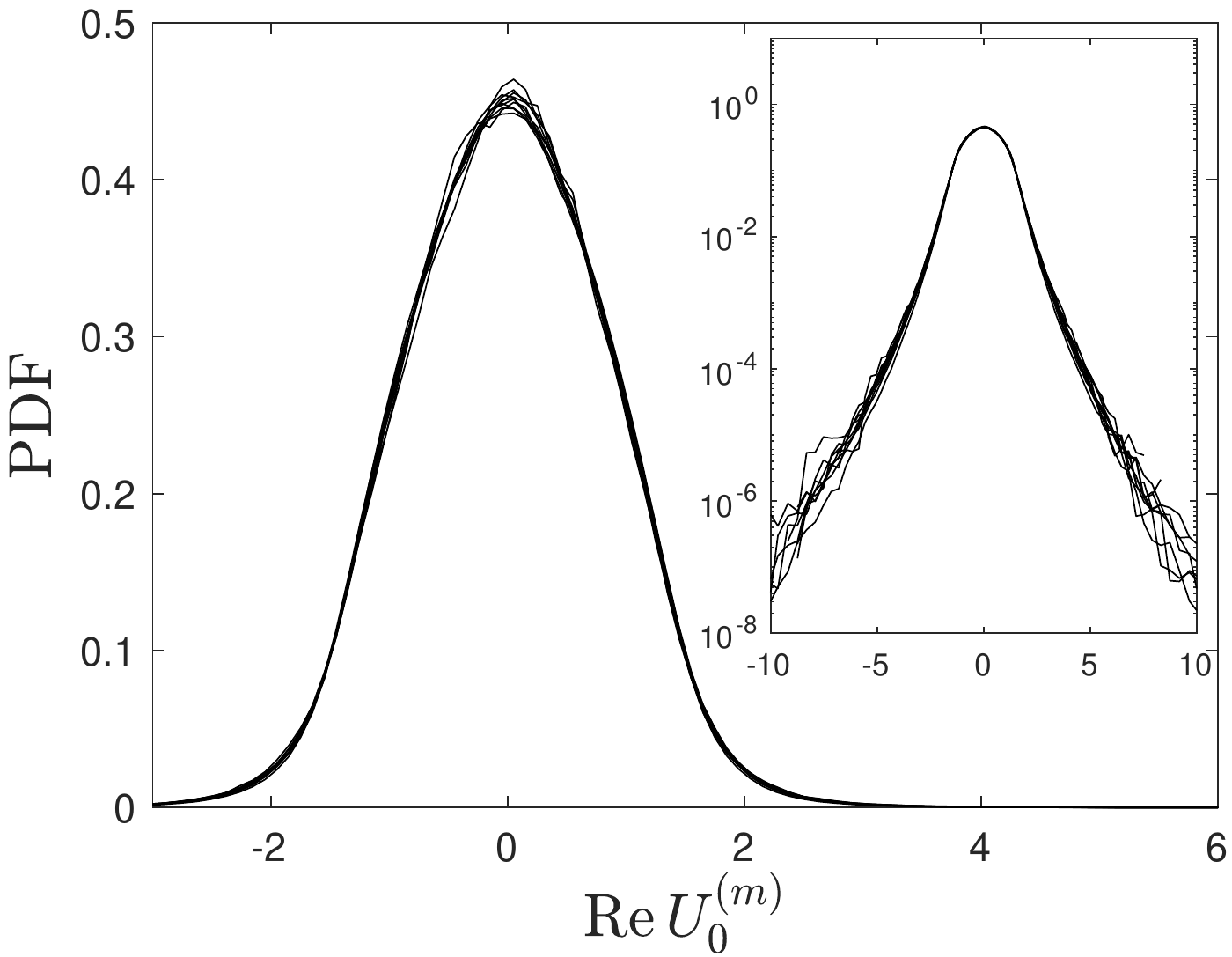}
\put(-95,165){(a)}
\hspace{10mm}
\includegraphics[width=0.4\textwidth]{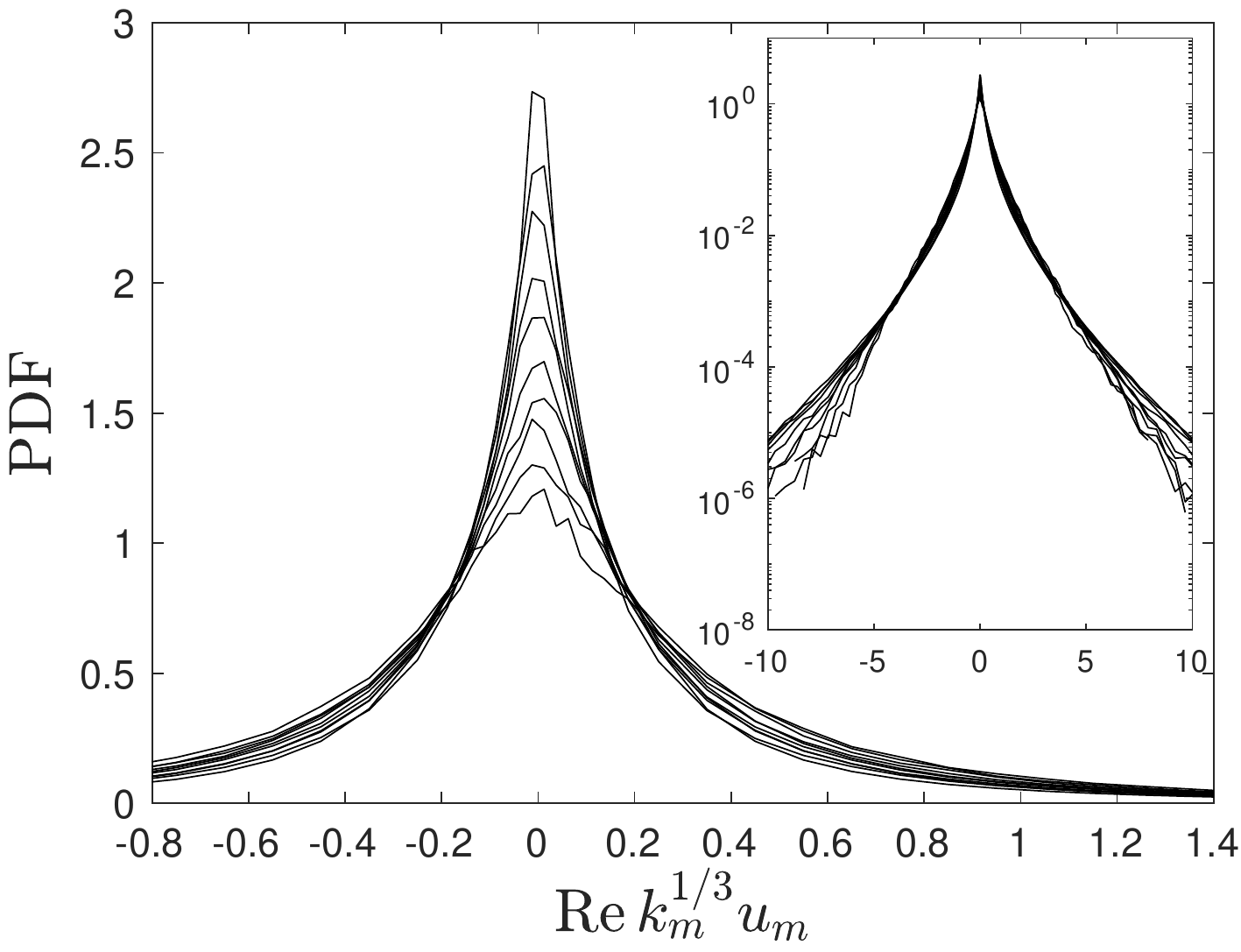}
\put(-100,165){(b)}
\caption{PDFs for real parts of (a) self-similar normalized velocity $U_0^{(m)}$ and (b) the shell velocities rescaled according to the Kolmogorov theory as $k_m^{1/3}u_m$ and demonstrating a symmetry breaking. Both figures show numerical results for $m = 12,\ldots,21$. The insets present the same graphs with a vertical logarithmic scale. }
\label{fig4}
\end{figure}

The hidden scaling symmetry reveals an interesting connection with so-called Kolmogorov multipliers~\cite{kolmogorov1962refinement,chen2003kolmogorov}. For the shell model, these multipliers are defined as ratios $u_{m+1}/u_m$.
Using numerical simulations, statistics of the Kolmogorov multipliers was shown to be universal~\cite{benzi1993intermittency,eyink2003gibbsian}, i.e., independent of the shell number $m$ in the inertial range. The multipliers can be expressed in terms of normalized shell velocities {\color{black}given by (\ref{eq3_7E}) and (\ref{eq3_A2})} as
  	\begin{equation}
	\label{eq3_13}
	 \frac{u_{m+1}}{u_m} = \frac{U_1^{(m)}}{U_0^{(m)}}.
	\end{equation}
According to our numerical observations, the right-hand side in (\ref{eq3_13}) has a universal (independent of $m$) statistics with respect to time $\tau^{(m)}$ due to the hidden scaling symmetry. 

{\color{black}In fact, the right-hand side in (\ref{eq3_13}) can also be replaced by $U_{j+1}^{(m-j)}/U_j^{(m-j)}$ for any integer $j$. This generalization provides universal statistics with respect any fixed time $\tau^{(m-j)}$ and justifies the earlier results~\cite{benzi1993intermittency,eyink2003gibbsian} on universal statistics of multipliers with respect to the original time $t$ as follows. Increasing $j$ yields a large separation between the scale of multiplier $u_{m+1}/u_m$ and the much larger scale of time $\tau^{(m-j)}$. It is natural to expect that the resulting statistics become independent of $j$ for large-scale times $\tau^{(m-j)}$, therefore, becoming the same as for the original time $t$. We refer to \cite{mailybaev2022hidden}, where such derivation is carried out in more detail for the Navier--Stokes system.}

\section{Intermittency}
\label{sec_int}

In fluid dynamics, intermittency refers to irregular alternation between {\color{black}concentrated turbulent and extended laminar-like motions} at high Reynolds numbers, which is traditionally quantified as anomalous scaling of structure functions~\cite{frisch1999turbulence}. For the shell model from Section~\ref{sec_shell1}, the structure function of degree $p > 0$ is defined as
  	\begin{equation}
	\label{eq3_14}
	S_p(k_n) = \int |u_n|^p d\mu,
	\end{equation}
which depends on the wavenumber $k_n = 2^n$ and a probability measure $\mu$ of a statistically stationary state. In the inertial interval (\ref{eq3_2ex}), structure functions feature the asymptotic power law scaling
  	\begin{equation}
	\label{eq3_15}
	S_p(k_n) \propto k_n^{-\zeta_p}.
	\end{equation}
The exponents can be defined by the double limit 
  	\begin{equation}
	\label{eq3_15ex}
	\zeta_p = -\lim_{k_n \to \infty} \lim_{\mathrm{Re} \to \infty} \frac{\log S_p(k_n)}{\log k_n}.
	\end{equation}
As described in Section~\ref{subsec_conj}, this limit corresponds to large wavenumbers in the asymptotically infinite inertial interval (\ref{eq3_2ex}). The \textit{anomaly} is understood as a nonlinear dependence of $\zeta_p$ on $p$, deviating from the prediction $\zeta_p = p/3$ of the Kolmogorov theory and implying the  broken scale invariance~\cite{frisch1999turbulence,falkovich2009symmetries}. 

Analysis of this section is based on the theory of Section~\ref{sec4}. Namely, we consider an invariant probability measure $\mu$ on $\mathcal{X}$ with the symmetry groups $\mathcal{H}_{\mathrm{ts}}$ and $\mathcal{G}$, which define the normalized system on the representative set $\mathcal{Y}$. Introducing generalized structure functions, we relate intermittency to the hidden scaling symmetry. The main result is a formula for anomalous exponents, which is obtained as a consequence of the scaling symmetry of the normalized invariant measure. {\color{black}Being derived within a general group--theoretical formulation, this result is applicable to different turbulence models; see \cite{mailybaev2021solvable,mailybaev2022shell} for different types of shell models and Section~\ref{subsec_NST} discussing the application to the Navier--Stokes system.

\subsection{Generalized structure functions}

Let $F: \mathcal{X} \mapsto \mathbb{R}$ be a measurable function with the homogeneity property
  	\begin{equation}
	\label{eq4_2}
	F \circ h^a(x) = \frac{F(x)}{a^p}
	\end{equation}
for any $h^a \in \mathcal{H}_{\mathrm{ts}}$ and $x \in \mathcal{X}$. Let us also fix an arbitrary symmetry $g \in \mathcal{G}$. 
We introduce the corresponding generalized structure function of order $p$ as
  	\begin{equation}
	\label{eq3_17}
	S_p(k_n) 
	= \frac{1}{k_n^p} \int F\circ g^n d\mu,
	\end{equation}
where $k_n = 2^n$.} 

{\color{black}In applications to turbulence, we interpret $g$ as a space scaling map, which doubles spatial resolution.} 
For the shell model example, structure functions (\ref{eq3_14}) are recovered by taking $F(x) = |u_0|^p$ with the symmetries (\ref{eq3_4a}) and (\ref{eq3_4b}). {\color{black}A similar representation for the Navier--Stokes system is discussed in Section~\ref{subs_6_4}.

First, let us describe the non-intermittent case, when the original measure $\mu$ is symmetric.}

\begin{proposition}
\label{prop_sym_mu}
Consider a measure $\mu$ satisfying the symmetry condition 
  	\begin{equation}
	\label{eq4_L1b}
	(g \circ h^a)_\sharp \mu = \mu
	\end{equation}
for some $g \in \mathcal{G}$ and $h^a \in \mathcal{H}_{\mathrm{ts}}$.
If $\int F d\mu$ is finite and nonzero, then structure functions (\ref{eq3_17}) have the power law scaling (\ref{eq3_15}) with the exponents
  	\begin{equation}
	\label{eq4_L1c}
	\zeta_p = (1- \log_2 a)p.
	\end{equation}
\end{proposition}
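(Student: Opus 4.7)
The plan is to reduce the computation of $S_p(k_n)$ to a single integral $\int F\,d\mu$ by iterating the symmetry $(g\circ h^a)_\sharp\mu = \mu$ and then applying the homogeneity (\ref{eq4_2}) of $F$. Since $g$ and $h^a$ commute by (\ref{eq2_2com}), an easy induction gives $(g\circ h^a)^n = g^n\circ h^{a^n}$, using also $h^{a_1}\circ h^{a_2} = h^{a_1 a_2}$ from (\ref{eq2_0ha}). Applying the symmetry $n$ times, I would obtain $(g^n\circ h^{a^n})_\sharp\mu = \mu$, hence
\begin{equation*}
\int F\,d\mu \;=\; \int F\circ g^n\circ h^{a^n}\,d\mu.
\end{equation*}

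Next, using commutativity once more to rewrite $g^n\circ h^{a^n} = h^{a^n}\circ g^n$, and then applying the homogeneity relation (\ref{eq4_2}) iterated to $(F\circ h^{a^n})(y) = F(y)/a^{np}$ with $y = g^n(x)$, I would conclude
\begin{equation*}
\int F\circ g^n\,d\mu \;=\; a^{np}\int F\,d\mu.
\end{equation*}
Substituting this into definition (\ref{eq3_17}) and noting that $a^{np} = k_n^{p\log_2 a}$, since $k_n = 2^n$, gives
\begin{equation*}
S_p(k_n) \;=\; \frac{a^{np}}{k_n^p}\int F\,d\mu \;=\; k_n^{-(1-\log_2 a)p}\int F\,d\mu,
\end{equation*}
which is exactly the scaling (\ref{eq3_15}) with $\zeta_p = (1-\log_2 a)p$. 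Finiteness and non-vanishing of the prefactor follow from the hypothesis on $\int F\,d\mu$.

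There is no real obstacle in this proof; it is a direct bookkeeping exercise once one notices that the single symmetry $(g\circ h^a)_\sharp\mu=\mu$ bootstraps to a full discrete family of symmetries $(g^n\circ h^{a^n})_\sharp\mu=\mu$ through commutation. The only thing to watch is the order of composition and the correct interaction of $F$'s homogeneity with $g^n$, which is why invoking the commutation relation $g\circ h^a = h^a\circ g$ at the right moment is essential. This calculation also explains the role of the hypothesis: intermittency (nonlinear $\zeta_p$) can only arise when the combined symmetry $(g\circ h^a)_\sharp\mu=\mu$ fails for all $a>0$, motivating the subsequent derivation based instead on the hidden symmetry $g_\star\nu=\nu$.
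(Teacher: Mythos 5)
Your proof is correct and follows essentially the same route as the paper's: both arguments rest on the commutation $g\circ h^a = h^a\circ g$, the group law $h^{a_1}\circ h^{a_2}=h^{a_1a_2}$, iterating the symmetry $(g\circ h^a)_\sharp\mu=\mu$, and the homogeneity (\ref{eq4_2}) of $F$; the paper merely packages the same algebra as the identity $g^n = h^{1/a^n}\circ(g\circ h^a)^n$ substituted directly into (\ref{eq3_17}), whereas you solve $\int F\,d\mu = a^{-np}\int F\circ g^n\,d\mu$ for the integral you need. The two computations are identical up to reordering.
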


\begin{proof}
Using properties (\ref{eq2_0ha}) and (\ref{eq2_2com}), we express
  	\begin{equation}
	\label{eq4_L1_g}
	g^n = h^{1/a^n} \circ (g\circ h^a)^n.
	\end{equation}
Substituting this formula into (\ref{eq3_17}) yields
  	\begin{equation}
	\label{eq4_L1}
	S_p(k_n)  = 
	\frac{1}{k_n^p}\int F \circ h^{1/a^n} \circ (g\circ h^a)^n \, d\mu.
	\end{equation}
Using the change-of-variables formula for a push-forward measure with symmetry assumption (\ref{eq4_L1b}) and relation (\ref{eq4_2}) yields
  	\begin{equation}
	\label{eq4_L1aa}
	S_p(k_n)  = 
	\frac{1}{k_n^p}\int F \circ h^{1/a^n}\, d\mu
	= \frac{a^{np}}{k_n^p}\int F d\mu 
	= k_n^{-\zeta_p} \int F d\mu
	\end{equation}
with exponent (\ref{eq4_L1c}), where we took into account that $k_n = 2^n$. 
\end{proof}

For example, by taking $a = 2^{2/3}$, Proposition~\ref{prop_sym_mu} recovers the exponents $\zeta_p = p/3$ of the Kolmogorov  theory~\cite{frisch1999turbulence}. As we already mentioned, it is known that the exponents $\zeta_p$ depend nonlinearly on $p$ in the intermittent turbulence~\cite{frisch1985singularity,frisch1999turbulence} and, hence, all symmetries (\ref{eq4_L1b}) must be broken. 

{\color{black}
\subsection{Structure functions in terms of multipliers}
\label{subsec_iter}

We turn now to the intermittent case assuming that, despite all scaling symmetries are broken for the measure $\mu$, the hidden scaling symmetry is recovered for the normalized measure $\nu$.
In addition to (\ref{eq3_17}), we introduce the normalized structure functions 
  	\begin{equation}
	\label{eq4_V4}
	N_p(k_n) 
	= \frac{1}{k_n^p} \int F\circ g^n d\nu,
	\end{equation}
in which the integration is performed with respect to the normalized measure $\nu$ on the representative set $\mathcal{Y}$. In the next Subsection~\ref{sec_intS} we will see that structure functions (\ref{eq3_17}) and (\ref{eq4_V4}) have the same scaling asymptotics. In the present technical subsection, we derive iterative relations for the integral (\ref{eq4_V4}), which are used later for determining the scaling exponents $\zeta_p$. 

Our description uses the idea of Kolmogorov multipliers~\cite{kolmogorov1962refinement,benzi1993intermittency,chen2003kolmogorov}, which are ratios of velocity increments at different scales. Given a state $x$, we introduce the generalized multiplier $\sigma_n(x)$ as a similar ratio
  	\begin{equation}
	\label{eq4_T5}
	\sigma_n(x) = \frac{A\circ g^{n+1}(x)}{A\circ g^{n}(x)},
	\end{equation}
where the function $A(x)$ is computed at two different scales defined by the scaling maps $g^{n+1}$ and $g^{n}$. The important property of multipliers is that they are invariant with respect to time scalings: for $x' = h^a(x)$ with any $a > 0$ one has
  	\begin{equation}
	\label{eq4_T5eq}
	\sigma_n(x') = \frac{A\circ g^{n+1}\circ h^a (x)}{A\circ g^{n}\circ h^a(x)}
	= \frac{A \circ h^a \circ g^{n+1} (x)}{A \circ h^a \circ g^{n}(x)} = \frac{A\circ g^{n+1}(x)/a}{A\circ g^{n}(x)/a} = \sigma(x),\quad 
	x' = h^a(x),
	\end{equation}
where we used commutativity of $g$ with $h^a$ and (\ref{eq2_2Ap}). In particular, $\sigma(x) = \sigma(y)$ for $y = P(x)$.

We will use the sequences 
  	\begin{equation}
	\label{eqIS_3}
	\boldsymbol{\sigma}_- = (\sigma_{-1},\sigma_{-2},\ldots) \in \mathcal{S}_-
	\end{equation}
considered as infinite-dimensional vectors in the space $\mathcal{S}_- = \mathbb{R}_+^{\infty}$ with the standard product topology. Adding extra components $\sigma_0$ and $f$, we similarly introduce the sequences
  	\begin{equation}
	\label{eqIS_3sqB}
	\boldsymbol{\sigma}_\ominus = (\sigma_0,\boldsymbol{\sigma}_-) = (\sigma_0,\sigma_{-1},\sigma_{-2},\ldots) \in \mathbb{R}_+ \times \mathcal{S}_-.
	\end{equation}
  	\begin{equation}
	\label{eqIS_3sqC}
	\boldsymbol{\phi} = (f,\boldsymbol{\sigma}_\ominus) = (f,\sigma_0,\sigma_{-1},\sigma_{-2},\ldots) \in \mathbb{R} \times\mathbb{R}_+ \times \mathcal{S}_-.
	\end{equation}
Functions $\sigma_n(y)$ from (\ref{eq4_T5}) and $f = F(y)$ define the mappings from $\mathcal{Y}$ to the spaces (\ref{eqIS_3})--(\ref{eqIS_3sqC}), which we denote as
  	\begin{equation}
	\label{eqIS_3mA}
	\boldsymbol{\sigma}_- = \mathbf{P}_-(y),\quad
	\boldsymbol{\sigma}_\ominus = \mathbf{P}_\ominus(y),\quad
	\boldsymbol{\phi} = \mathbf{P}_\phi (y).
	\end{equation}
We will also need the shift map $\mathbf{S}: \mathbb{R}_+ \times \mathcal{S}_- \mapsto \mathcal{S}_-$ defined by the relations
  	\begin{equation}
	\boldsymbol{\sigma}'_- = \mathbf{S}(\boldsymbol{\sigma}_\ominus),\quad \sigma'_{n} = \sigma_{n+1}, \quad n < 0.
	\label{eqIS_3shift}
	\end{equation}

Let us denote the scaled normalized measures as
  	\begin{equation}
	\label{eqIS_4}
	\nu^{(n)} = g^n_\star \nu,
	\end{equation}
where the hidden scaling operator $g_\star$ is given by expression (\ref{eq2_A7}) of Theorem~\ref{prop_g}.
We denote the images (pushforwards) of $\nu^{(n)}$ in the spaces (\ref{eqIS_3})--(\ref{eqIS_3sqC}) as
  	\begin{equation}
	\label{eqIS_4m}
	\nu^{(n)}_-(\boldsymbol{\sigma}_-) = (\mathbf{P}_-)_\sharp \nu^{(n)},\quad
	\nu^{(n)}_\ominus(\boldsymbol{\sigma}_\ominus) = (\mathbf{P}_\ominus)_\sharp \nu^{(n)},\quad
	\nu^{(n)}_\phi(\boldsymbol{\sigma}_\phi) = (\mathbf{P}_\phi)_\sharp \nu^{(n)},
	\end{equation}
where we specified the corresponding space variables in the parentheses. 
Finally, we assume that there exist conditional probability densities given by measurable functions $\rho^{(n)}_0(\sigma_0|\boldsymbol{\sigma}_-)$ and $\rho^{(n)}_F(f|\boldsymbol{\sigma}_\ominus)$ satisfying the standard defining relations
  	\begin{equation}
	\label{eqIS_5}
	d\nu^{(n)}_\ominus(\boldsymbol{\sigma}_\ominus) = \rho^{(n)}_0(\sigma_0|\boldsymbol{\sigma}_-)\, d\sigma_0\,d\nu^{(n)}_-(\boldsymbol{\sigma}_-),\quad
	d\nu^{(n)}_\phi(\boldsymbol{\sigma}_\phi) = \rho^{(n)}_F(f|\boldsymbol{\sigma}_\ominus)\, df\,d\nu^{(n)}_\ominus(\boldsymbol{\sigma}_\ominus).
	\end{equation}
Here the assumption that the densities are measurable is taken for convenience; one can use conditional measures provided by the disintegration theorem in a more general situation; see e.g.~\cite{chang1997conditioning}.  

The following theorem formulates structure functions in terms of multipliers; for the proof see Subsection~\ref{subs_pr4}. 

\begin{theorem}
\label{theorem4}
Generalized structure functions (\ref{eq4_V4}) for $n \ge 0$ can be expressed in the form
  	\begin{equation}
	\label{eqIS_6}
	N_p(k_n) 
	= \int f\, \rho^{(n)}_F(f|\boldsymbol{\sigma}_\ominus) \,df \,d\lambda_p^{(n)}(\boldsymbol{\sigma}_\ominus),
	\end{equation}
where 
  	\begin{equation}
	\label{eqIS_6b}
	d\lambda_p^{(n)}(\boldsymbol{\sigma}_\ominus)
	= \frac{c_n}{k_n^p} 
	\left(\prod_{j = 1}^{n} \sigma_{-j}^{p-1}\right)d\nu_\ominus^{(n)}(\boldsymbol{\sigma}_\ominus),\quad
	c_n = \prod_{j = 1}^{n} \int A \circ g\, d\nu^{(j-1)}.
	\end{equation}
The measures $\lambda_p^{(n)}(\boldsymbol{\sigma}_\ominus)$ satisfy the iterative relations with $\lambda_p^{(0)} = \nu^{(0)}_\ominus$ and
  	\begin{equation}
	\label{eqIS_7}
	d\lambda_p^{(n+1)}(\boldsymbol{\sigma}_\ominus)
	= 
	2^{-p}\sigma_{-1}^p\, 
	\rho_0^{(n+1)}(\sigma_0|\boldsymbol{\sigma}_-)
	 \,d\sigma_0 \,d\Lambda_p^{(n)}(\boldsymbol{\sigma}_-),
	\quad 
	\Lambda_p^{(n)}(\boldsymbol{\sigma}_-) = \mathbf{S}_\sharp \lambda_p^{(n)}(\boldsymbol{\sigma}_\ominus),
	\end{equation}
where the measure $\Lambda_p^{(n)}(\boldsymbol{\sigma}_-)$ is obtained using the shift operator (\ref{eqIS_3shift}).
\end{theorem}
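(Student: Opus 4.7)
My plan is to derive the theorem from three ingredients that decouple cleanly: the homogeneity (\ref{eq4_2}) of $F$, an iterated version of Theorem~\ref{prop_g}, and a shift property $\sigma_k \circ P \circ g = \sigma_{k+1}$ for the generalized multipliers. The last identity is immediate from the commutation $h^a \circ g = g \circ h^a$, the projector identity (\ref{eq2_PP}), and the scaling property (\ref{eq2_2Ap}). Once these are in place, the proof is mostly bookkeeping. The real subtlety is a Jacobian--exponent trade: each step of the iterated change of measure in Theorem~\ref{prop_g} contributes one factor $A \circ g = \sigma_0$ (valid on $\mathcal{Y}$ since $A(y) = 1$), and the combination of $n$ such Jacobian factors with the homogeneity factor $(A \circ g^n)^p$ is exactly what turns the naive power $p$ into the $p-1$ appearing in (\ref{eqIS_6b}).

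\textbf{Step 1 (iterated change of measure).} By induction on $n$, I would prove the identity
\begin{equation*}
\int \varphi\, d\nu^{(n)} = \frac{1}{c_n}\int \varphi \circ P \circ g^n(y)\,\prod_{j=0}^{n-1} \sigma_j(y)\, d\nu(y)
\end{equation*}
for any measurable test function $\varphi$ on $\mathcal{Y}$. The case $n=0$ is trivial. The inductive step uses Theorem~\ref{prop_g} (via Corollary~\ref{theorem1}) in the form $\nu^{(n+1)} = (P \circ g)_\sharp \nu^{(n)}_{A \circ g}$ together with the iterated shift $\sigma_k \circ P \circ g^n = \sigma_{k+n}$, so that $\sigma_0 \circ P \circ g^n = \sigma_n$ supplies exactly the new multiplier factor produced at each step.

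\textbf{Step 2 (normalized form of $N_p$).} The projector identity (\ref{eq2_2P}) gives $g^n(y) = h^{1/A(g^n(y))} \circ P \circ g^n(y)$; combined with homogeneity (\ref{eq4_2}) and the telescoping $A \circ g^n(y) = \prod_{j=0}^{n-1}\sigma_j(y)$ on $\mathcal{Y}$, this yields
\begin{equation*}
F \circ g^n(y) = \prod_{j=0}^{n-1}\sigma_j(y)^p\; F \circ P \circ g^n(y).
\end{equation*}
Plugging this into $N_p(k_n) = k_n^{-p}\int F \circ g^n\,d\nu$ and applying Step 1 with $\varphi(y') = F(y')\prod_{i=1}^n \sigma_{-i}(y')^{p-1}$, the shift identity $\sigma_{-i} \circ P \circ g^n = \sigma_{n-i}$ converts $\prod_{i=1}^n\sigma_{-i}(P\circ g^n(y))^{p-1}$ into $\prod_{j=0}^{n-1}\sigma_j(y)^{p-1}$, which combines with the $\prod_{j=0}^{n-1}\sigma_j(y)$ produced by Step 1 to reconstruct $\prod_{j=0}^{n-1}\sigma_j(y)^p$. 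The result is
\begin{equation*}
N_p(k_n) = \frac{c_n}{k_n^p}\int F(y')\prod_{i=1}^n \sigma_{-i}(y')^{p-1}\,d\nu^{(n)}(y'),
\end{equation*}
and disintegrating $\nu^{(n)}$ through $\mathbf{P}_\phi$ with the conditional densities $\rho_F^{(n)}$ and $\rho_0^{(n)}$ delivers (\ref{eqIS_6}) with $\lambda_p^{(n)}$ as in (\ref{eqIS_6b}).

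\textbf{Step 3 (recursion).} The vector form $\mathbf{P}_- \circ P \circ g = \mathbf{S} \circ \mathbf{P}_\ominus$ of the shift property, together with $\nu^{(n+1)} = (P \circ g)_\sharp \nu^{(n)}_{A \circ g}$, yields
\begin{equation*}
\nu^{(n+1)}_- = \mathbf{S}_\sharp\!\left[\frac{\sigma_0}{\int A \circ g\, d\nu^{(n)}}\, d\nu^{(n)}_\ominus\right].
\end{equation*}
Solving for $\mathbf{S}_\sharp d\nu^{(n)}_\ominus$ and substituting into $\Lambda_p^{(n)} = \mathbf{S}_\sharp \lambda_p^{(n)} = \frac{c_n}{k_n^p}\prod_{j=2}^{n+1}\sigma_{-j}^{p-1}\,\mathbf{S}_\sharp d\nu^{(n)}_\ominus$ gives
\begin{equation*}
d\nu^{(n+1)}_- = \frac{\sigma_{-1}\, k_n^p}{c_{n+1}\prod_{j=2}^{n+1}\sigma_{-j}^{p-1}}\, d\Lambda_p^{(n)}.
\end{equation*}
Factoring $d\nu^{(n+1)}_\ominus = \rho_0^{(n+1)}(\sigma_0|\boldsymbol{\sigma}_-)\,d\sigma_0\,d\nu^{(n+1)}_-$ and inserting into (\ref{eqIS_6b}) at level $n+1$, the factors $c_{n+1}$ and $\prod_{j=2}^{n+1}\sigma_{-j}^{p-1}$ cancel, and the remaining prefactors collapse to $(k_n^p/k_{n+1}^p)\,\sigma_{-1}^{p-1}\cdot\sigma_{-1} = 2^{-p}\sigma_{-1}^p$, producing exactly (\ref{eqIS_7}). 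The main obstacle throughout is keeping the index translation $\sigma_j(y) \leftrightarrow \sigma_{j-n}(P \circ g^n(y))$ and the nested normalizations $c_n$ consistent; once this bookkeeping is set up the computations are mechanical.
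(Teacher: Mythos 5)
Your proposal is correct and follows essentially the same route as the paper's proof: the same three ingredients (the multiplier shift identity $\sigma_k\circ P\circ g=\sigma_{k+1}$, the homogeneity trade $F\circ g^n(y)=(A\circ g^n(y))^p\,F\circ P\circ g^n(y)$, and the iterated relation $\nu^{(n+1)}=(P\circ g)_\sharp\nu^{(n)}_{A\circ g}$) drive both arguments, and your Step 3 recursion is the paper's derivation verbatim. The only difference is organizational: you package the $n$-fold change of measure as a standalone test-function identity and telescope the homogeneity factor in one shot, whereas the paper's Lemma~\ref{lem1} runs a single induction that peels off one scale at a time inside the structure-function integral; the bookkeeping is equivalent.
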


We remark that expression (\ref{eqIS_7}) can be written as
  	\begin{equation}
	\label{eqIS_9}
	\lambda_p^{(n+1)} = \mathcal{L}_p^{(n+1)} [\lambda_p^{(n)}],
	\end{equation}
where $\mathcal{L}_p^{(n+1)}$ is a linear operator acting on measures $\lambda_p^{(n)}(\boldsymbol{\sigma}_\ominus)$. Notice that this operator does not preserve the probability property of measures, i.e., $\int d\lambda_p^{(n)} \ne 1$ in general.
}

\subsection{Anomalous exponents as Perron--Frobenius eigenvalues}
\label{sec_intS}

In this subsection, we show how the scaling power laws for structure functions appear as a consequence of the hidden scaling symmetry. 
We establish this connection in two steps. First, we relate the original (generalized) structure functions $S_p$ from (\ref{eq3_17}) with the normalized structure functions from (\ref{eq4_V4}). Then, using the symmetry condition for the normalized measure, $g_\star \nu = \nu$, and the iterative relation of Theorem~\ref{theorem4}, we derive the asymptotic power law scaling (\ref{eq3_15}) and determine the respective exponents $\zeta_p$. 

Let us use the shell model from Section~\ref{sec_shell1} as an example. System (\ref{eq3_1}) describes the evolution of shell variables $u_n(t)$ for $n \ge 0$, where $n = 0$ corresponds to the largest scale (lowest wavenumber $k_0 = 1$) of the system. {\color{black}For a proper definition of the scaling group, we must introduce the ``dummy'' shell variables} with $n < 0$; see the crossed cells in Fig.~\ref{figHS}. This is a purely formal procedure, because these variables are removed in the limit $m \to \infty$; see Fig.~\ref{figHS} and Definition~\ref{conj1} of the hidden scaling symmetry. Therefore, we are free to set $u_{-1} = 2$ and $u_n = 0$ for $n < -1$, which yields the sum $\sum_{n < 0}k_n^2|u_n|^2 = 1$. With such a choice the function $A(x)$ from (\ref{eq3_B1}) takes the constant value
  	\begin{equation}
	\label{eq4_V4ex}
	A(x) = 1
	\end{equation}
for all states $x$ of interest. According to Definition~\ref{def1} and Theorem~\ref{theorem0}, we conclude that the normalized measure $\nu$ coincides with $\mu$ and, hence, the structure functions (\ref{eq3_17}) coincide with their normalized counterparts (\ref{eq4_V4}):  
  	\begin{equation}
	\label{eq4_V4exS}
	S_p(k_n) = N_p(k_n).
	\end{equation}

One can imagine that the ``trick'' leading to (\ref{eq4_V4ex}) and (\ref{eq4_V4exS}) is applicable in other fluid models, which possess the largest (so-called integral) scale. It is related to a proper artificial extension of the state variable $x$ to larger scales. Obviously, this extension does not affect the asymptotic scaling properties referring to small-scale dynamics.

Now we turn to the hidden scaling symmetry introduced in Definition~\ref{conj1} as the double limit (\ref{eq3_T2}) in the inertial interval. This symmetry implies that the measure $\nu^{(n)}$ from (\ref{eqIS_4}) converges to the self-similar measure $\nu^\infty$, {\color{black}and the same refers to the corresponding projections (\ref{eqIS_4m}). In particular, we can write analogous limits for conditional probability densities from (\ref{eqIS_5}) as
	\begin{equation}
	\label{eq4_K3}
	\lim_{n \to \infty} \lim_{\mathrm{Re} \to \infty} \rho_0^{(n)} = \rho_0^\infty, \quad
	\lim_{n \to \infty} \lim_{\mathrm{Re} \to \infty} \rho_F^{(n)} = \rho_F^\infty.
	\end{equation}
	
We define the limiting operator $\mathcal{L}_p^\infty$ acting on measures $\lambda_p(\boldsymbol{\sigma}_\ominus)$ and corresponding to (\ref{eqIS_7})--(\ref{eqIS_9}) as
  	\begin{equation}
	\label{eq4_K3ex}
	\lambda'_p = \mathcal{L}_p^\infty[\lambda_p], \quad  
	d\lambda'_p(\boldsymbol{\sigma}_\ominus) = 
	2^{-p}\sigma_{-1}^p\, 
	\rho_0^{\infty}(\sigma_0|\boldsymbol{\sigma}_-)
	 \,d\sigma_0 \,d\Lambda_p(\boldsymbol{\sigma}_-),\quad
	 \Lambda_p(\boldsymbol{\sigma}_-) = \mathbf{S}_\sharp \lambda_p(\boldsymbol{\sigma}_\ominus).
	\end{equation}
where we replaced $\rho_0^{(n+1)}$ by its asymptotic form $\rho_0^\infty$.}
The operator $\mathcal{L}_p^\infty$ is linear and positive: it maps positive measures to positive measures. Hence, we can use the Krein--Rutman theorem under proper assumptions of compactness; see~\cite[\S 19.5]{deimling2010nonlinear} for a precise formulation. This theorem, generalizing the Perron--Frobenius theorem for matrices with positive entries {\color{black}\cite[Ch.~16]{lax2007linear},} proves the existence of the (maximum) Perron--Frobenius eigenvalue $R_p > 0$ with a positive eigenvector (probability measure) $\lambda_p^\infty$ satisfying the equation
  	\begin{equation}
	\label{eq4_KA21}
	\mathcal{L}_p^\infty[\lambda_p^\infty] = R_p \lambda_p^\infty.
	\end{equation}
{\color{black}The eigenvalue $R_p$ is simple and dominant: absolute values of all other eigenvalues of $\mathcal{L}_p^\infty$ are smaller than $R_p$ under the  assumption of strong positivity~\cite[\S 19.5]{deimling2010nonlinear}.

Let us write the iterative relations of Theorem~\ref{theorem4} in the operator form as
  	\begin{equation}
	\label{eq4_KA_it}
	\lambda_p^{(n)} = \mathcal{L}_p^{(n)} \circ \mathcal{L}_p^{(n-1)}\circ \cdots \mathcal{L}_p^{(1)} [\nu_\ominus^{(0)}], \quad
	\nu_\ominus^{(0)} = (\mathbf{P}_\ominus)_\sharp \nu.
	\end{equation}
The convergence properties (\ref{eq4_K3}) imply that the limiting operator $\mathcal{L}_p^{(n)} \to \mathcal{L}_p^\infty$ asymptotically for large $n$ in the inertial interval. Hence, the iterative procedure   (\ref{eq4_KA_it}) with a generic initial measure $\nu_\ominus^{(0)}$ converges for large $n$ to the dominant Perron--Frobenius mode. In particular,} the measures $\lambda_p^{(n)}$ converge, up to a positive scalar factor, to the Perron--Frobenius eigenvector $\lambda_p^\infty$. In this limit, each iteration reduces to multiplication by the Perron--Frobenius eigenvalue $R_p$.
 Precisely, these properties are formulated as
  	\begin{equation}
	\label{eq4_KA2}
	\lim_{n \to \infty} \lim_{\mathrm{Re} \to \infty}  
	\frac{\lambda_p^{(n)}}{\int d\lambda_p^{(n)}} = \lambda_p^\infty,\quad
	\lim_{n \to \infty} \lim_{\mathrm{Re} \to \infty}  
	\frac{\int d\lambda_p^{(n+1)}}{\int d\lambda_p^{(n)}} = R_p.
	\end{equation}
Using limits (\ref{eq4_K3}) and (\ref{eq4_KA2}) with expressions (\ref{eqIS_6}) and (\ref{eq4_V4exS}), yields the structure function asymptotically proportional to {\color{black}
  	\begin{equation}
	\label{eq4_K8PL}
	S_p(k_n) \propto R_p^n 
	\int f\, \rho^{\infty}_F(f|\boldsymbol{\sigma}_\ominus) \,df \,d\lambda_p^{\infty}(\boldsymbol{\sigma}_\ominus).
	\end{equation}
}Notice that the limits in (\ref{eq4_KA2}) are considered here as assumptions, which are naturally related to the hidden scaling symmetry. A precise formulation that guaranties the convergence would require technical details depending on a specific system under consideration.
Recalling that $k_n = 2^n$, we obtain the following formula for scaling exponents in (\ref{eq3_15}).  

\begin{corollary}\label{cor2AE}
Assuming limits (\ref{eq4_K3}) and (\ref{eq4_KA2}) and a finite nonzero value of the integral 
  	\begin{equation}
	\label{eq4_K7}
	\int f\, \rho^{\infty}_F(f|\boldsymbol{\sigma}_\ominus) \,df \,d\lambda_p^{\infty}(\boldsymbol{\sigma}_\ominus),
	\end{equation}
the structure function $S_p(k_n)$ has the asymptotic power law scaling (\ref{eq3_15}) in the inertial interval with the exponent
  	\begin{equation}
	\label{eq4_K8}
	\zeta_p = -\log_2 R_p,
	\end{equation}
where $R_p$ is the Perron--Frobenius eigenvalue; see (\ref{eq4_KA21}).
\end{corollary}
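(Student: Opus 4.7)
The plan is to combine Theorem~\ref{theorem4} with the convergence assumptions in a very direct way. Starting from $S_p(k_n)=N_p(k_n)$ by (\ref{eq4_V4exS}) and using (\ref{eqIS_6}), I would factor out the total mass $Z_n = \int d\lambda_p^{(n)}$ of the (non-probability) measure $\lambda_p^{(n)}$, writing
\begin{equation}
S_p(k_n) \;=\; Z_n \int f\,\rho^{(n)}_F(f|\boldsymbol{\sigma}_\ominus)\,df\,d\widetilde{\lambda}_p^{(n)}(\boldsymbol{\sigma}_\ominus), \quad \widetilde{\lambda}_p^{(n)} = \frac{\lambda_p^{(n)}}{Z_n}.
\label{eq:plan_split}
\end{equation}
The idea is that the integral factor converges to the finite nonzero quantity (\ref{eq4_K7}), while $Z_n$ carries the entire power-law behavior through the Perron--Frobenius eigenvalue.

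For the integral factor in (\ref{eq:plan_split}), I would invoke the first limit in (\ref{eq4_KA2}), which says $\widetilde{\lambda}_p^{(n)} \to \lambda_p^\infty$ weakly (in the relevant product topology), together with the convergence $\rho_F^{(n)} \to \rho_F^\infty$ from (\ref{eq4_K3}). Since the integrand $\int f\,\rho_F^{(n)}(f|\boldsymbol{\sigma}_\ominus)\,df$ is a function of $\boldsymbol{\sigma}_\ominus$ converging pointwise to its $\infty$-counterpart, the double integral converges to (\ref{eq4_K7}), which by assumption equals some finite nonzero constant $K$. Thus $S_p(k_n) = Z_n\,(K + o(1))$ in the double limit $\mathrm{Re}\to\infty$, $n\to\infty$.

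For the normalization factor $Z_n$, I would use the second limit in (\ref{eq4_KA2}), namely $Z_{n+1}/Z_n \to R_p$. Taking logarithms gives $\log Z_{n+1} - \log Z_n \to \log R_p$, so by a Stolz--Cesaro argument $(\log Z_n)/n \to \log R_p$. Combining this with $S_p(k_n) \sim K\,Z_n$ and $k_n = 2^n$, one obtains
\begin{equation}
-\lim_{k_n\to\infty}\lim_{\mathrm{Re}\to\infty}\frac{\log S_p(k_n)}{\log k_n} \;=\; -\frac{\log R_p}{\log 2} \;=\; -\log_2 R_p,
\label{eq:plan_exponent}
\end{equation}
which matches definition (\ref{eq3_15ex}) and yields $\zeta_p = -\log_2 R_p$ as required.

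The only delicate step is the weak-convergence argument for the integral factor: one must ensure that the observable $\boldsymbol{\sigma}_\ominus \mapsto \int f\,\rho_F^{(n)}(f|\boldsymbol{\sigma}_\ominus)\,df$ is sufficiently well behaved (e.g.\ bounded continuous, or uniformly integrable against the sequence $\widetilde{\lambda}_p^{(n)}$) to pass the joint limit in $n$ and $\mathrm{Re}$ through the integral. Since the corollary already posits the finiteness of (\ref{eq4_K7}) and the existence of the limits (\ref{eq4_K3})--(\ref{eq4_KA2}) as hypotheses, this step should be treated as a direct consequence of the stated assumptions rather than as something to establish from scratch; the author's phrasing that ``a precise formulation that guaranties the convergence would require technical details depending on a specific system'' indicates that the intended proof is the short argument above, with the technical burden absorbed into the hypotheses.
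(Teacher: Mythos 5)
Your proposal is correct and follows essentially the same route as the paper: identify $S_p(k_n)$ with $N_p(k_n)$ via (\ref{eq4_V4exS}), apply the multiplier representation (\ref{eqIS_6}) of Theorem~\ref{theorem4}, and split off the total mass $\int d\lambda_p^{(n)}$, whose ratio tends to $R_p$ by (\ref{eq4_KA2}) while the normalized integral tends to the finite nonzero constant (\ref{eq4_K7}). Your explicit Stolz--Ces\`aro step for $(\log Z_n)/n \to \log R_p$ is a slightly more careful rendering of the paper's statement $S_p(k_n) \propto R_p^n$, but it is the same argument.
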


The important property of Corollary~\ref{cor2AE} is that exponents (\ref{eq4_K8}) can be anomalous, i.e., depending nonlinearly on $p$. {\color{black}For example, consider the probability density  $\rho_0^\infty(\sigma_0|\boldsymbol{\sigma}_-) = \rho(\sigma_0)$, which is independent of $\boldsymbol{\sigma}_-$. Recall that the Perron--Frobenius eigenvector is the only positive eigenvector (measure) solution of (\ref{eq4_KA21})~\cite{lax2007linear,deimling2010nonlinear}. In this case the eigenvalue problem (\ref{eq4_KA21}) with the operator (\ref{eq4_K3ex}) can solved by using the ansatz 
  	\begin{equation}
	\label{eq4_Kint1}
	d\lambda_p^\infty(\boldsymbol{\sigma}_\ominus) = \sigma_0^{-p}\,d\widetilde{\lambda}_p(\boldsymbol{\sigma}_\ominus),
	\end{equation}
which yields
  	\begin{equation}
	\label{eq4_Kint2}
	2^{-p}\rho(\sigma_0)\, d\sigma_0\, d\widetilde{\Lambda}_p(\boldsymbol{\sigma}_-) = R_p\sigma_0^{-p}\,d\widetilde{\lambda}_p(\boldsymbol{\sigma}_\ominus),
	\quad
	\widetilde{\Lambda}_p(\boldsymbol{\sigma}_-) = \mathbf{S}_\sharp \widetilde{\lambda}_p(\boldsymbol{\sigma}_\ominus).
	\end{equation}
After dividing by $\sigma_0^{-p}$, both sides can be integrated taking into account that $\int d\widetilde{\Lambda}_p(\boldsymbol{\sigma}_-) = \int d\widetilde{\lambda}_p(\boldsymbol{\sigma}_\ominus)$ due to the pushforward relation. This yields
  	\begin{equation}
	\label{eq4_Kint}
	R_p = 2^{-p} \int \sigma_0^p \rho(\sigma_0)\, d\sigma_0.
	\end{equation}
Expression (\ref{eq4_Kint}) defines the Perron--Frobenius eigenvalues $R_p$ through moments of the probability density $\rho(\sigma_0)$. As a consequence, the corresponding exponents $\zeta_p = -\log_2 R_p$ are anomalous (depend nonlinearly on $p$) in general, e.g., consider $\rho(\sigma_0)$ to be a normal distribution. Furthermore, one can show the well-known concave property of $\zeta_p$ as a function of $p$~\cite{frisch1999turbulence}: applying the Cauchy--Schwarz inequality to (\ref{eq4_Kint}) yields $R^2_{p+q} \le R_{2p} R_{2q}$ and, hence, $\zeta_{p+q} \ge (\zeta_{2p} + \zeta_{2q})/2$.}

{\color{black}We refer to the subsequent work~\cite{mailybaev2021solvable}, where relation (\ref{eq4_Kint}) is implemented rigorously for the anomalous statistics in a specially designed shell model. Also, we refer to~\cite{mailybaev2022shell} for the numerical verification of Corollary~\ref{cor2AE} in the Sabra shell model of turbulence, where the operator $\mathcal{L}_p^\infty$ is approximated using multi-dimensional histograms. This latter work uses the concept of turn-over times $T_m$, which are defined through the function $1/A(x)$ from (\ref{eq3_B1}) in Section~\ref{subsec_shell_norm}, and the derivation of anomalous power-laws is given in a different way using temporal averages. The two derivations are equivalent under the ergodicity assumption.

Another consequence of Corollary~\ref{cor2AE} is that the exponents $\zeta_p$ depend only on the time-homogeneity property (\ref{eq4_2}), i.e., they do not depend on a specific form of function $F$. Using this property, one can apply our results to integrated multi-time correlation functions studied in~\cite{l1997temporal,biferale1999multi}. Notice, however, that this scaling may change if the integral (\ref{eq4_K7}) vanishes of diverges.

In summary, we see that normalized measures with a hidden scaling symmetry define scaling exponents $\zeta_p$ in terms of Perron--Frobenius eigenvalues of the linear operators $\mathcal{L}_p^\infty$. These exponents may depend nonlinearly on $p$, i.e., be anomalous.

\subsection{Proof of Theorem~\ref{theorem4}}
\label{subs_pr4}

\begin{lemma}
\label{lem1}
The formula
  	\begin{equation}
	\label{eqIP_6}
	N_p(k_n) 
	= \frac{c_m}{k_n^p} \int 
	\left(\prod_{j = 1}^{m} \sigma_{-j}^{p-1}\right) F \circ g^{n-m}\, d \nu^{(m)}
	\end{equation}
is valid for any $m \ge 0$. 
\end{lemma}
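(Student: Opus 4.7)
The plan is to proceed by induction on $m$. The base case $m=0$ is immediate: the product over $j$ is empty (hence equals $1$), we have $c_0 = 1$, and $\nu^{(0)} = \nu$, so (\ref{eqIP_6}) reduces to the definition (\ref{eq4_V4}) of $N_p(k_n)$.

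For the inductive step, I will rewrite the claimed right-hand side at level $m+1$ as an integral against $\nu^{(m)}$, using $\nu^{(m+1)} = g_\star\nu^{(m)} = (P\circ g)_\sharp \nu^{(m)}_C$ with $C = A\circ g$, which comes from Theorem~\ref{prop_g} applied with $\nu^{(m)}$ in the role of the underlying invariant measure. The change-of-variables formula together with (\ref{eq2_A7b}) gives, for any measurable $\phi$ on $\mathcal{Y}$,
\[
\int \phi\, d\nu^{(m+1)} \;=\; \frac{1}{\int C\, d\nu^{(m)}}\int (\phi\circ P\circ g)\, C\, d\nu^{(m)}.
\]

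The core of the argument is to compute how the integrand $\prod_{j=1}^{m+1}\sigma_{-j}^{p-1}\cdot F\circ g^{n-m-1}$ transforms under $y \mapsto P\circ g(y) = h^{C(y)}(g(y))$. Two observations do the work. First, commutativity of $g$ with $h^a$ from (\ref{eq2_2com}), combined with the homogeneity (\ref{eq4_2}) of $F$, gives
\[
F\circ g^{n-m-1}(P\circ g(y)) \;=\; \frac{F\circ g^{n-m}(y)}{C(y)^p}.
\]
Second, invariance of multipliers under temporal scalings established in (\ref{eq4_T5eq}), paired with the direct identity $\sigma_n\circ g = \sigma_{n+1}$ coming from the definition (\ref{eq4_T5}), yields the shift $\sigma_{-j}(P\circ g(y)) = \sigma_{-(j-1)}(y)$ for $j\ge 1$. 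Hence $\prod_{j=1}^{m+1}\sigma_{-j}(P\circ g(y))^{p-1} = \sigma_0(y)^{p-1}\prod_{j=1}^{m}\sigma_{-j}(y)^{p-1}$.

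Plugging these transformations into the $\nu^{(m+1)}$ integral, and using $\sigma_0(y) = A\circ g(y)/A(y) = C(y)$ (since $A(y)=1$ on $\mathcal{Y}$ by (\ref{eq2_2Ap})), the factor $C(y)$ from the Radon--Nikodym derivative, the factor $C(y)^{p-1}$ from $\sigma_0^{p-1}$, and the factor $C(y)^{-p}$ from the transformation of $F$ multiply to $1$. The telescoping recursion $c_{m+1} = c_m\int C\, d\nu^{(m)}$ then absorbs the remaining $\int C\, d\nu^{(m)}$ in the denominator, so the right-hand side of (\ref{eqIP_6}) at level $m+1$ collapses to its level-$m$ counterpart, which equals $N_p(k_n)$ by the inductive hypothesis. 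The only non-routine step is bookkeeping the $C$-power cancellations; otherwise the argument is mechanical once the two transformation identities are in hand.
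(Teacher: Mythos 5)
Your proof is correct and follows essentially the same route as the paper's: induction on $m$, the identity $F\circ g^{n-m-1}\circ P\circ g = C^{-p}\,F\circ g^{n-m}$, the multiplier shift $\sigma_{-j}\circ P\circ g=\sigma_{-j+1}$ with $\sigma_0=C$ on $\mathcal{Y}$, and the recursion $c_{m+1}=c_m\int C\,d\nu^{(m)}$. The only difference is cosmetic: you pull the level-$(m+1)$ integral back to level $m$ via $\nu^{(m+1)}=(P\circ g)_\sharp\nu^{(m)}_C$, whereas the paper pushes the level-$m$ expression forward; the cancellation of the powers of $C$ is identical.
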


\begin{proof}[Proof of Lemma~\ref{lem1}]
Expression (\ref{eqIP_6}) reduces to the definition (\ref{eq4_V4}) for $m = 0$ and $c_0 = 1$. Hence, we can prove it by induction assuming that (\ref{eqIP_6}) is valid for a given $m \ge 0$ and verifying the next value $m+1$. 

Let us first prove the identity
  	\begin{equation}
	\label{eqIP_5b}
	F \circ g^n (y)
	= C^p(y) \,F \circ g^{n-1} \circ P \circ g(y),\quad
	C(y) = A \circ g(y)
	\end{equation}
for any $y \in \mathcal{Y}$.
By definition (\ref{eq2_2P}), we have $P \circ g(y) = h^a \circ g(y)$ with $a = A\circ g(y) = C(y)$. Using the inverse map $(h^a)^{-1} = h^{1/a}$, we have 
  	\begin{equation}
	\label{eqIP_5cg}
	g(y) = h^{1/C(y)}\circ P\circ g(y). 
	\end{equation}
Hence,
  	\begin{equation}
	\label{eqIP_5c}
	F \circ g^n(y) = F \circ g^{n-1}\circ g(y)
	= F \circ g^{n-1} \circ h^{1/C(y)} \circ P \circ g(y).
	\end{equation}
Using commutativity of $g^{n-1}$ with $h^{1/C(y)}$ and relation (\ref{eq4_2}) yields (\ref{eqIP_5b}).

Second, let us prove the relations 
  	\begin{equation}
	\sigma_{n}(y) = \sigma_{n-1}\circ P \circ g(y),\quad 
	\sigma_0(y) = C(y)
	\label{eqIP_6x}
	\end{equation}
for any $n \in \mathbb{Z}$ and $y \in \mathcal{Y}$. Relation (\ref{eqIP_5b}) yields 
  	\begin{equation}
	\label{eqIS_1}
	A \circ g^{n}(y) 
	= C(y)\, A \circ g^{n-1} \circ P\circ g(y),
	\end{equation}
because the function $A(y)$ satisfies condition (\ref{eq4_2}) with $p = 1$; see (\ref{eq2_2Ap}). Using (\ref{eqIS_1})
in both numerator and denominator of definition (\ref{eq4_T5}), yields the first relation of (\ref{eqIP_6x}). The second relation follows from (\ref{eq4_T5}) for $n = 0$ because $A(y) = 1$; see (\ref{eq2_2Ap}).

Equality (\ref{eqIP_6}) is expressed using (\ref{eqIP_5b}) as
  	\begin{equation}
	\label{eqIP_7a}
	N_p(k_n) 
	= \frac{c_m}{k_n^p} 
	\int \left(\prod_{j = 1}^{m} \sigma_{-j}^{p-1} \right) 
	C^p \,F \circ g^{n-m-1} \circ P \circ g \, d \nu^{(m)}.
	\end{equation}
With the second relation of (\ref{eqIP_6x}), we replace $C^{p-1}$ in (\ref{eqIP_7a}) by $\sigma_0^{p-1}$, extending the product to $j = 0$ as
  	\begin{equation}
	\label{eqIP_7aa}
	N_p(k_n) 
	= \frac{c_m}{k_n^p} 
	\int \left(\prod_{j = 0}^{m} \sigma_{-j}^{p-1} \right) 
	C \,F \circ g^{n-m-1} \circ P \circ g \, d \nu^{(m)}.
	\end{equation}
The change of time transformation (\ref{eq2_A7b}) yields
  	\begin{equation}
	\label{eqIP_8prev}
	N_p(k_n) 
	= \frac{c_{m+1}}{k_n^p}
	\int \left(\prod_{j = 0}^{m} \sigma_{-j}^{p-1}\right) 
	F \circ g^{n-m-1} \circ P \circ g \, d \nu_C^{(m)},
	\quad
	d\nu_C^{(m)} = \frac{C\, d\nu^{(m)}}{\int C\, d\nu^{(m)}},
	\end{equation}
where we used expression (\ref{eqIS_6b}) for the coefficient $c_{m+1}$.
Using the first relation of (\ref{eqIP_6x}) in (\ref{eqIP_8prev}) and changing the product index $j \mapsto j+1$, we write
  	\begin{equation}
	\label{eqIP_8}
	N_p(k_n) 
	= \frac{c_{m+1}}{k_n^p}
	\int \left(\prod_{j = 1}^{m+1} \sigma_{-j} \circ P \circ g\right)^{p-1} 
	F \circ g^{n-m-1} \circ P \circ g \, d \nu_C^{(m)}.
	\end{equation}
Finally, the change of variables $y \mapsto P \circ g (y)$ reduces (\ref{eqIP_8}) to the form
  	\begin{equation}
	\label{eqIP_10}
	N_p(k_n) 
	= \frac{c_{m+1}}{k_n^p} 
	\int \left(\prod_{j = 1}^{m+1} \sigma_{-j}^{p-1}\right) 
	F \circ g^{n-m-1} \, d \nu'',
	\end{equation}
where $\nu'' = (P \circ g)_\sharp \nu_C^{(m)}$ is a pushforward measure given by the classical change-of-variables formula. 
This measure is expressed by the hidden symmetry transformation (\ref{eq2_A7}) and (\ref{eqIS_4}) as
  	\begin{equation}
	\label{eqIP_12}
	\nu'' = (P \circ g)_\sharp \nu_C^{(m)} 
	= g_\star \nu^{(m)} = g_\star (g^m_\star \nu) = g^{m+1}_\star \nu = \nu^{(m+1)}.
	\end{equation}
Expressions (\ref{eqIP_10}) and (\ref{eqIP_12}) prove the induction step: the formula (\ref{eqIP_6}) for $m+1$.
\end{proof}

Equality (\ref{eqIP_6}) written for $m = n$ takes the form
  	\begin{equation}
	\label{eqIP_13}
	N_p(k_n) 
	= \frac{c_n}{k_n^p} \int 
	\left(\prod_{j = 1}^{n} \sigma_{-j}^{p-1}\right) F d \nu^{(n)}.
	\end{equation}
Notice that the integral expression depends only on $F$ and $\sigma_{-1},\ldots,\sigma_{-n}$ as functions of $y \in \mathcal{Y}$, integrated with respect to the measure $\nu^{(n)}$ in the representative set $\mathcal{Y}$. By changing the integration variables from $y$ to $(f,\sigma_0,\sigma_{-1},\sigma_{-2},\ldots) = \mathbf{P}_\phi(y)$ with $f = F(y)$, we reduce formula (\ref{eqIP_13}) to the form 
  	\begin{equation}
	\label{eqIP_14}
	N_p(k_n) 
	= \frac{c_n}{k_n^p} \int 
	\left(\prod_{j = 1}^{n} \sigma_{-j}^{p-1}\right) f d \nu_\phi^{(n)}.
	\end{equation}
Here the measure $d \nu^{(n)}$ is substituted by its image $d \nu_\phi^{(n)}$ in the space $\boldsymbol{\phi} = (f,\sigma_0,\sigma_{-1},\sigma_{-2},\ldots)$; see (\ref{eqIS_3sqC}), (\ref{eqIS_3mA}) and (\ref{eqIS_4m}). Finally, using the second expression in (\ref{eqIS_5}), we prove relations (\ref{eqIS_6}) and (\ref{eqIS_6b}) of the theorem.
 
Let us now prove the iterative relations (\ref{eqIS_7}). Using the first relation of (\ref{eqIS_5}) and $k_n = 2^n$ in expression (\ref{eqIS_6b}) for $n+1$, we write
  	\begin{equation}
	\label{eqIS_7c}
	\lambda_p^{(n+1)}(\boldsymbol{\sigma}_\ominus)
	= 2^{-p}\sigma_{-1}^p\, 
	\rho_0^{(n+1)}(\sigma_0|\boldsymbol{\sigma}_-)
	\,d\sigma_0\, d\Lambda'(\boldsymbol{\sigma}_-),
	\end{equation}
where
  	\begin{equation}
	\label{eqIS_20}
	d\Lambda'(\boldsymbol{\sigma}_-) = 
	\frac{c_{n+1}}{k_{n}^p} 
	\left(\frac{1}{\sigma_{-1}}\prod_{j = 2}^{n+1} \sigma_{-j}^{p-1}\right)
	\,d\nu^{(n+1)}_-(\boldsymbol{\sigma}_-).
	\end{equation}
Comparing with (\ref{eqIS_7}), one can see that for the proof of the theorem it remains to show that
  	\begin{equation}
	\label{eqIS_20b}
	\Lambda'(\boldsymbol{\sigma}_-) =  \mathbf{S}_\sharp \lambda_p^{(n)}(\boldsymbol{\sigma}_\ominus).
	\end{equation}

Using (\ref{eqIS_4m}) we express the last measure in (\ref{eqIS_20}) as $\nu^{(n+1)}_- = (\mathbf{P}_-)_\sharp \nu^{(n+1)}$.
Combining this expression with (\ref{eqIP_12}) yields
  	\begin{equation}
	\label{eqIS_21}
	\nu^{(n+1)}_- = (\mathbf{P}_- \circ P \circ g)_\sharp \nu^{(n)}_C.
	\end{equation}
One can see using (\ref{eqIP_6x}) and definitions (\ref{eqIS_3})--(\ref{eqIS_3shift}) that $\mathbf{P}_- \circ P \circ g = \mathbf{S} \circ \mathbf{P}_\ominus$. Hence,
  	\begin{equation}
	\label{eqIS_22}
	\nu^{(n+1)}_- = (\mathbf{S} \circ \mathbf{P}_\ominus)_\sharp \nu^{(n)}_C.
	\end{equation}
For $\nu^{(n)}_C$, we use expressions (\ref{eq2_A7b}) with $C = A \circ g$ and (\ref{eqIP_6x}) as
  	\begin{equation}
	\label{eqIS_23}
	d\nu^{(n)}_C = \frac{\sigma_0 \,d\nu^{(n)}}{\int A\circ g \,d\nu^{(n)}}.
	\end{equation} 
Substituting (\ref{eqIS_23}) into (\ref{eqIS_22}) and using definitions (\ref{eqIS_3shift}) and (\ref{eqIS_4m}), we have
   	\begin{equation}
	\label{eqIS_24}
	\nu^{(n+1)}_-(\boldsymbol{\sigma}_-) = \mathbf{S}_\sharp \lambda_\ominus^{(n)}(\boldsymbol{\sigma}_\ominus),\quad
	d\lambda_\ominus^{(n)}(\boldsymbol{\sigma}_\ominus) = \frac{\sigma_0\,d\nu^{(n)}_\ominus(\boldsymbol{\sigma}_\ominus)}{\int A\circ g \,d\nu^{(n)}}.
	\end{equation}
Substituting (\ref{eqIS_24}) into (\ref{eqIS_20}) and using properties of the shift map (\ref{eqIS_3shift}) yields
  	\begin{equation}
	\label{eqIS_25}
	\Lambda'(\boldsymbol{\sigma}_-) = \mathbf{S}_\sharp \lambda'(\boldsymbol{\sigma}_\ominus),\quad
	d\lambda'(\boldsymbol{\sigma}_\ominus) =
	\frac{c_{n+1}}{k_{n}^p\int A\circ g \,d\nu^{(n)}} 
	\left(\prod_{j = 1}^{n} \sigma_{-j}^{p-1}\right)
	\,d\nu^{(n)}_\ominus(\boldsymbol{\sigma}_\ominus).
	\end{equation}
Using (\ref{eqIS_6b}) we finally derive
  	\begin{equation}
	\label{eqIS_26}
	d\lambda'(\boldsymbol{\sigma}_\ominus) =
	\frac{c_{n}}{k_{n}^p} 
	\left(\prod_{j = 1}^{n} \sigma_{-j}^{p-1}\right)
	\,d\nu^{(n)}_\ominus(\boldsymbol{\sigma}_\ominus)
	= d\lambda_p^{(n)}(\boldsymbol{\sigma}_\ominus).
	\end{equation}
Then, expressions (\ref{eqIS_25}) and (\ref{eqIS_26}) yield (\ref{eqIS_20b}).
}

\section{Quotient construction with Galilean transformations}
\label{sec_Gal}

In this section, we study the equivalence relation with respect to Galilean transformations. It is the second symmetry (in addition to temporal scalings), which does not commute with the flow; see Tab.~\ref{tab1} in Section~\ref{secEuler}. Here we develop a quotient construction similar to the one of Section~\ref{sec4}, but using different commutation relations. These two constructions will be put together in the next Section~\ref{sec_fus}. As before, we consider an infinite-dimensional probability measure space $(\mathcal{X},\Sigma,\mu)$ with the measure $\mu$ invariant for a measurable flow $\Phi^t: \mathcal{X} \mapsto \mathcal{X}$. 

\subsection{Symmetries and spatial homogeneity}
\label{subsec_symG}

We explore the equivalence relation with respect to the group of Galilean transformations:
	\begin{equation}
	\label{eq6_G1aa}
	\mathcal{H}_{\mathrm{g}} = \{s_\mathrm{g}^{\mathbf{v}}: \mathbf{v} \in \mathbb{R}^d\}.
 	\end{equation}
Additionally, we consider the group (\ref{eq2_0N}) from Section~\ref{sec4}. It is generated by rotations $s^{\mathbf{Q}}_{\mathrm{r}}$ and scaling maps $s^{a}_{\mathrm{ts}}$ and $s^{b}_{\mathrm{ss}}$ as
	\begin{equation}
	\label{eq6_G1}
	\mathcal{S} = \big\{s^{\mathbf{Q}}_{\mathrm{r}}  
	\circ s^{a}_{\mathrm{ts}} \circ s^{b}_{\mathrm{ss}} :
	\mathbf{Q} \in \mathrm{O}(d),\,
	a > 0,\, b > 0\big\}. 
 	\end{equation}
Also, we consider spatial translations $s_{\mathrm{s}}^{\mathbf{r}}$, which play an auxiliary role. Commutation relations for all these maps and the flow are defined by Tab.~\ref{tab1}. The central relation for this section is
	\begin{equation}
	\label{eq6_3}
	\Phi^t \circ s^{\mathbf{v}}_{\mathrm{g}} =
	s^{\mathbf{v}t}_{\mathrm{s}} \circ s^{\mathbf{v}}_{\mathrm{g}} \circ \Phi^t,
	\end{equation}
implying that Galilean transformations do not commute with the flow. The commuted states are translated by the distance $\mathbf{r}  = \mathbf{v}t$ in physical space $\mathbb{R}^d$. 

We say that the measure $\mu$ is (spatially) \textit{homogeneous} if
	\begin{equation}
	\label{eq6_4}
	\left(s_{\mathrm{s}}^{\mathbf{r}}\right)_\sharp\mu = \mu, \quad 
	\mathbf{r} \in \mathbb{R}^d.
	\end{equation}
This means that $\mu$ is symmetric with respect to all spatial translations. From now on, we restrict our study to homogeneous measures $\mu$. One can check using Tab.~\ref{tab1} that measures $s_\sharp \mu$ are  homogeneous for any map {\color{black}$s \in \mathcal{H}_{\mathrm{g}}$ or $\mathcal{S}$; see (\ref{eq6_G1aa}) and (\ref{eq6_G1}).} Also, due to commutation relation (\ref{eq6_3}), the homogeneity is a necessary and sufficient condition for the invariance of Galilean transformed measures $(s_g^{\mathbf{v}})_\sharp\mu$ under the flow $\Phi^t$. Thus, Galilean transformations are symmetries for homogeneous invariant measures in the sense of Definition~\ref{del_measure}. We remark that spatial homogeneity is a typical assumption in the theory of turbulence~\cite{frisch1999turbulence}.

\subsection{Representative set, periodicity and incompressibility}
\label{subsec_incomp}

We consider the equivalence relation with respect to the group $\mathcal{H}_{\mathrm{g}}$ as
	\begin{equation}
	x \sim x' \quad \textrm{if} \quad x' = s_{\mathrm{g}}^{\mathbf{v}}(x),\ \mathbf{v} \in \mathbb{R}^d,
	\label{eq6_Eq1}
	\end{equation}
i.e., two states are equivalent if they are related by a Galilean transformation for some velocity $\mathbf{v} \in \mathbb{R}^d$. Similarly to Definition~\ref{def1} of Section~\ref{subsec_red}, we introduce a representative set containing a single state from each equivalence class.

\begin{definition}
\label{def_G}
We call $\mathcal{Z} \subset \mathcal{X}$ a representative set (with respect to the group $\mathcal{H}_{\mathrm{g}}$), if the following properties are satisfied. For any $x \in \mathcal{X}$, there exists a unique velocity $\mathbf{v} = \mathbf{V}(x) \in \mathbb{R}^d$ such that $z = s_{\mathrm{g}}^{\mathbf{v}}(x) \in \mathcal{Z}$. The function $\mathbf{V}: \mathcal{X} \mapsto \mathbb{R}^d$ is measurable. 
\end{definition}

{\color{black}Consider any state $x_1 = s_{\mathrm{g}}^{\mathbf{v}}(x)$ from the equivalence class of $x \in \mathcal{X}$. By Definition~\ref{def_G} we have 
	\begin{equation}
	z_1 = s_{\mathrm{g}}^{\mathbf{v}_1}(x_1),\quad
	\mathbf{v}_1 = \mathbf{V}(x_1) = \mathbf{V} \circ s_{\mathrm{g}}^{\mathbf{v}}(x). 
	\label{eq6_2_1extA}
	\end{equation}
Using relation $s_{\mathrm{g}}^{\mathbf{v}_1} \circ s_{\mathrm{g}}^{\mathbf{v}} = s_{\mathrm{g}}^{\mathbf{v}_1+\mathbf{v}}$ from Tab.~\ref{tab1}, we have
	\begin{equation}
	z_1 = s_{\mathrm{g}}^{\mathbf{v}_1}(x_1) = s_{\mathrm{g}}^{\mathbf{v}_1} \circ s_{\mathrm{g}}^{\mathbf{v}}(x) 
	= s_{\mathrm{g}}^{\mathbf{v}_1+\mathbf{v}}(x) = s_{\mathrm{g}}^{\mathbf{V} \circ s_{\mathrm{g}}^{\mathbf{v}}(x)+\mathbf{v}}(x).
	\label{eq6_2_1extB}
	\end{equation}
On the other hand, since $x_1 \sim x$, the uniqueness of a representative state in each equivalence class implies that $z_1 = z = s_{\mathrm{g}}^{\mathbf{V}(x)}(x)$. We conclude that the function $\mathbf{V}(x)$ from Definition~\ref{def_G} has the properties
	\begin{equation}
	\mathbf{V} \circ s_{\mathrm{g}}^{\mathbf{v}}(x) 
	= \mathbf{V}(x)-\mathbf{v}, \quad \mathbf{V}(z) = \mathbf{0}
	\label{eq6_2_1}
	\end{equation}
for any $\mathbf{v} \in \mathbb{R}^d$.} 

Relation (\ref{eq6_3}) is different from the commutation relation (\ref{eq2_2}) for temporal scalings used in Section~\ref{sec4}. This difference affects our quotient construction, for which we need to impose some extra conditions.
The first condition is \textit{periodicity} of a measure $\mu$. It means that there exist linearly independent vectors $\mathbf{e}_1,\ldots,\mathbf{e}_d \in \mathbb{R}^d$ such that 
	\begin{equation}
	\label{eq6_5}
	x = s_{\mathrm{s}}^{\mathbf{e}_1}(x) = \ldots = s_{\mathrm{s}}^{\mathbf{e}_d}(x)
	\end{equation}
for almost every $x \in \mathcal{X}$ with respect to $\mu$.
The period vectors $\mathbf{e}_1,\ldots,\mathbf{e}_d$ {\color{black}may} depend on the measure $\mu$ but not on the state $x$.
Periodicity is not crucial for our construction, but it considerably simplifies the analysis. {\color{black} This property features periodic flows, which} are very common in the theory of turbulence~\cite{frisch1999turbulence}. 

The physical origin of a Galilean transformation is the change to a reference frame moving with a constant velocity $\mathbf{v}$ in physical space $\mathbb{R}^d$. Given a state $x \in \mathcal{X}$, we are going to use $\mathbf{V}(x)$ as a speed of a corresponding reference frame. Considering a solution  $\Phi^t(x)$, we now introduce a reference frame translated in physical space along some trajectory $\mathbf{r} = \mathbf{R}^t(x)$; see Fig.~\ref{fig8}. By $\widetilde{x} = s_{\mathrm{s}}^{\mathbf{r}} \circ \Phi^t (x)$ we represent the state at time $t$ in a reference frame moved to position $\mathbf{r} \in \mathbb{R}^d$. We  set the instantaneous speed of this reference frame to be $\mathbf{V}(\widetilde{x}) =  \mathbf{V} \circ s_{\mathrm{s}}^{\mathbf{r}} \circ \Phi^t (x)$. Assuming that $\mathbf{r} = \mathbf{0}$ at $t = 0$, we obtain the Cauchy problem for the trajectory $\mathbf{r} = \mathbf{R}^t(x)$ in the form
	\begin{equation}
	\frac{d\mathbf{R}^t}{dt} = \mathbf{v}_x(\mathbf{R}^t,t),\quad 
	\mathbf{R}^0 = \mathbf{0},
	\label{eq6_2A_1}
	\end{equation}
with the time-dependent velocity field 
	\begin{equation}
	\mathbf{v}_x(\mathbf{r},t) =  \mathbf{V} \circ s_{\mathrm{s}}^{\mathbf{r}} \circ \Phi^t (x)
	\label{eq6_2_2}
	\end{equation}
in physical space $\mathbb{R}^d$. 
Periodicity conditions (\ref{eq6_5}) and commutation relations of Tab.~\ref{tab1} yield the periodicity of velocity field (\ref{eq6_2_2}) as
	\begin{equation}
	\mathbf{v}_x(\mathbf{r},t) =  \mathbf{v}_x(\mathbf{r}+\mathbf{e}_1,t) 
	= \cdots = 
	\mathbf{v}_x(\mathbf{r}+\mathbf{e}_d,t).
	\label{eq6_2_2b}
	\end{equation}

\begin{figure}
\centering
\includegraphics[width=0.35\textwidth]{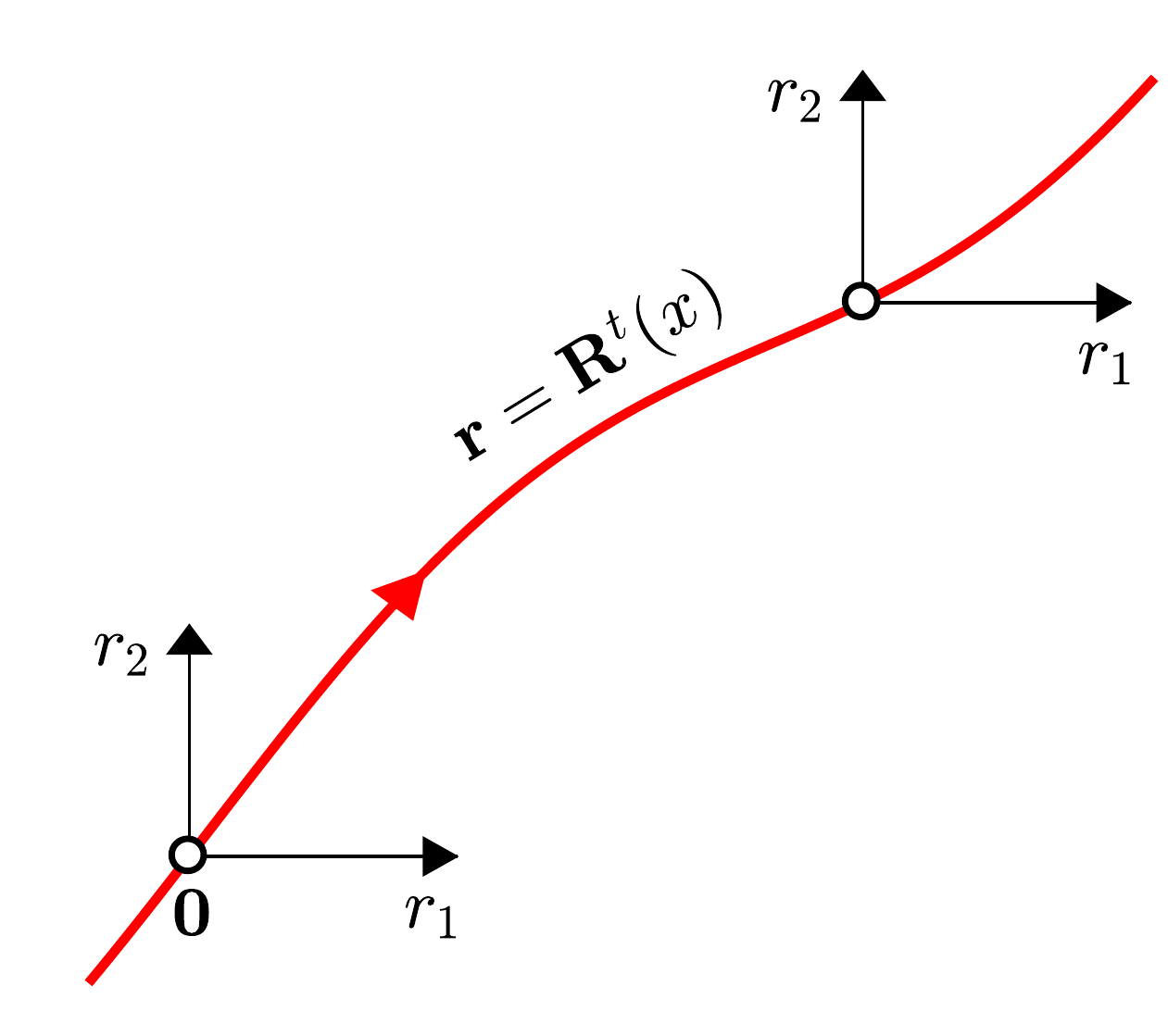}
\caption{Schematic graph of a trajectory $\mathbf{r} = \mathbf{R}^t(x)$ traced by a reference frame in physical space $\mathbb{R}^d$, which starts at $\mathbf{R}^0(x) = \mathbf{0}$ and moves with the speed $\mathbf{v}_x(\mathbf{r},t)$.}
\label{fig8}
\end{figure}

\begin{definition}
\label{def_G2}
Given the function $\mathbf{V}(x)$, we say that the flow is incompressible if the velocity field $\mathbf{v}_x(\mathbf{r},t)$ in (\ref{eq6_2_2}) is continuous in $(\mathbf{r},t)$, continuously differentiable in $\mathbf{r} = (r_1,\ldots,r_d)$, and 
	\begin{equation}
	\mathrm{div}\, \mathbf{v}_x = 0
	\label{eq6_2_3}
	\end{equation}
for all $x$, $\mathbf{r}$ and $t$, where $\mathrm{div} = \partial/\partial r_1+\cdots+\partial/\partial r_d$ is the divergence operator.
\end{definition}

We emphasize that (\ref{eq6_2_3}) is not the incompressibility condition for the phase-space volume in $\mathcal{X}$; instead, it refers to physical space $\mathbb{R}^d$ accessed by means of symmetries. By Picard's theorem (see, e.g.~\cite{teschl2012ordinary}), problem (\ref{eq6_2A_1})
has a unique local solution for velocity fields from Definition~\ref{def_G2}, and periodicity (\ref{eq6_2_2b}) ensures that the solution is defined globally in time. Additionally, we assume that $\mathbf{R}^t(x)$ is measurable as a function of $x$ and $t$. This assumption is natural, because one expects continuous dependence of solutions on initial states $x$ in well-posed problems; {\color{black}see, however, the remark in Section~\ref{secEuler}.}

To give an example, let us consider the Euler system from Section~\ref{secEuler} with $x = \mathbf{u}_0(\mathbf{r})$ representing a fluid velocity field at initial time. For the function $\mathbf{V}(x)$, the simplest choice is  
	\begin{equation}
	\mathbf{V}(x) = \mathbf{u}_0(\mathbf{0})
	\label{eq6_2_4}
	\end{equation}
corresponding to the velocity at $\mathbf{r} = \mathbf{0}$. Expression (\ref{eq6_2_2}) with the fluid velocity $\mathbf{u}(\mathbf{r},t) = \Phi^t(x)$ and $s_{\mathrm{s}}^{\mathbf{r}}$ from (\ref{eq3}) yield
	\begin{equation}
	\mathbf{v}_x(\mathbf{r},t) = \mathbf{u}(\mathbf{r},t). 
	\label{eq6_2_5}
	\end{equation}
We see that (\ref{eq6_2_3}) is exactly the fluid incompressibility condition, and $\mathbf{r} = \mathbf{R}^t(x)$ in (\ref{eq6_2A_1}) is the Lagrangian (particle) trajectory that starts at the origin at $t = 0$. 

\subsection{Normalized flow and invariant measure}
\label{subsec_NFG}

Let us introduce a measurable \textit{Galilean projector} $Q: \mathcal{X} \mapsto \mathcal{Z}$ as 
	\begin{equation}
	Q(x) = s_{\mathrm{g}}^{\mathbf{V}(x)}(x). 
	\label{eq6_3_1}
	\end{equation}
We now define the normalized flow with the invariant measure for the representative set $\mathcal{Z}$.
	
\begin{theorem}
\label{theorem_G}
Consider a flow $\Phi^t$ with an invariant measure $\mu$ and a representative set $\mathcal{Z}$, which satisfy the properties of homogeneity, periodicity and incompressibility. Then, the mapping 
	\begin{equation}
	\Omega^t(z) 
	= Q \circ s_{\mathrm{s}}^{\mathbf{R}^t(z)} \circ \Phi^t(z), \quad
	z \in \mathcal{Z},
	\label{eq6_3_2}
	\end{equation}
defines the flow $\Omega^t: \mathcal{Z} \mapsto \mathcal{Z}$ in the representative set $\mathcal{Z}$ with the normalized invariant measure
	\begin{equation}
	\zeta = Q_\sharp \mu. 
	\label{eq6_3_3}
	\end{equation}
Here $\mathbf{R}^t(z)$ is a solution of (\ref{eq6_2A_1}) assumed to be a measurable function of $z$ and $t$.
\end{theorem}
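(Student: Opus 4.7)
The plan is to mirror the structure of the proof of Theorem~\ref{theorem0}, with $s_{\mathrm{g}}^{\mathbf{v}}$ and $\mathbf{V}(x)$ playing the roles of $h^a$ and $A(x)$, while absorbing the extra spatial translation produced by the noncommutativity (\ref{eq6_3}) into an auxiliary map $T^t(x) = s_{\mathrm{s}}^{\mathbf{R}^t(x)} \circ \Phi^t(x)$ defined on all of $\mathcal{X}$. Under this definition $\Omega^t = Q \circ T^t$ on $\mathcal{Z}$, and the whole argument hinges on the key identity
\[
\Omega^t \circ Q = Q \circ T^t \quad \text{on all of } \mathcal{X},
\]
which is the analog of (\ref{eq2_D2}) in the proof of Theorem~\ref{theorem0}.

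First I would establish two transformation rules for $\mathbf{R}^t$ by uniqueness of solutions of (\ref{eq6_2A_1}). The commutation $\Phi^t \circ s_{\mathrm{g}}^{\mathbf{v}} = s_{\mathrm{s}}^{\mathbf{v}t} \circ s_{\mathrm{g}}^{\mathbf{v}} \circ \Phi^t$ from Tab.~\ref{tab1} together with $\mathbf{V} \circ s_{\mathrm{g}}^{\mathbf{v}} = \mathbf{V} - \mathbf{v}$ from (\ref{eq6_2_1}) give $\mathbf{v}_{s_{\mathrm{g}}^{\mathbf{v}}(x)}(\mathbf{r},t) = \mathbf{v}_x(\mathbf{r}+\mathbf{v}t,t) - \mathbf{v}$; substituting $\widetilde{\mathbf{R}} = \mathbf{R} + \mathbf{v}t$ into the corresponding ODE reproduces (\ref{eq6_2A_1}) for $\mathbf{R}^t(x)$, so $\mathbf{R}^t \circ s_{\mathrm{g}}^{\mathbf{v}} = \mathbf{R}^t - \mathbf{v}t$. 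A parallel computation using $\Phi^t \circ s_{\mathrm{s}}^{\mathbf{r}} = s_{\mathrm{s}}^{\mathbf{r}} \circ \Phi^t$ yields the cocycle identity $\mathbf{R}^{t_1} \circ T^{t_2} = \mathbf{R}^{t_1+t_2} - \mathbf{R}^{t_2}$, from which the flow property $T^{t_1+t_2} = T^{t_1} \circ T^{t_2}$ follows directly. Combining $\mathbf{R}^t \circ s_{\mathrm{g}}^{\mathbf{v}} = \mathbf{R}^t - \mathbf{v}t$ with $Q \circ s_{\mathrm{g}}^{\mathbf{v}} = Q$ (immediate from (\ref{eq6_2_1})) and writing $x = s_{\mathrm{g}}^{-\mathbf{V}(x)}(Q(x))$ then delivers the key identity; the flow property of $\Omega^t$ on $\mathcal{Z}$ follows from it exactly as in (\ref{eq2_D1}).

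The main obstacle is the invariance $\Omega^t_\sharp \zeta = \zeta$. The key identity reduces this to $(Q \circ T^t)_\sharp \mu = Q_\sharp \mu$, and I would establish the stronger statement $T^t_\sharp \mu = \mu$. This is where incompressibility must be used. Using periodicity (\ref{eq6_5}), I would lift the problem to the extended space $\mathcal{X} \times \mathbb{T}^d$ with the measure $\mu \otimes d\mathbf{r}$, together with the projection $\pi(x,\mathbf{r}) = s_{\mathrm{s}}^{\mathbf{r}}(x)$, which by homogeneity (\ref{eq6_4}) satisfies $\pi_\sharp(\mu \otimes d\mathbf{r}) = |\mathbb{T}^d|\,\mu$. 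On the extended space define
\[
\widetilde{T}^t(x,\mathbf{r}_0) = \bigl(\Phi^t(x),\,\mathbf{F}_x^t(\mathbf{r}_0)\bigr),
\]
where $\mathbf{F}_x^t$ is the Lagrangian map in physical space generated by $\mathbf{v}_x(\mathbf{r},t)$. The same uniqueness argument as above shows $\mathbf{F}_x^t(\mathbf{r}_0) = \mathbf{R}^t(s_{\mathrm{s}}^{\mathbf{r}_0}(x)) + \mathbf{r}_0$, which in turn yields the intertwining relation $\pi \circ \widetilde{T}^t = T^t \circ \pi$.

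Incompressibility (\ref{eq6_2_3}) enters through Liouville's theorem: for each fixed $x$, the flow $\mathbf{F}_x^t$ on $\mathbb{T}^d$ preserves Lebesgue measure $d\mathbf{r}$. Combined with $\Phi^t_\sharp \mu = \mu$ and Fubini, this produces $\widetilde{T}^t_\sharp(\mu \otimes d\mathbf{r}) = \mu \otimes d\mathbf{r}$. Pushing this identity forward by $\pi$ and invoking the intertwining relation together with $\pi_\sharp(\mu \otimes d\mathbf{r}) = |\mathbb{T}^d|\mu$ then yields $T^t_\sharp \mu = \mu$, and hence the invariance of $\zeta$. The step that I expect to require the most care is the measurability and Fubini-type justification for $\widetilde{T}^t$ in the infinite-dimensional setting, in particular the joint measurability of $(x,\mathbf{r}_0,t) \mapsto \mathbf{F}_x^t(\mathbf{r}_0)$, which extends the measurability hypothesis made on $\mathbf{R}^t(z)$ just after Definition~\ref{def_G2}.
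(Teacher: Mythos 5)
Your proposal is correct, and it is built from the same essential ingredients as the paper's proof: the transformation rules for $\mathbf{R}^t$ obtained from uniqueness of solutions of (\ref{eq6_2A_1}) (the paper's Lemma~\ref{lemma_G1b} and its Galilean analogue), the identity $Q \circ s_{\mathrm{g}}^{\mathbf{v}} = Q$ of (\ref{eq6_P_19}), homogeneity to average over spatial translations on the periodic cell, and incompressibility plus Liouville's theorem to make the physical-space Lagrangian map volume-preserving (your $\mathbf{F}_x^t$ is exactly the paper's $\mathbf{T}_x^t$ from Lemma~\ref{lemma_G2}, since both solve $d\mathbf{T}/dt = \mathbf{v}_x(\mathbf{T},t)$ with initial condition $\mathbf{r}$). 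Where you genuinely differ is in the packaging of the invariance argument. The paper integrates the measure identity $\Omega^t_\sharp\zeta = (\widehat{\Omega}^t_{\mathbf{r}})_\sharp\mu$ over the periodic domain $\mathcal{T}$ and performs the change of variables $\mathbf{r}' = \mathbf{T}_x^t(\mathbf{r})$ under the integral in (\ref{eq6_P_24})--(\ref{eq6_P_25}); since $\mathbf{T}_x^t$ depends on $x$ while the integrand is itself a push-forward over $x$, a fully rigorous reading of that step requires testing against an observable and invoking Fubini --- which is precisely the bookkeeping your skew-product lift $\widetilde{T}^t(x,\mathbf{r}_0) = \bigl(\Phi^t(x),\mathbf{F}_x^t(\mathbf{r}_0)\bigr)$ on $\mathcal{X}\times\mathbb{T}^d$, with the intertwining $\pi\circ\widetilde{T}^t = T^t\circ\pi$, makes explicit. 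Your route also delivers the strictly stronger intermediate statement $T^t_\sharp\mu = \mu$, i.e., that the Quasi--Lagrangian map $s_{\mathrm{s}}^{\mathbf{R}^t}\circ\Phi^t$ preserves the original measure itself rather than only its projection $\zeta = Q_\sharp\mu$; this is a clean formalization of the classical fact that single-time Quasi--Lagrangian and Eulerian statistics coincide for homogeneous incompressible flows, and the theorem then follows immediately from $\Omega^t\circ Q = Q\circ T^t$. The only caveat, which you already flag and which the paper shares implicitly, is the joint measurability of $(x,\mathbf{r}_0,t)\mapsto\mathbf{F}_x^t(\mathbf{r}_0)$ needed to apply Fubini on the product space.
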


Proofs of all statements are collected in Section~\ref{subsec_proof_G}. The two assumptions of  homogeneity and incompressibility are crucial in the proof, where homogeneity allows averaging in physical space and incompressibility yields the volume-preserving property for a change of integration variables. Generally, violation of homogeneity or incompressibility breaks invariance of the normalized measure. The normalized flow (\ref{eq6_3_2}) can be seen as a reduction of every solution $\Phi^t(z)$ to a reference frame that moves along the trajectory $\mathbf{r} = \mathbf{R}^t(z)$ in physical space. Since these reference frames are different for different solutions, invariance of the normalized measure (\ref{eq6_3_3}) expressed in terms of $\mu$ is a remarkable property owing to homogeneity and incompressibility. 

The example in (\ref{eq6_2_4}) and (\ref{eq6_2_5}) provides the physical interpretation of the normalized system. It is the Quasi--Lagrangian representation {\color{black}\cite{belinicher1987scale,l1991scale}} for incompressible fluid dynamics, describing velocity fields in reference frames moving with selected fluid particles. 

Analogously to Proposition~\ref{prop_h} in Section~\ref{sec4}, we show that Galilean transformations act trivially on normalized measures.

\begin{proposition}
\label{prop_G1}
All Galilean transformed invariant measures $ \widetilde{\mu} = \left(s_{\mathrm{g}}^{\mathbf{v}}\right)_\sharp \mu$  with $\mathbf{v} \in \mathbb{R}^d$ yield the same normalized measure $\zeta = Q_\sharp \widetilde{\mu}$ by Theorem~\ref{theorem_G}.
\end{proposition}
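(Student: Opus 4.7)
The plan is to reduce the statement to the projector-like identity $Q \circ s_{\mathrm{g}}^{\mathbf{v}} = Q$, in direct analogy to the identity $P \circ h^a = P$ used repeatedly in Section~\ref{sec4} (e.g., in the proof of Proposition~\ref{prop_h}). Once this identity is established, the conclusion follows by a one-line push-forward computation: for $\widetilde{\mu} = \bigl(s_{\mathrm{g}}^{\mathbf{v}}\bigr)_\sharp \mu$ we have
\begin{equation*}
Q_\sharp \widetilde{\mu}
= Q_\sharp \bigl(s_{\mathrm{g}}^{\mathbf{v}}\bigr)_\sharp \mu
= \bigl(Q \circ s_{\mathrm{g}}^{\mathbf{v}}\bigr)_\sharp \mu
= Q_\sharp \mu = \zeta ,
\end{equation*}
so no new measure-theoretic machinery beyond the standard composition rule for push-forwards is needed.

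To verify $Q \circ s_{\mathrm{g}}^{\mathbf{v}} = Q$, I would unfold the definition (\ref{eq6_3_1}) of the Galilean projector and apply the two ingredients already available in the excerpt. First, using $Q(x) = s_{\mathrm{g}}^{\mathbf{V}(x)}(x)$, I write
\begin{equation*}
Q \circ s_{\mathrm{g}}^{\mathbf{v}}(x)
= s_{\mathrm{g}}^{\mathbf{V}\circ s_{\mathrm{g}}^{\mathbf{v}}(x)} \circ s_{\mathrm{g}}^{\mathbf{v}}(x).
\end{equation*}
Next, the equivariance property (\ref{eq6_2_1}) gives $\mathbf{V}\circ s_{\mathrm{g}}^{\mathbf{v}}(x) = \mathbf{V}(x)-\mathbf{v}$, and the additive composition law $s_{\mathrm{g}}^{\mathbf{v}_1}\circ s_{\mathrm{g}}^{\mathbf{v}_2} = s_{\mathrm{g}}^{\mathbf{v}_1+\mathbf{v}_2}$ from Tab.~\ref{tab1} collapses the two Galilean maps into $s_{\mathrm{g}}^{\mathbf{V}(x)}$, recovering $Q(x)$. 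Thus the identity holds pointwise on $\mathcal{X}$, not just $\mu$-almost everywhere.

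There is essentially no hard step here; the result is a structural consequence of the equivariance of $\mathbf{V}$ built into Definition~\ref{def_G} and the abelian group law for $\mathcal{H}_{\mathrm{g}}$. The only place where one must be slightly careful is that this proposition does \emph{not} require invoking any of the extra hypotheses introduced for Theorem~\ref{theorem_G} (homogeneity, periodicity, incompressibility): the conclusion $Q_\sharp \widetilde{\mu} = Q_\sharp \mu$ is purely kinematic and holds as soon as the projector $Q$ is well defined on $\mathcal{X}$. In particular, one should read the statement of the proposition as implicitly assuming that both $\mu$ and $\widetilde{\mu}$ are admissible in Theorem~\ref{theorem_G} (so that $\zeta$ and the would-be $Q_\sharp\widetilde{\mu}$ are genuine normalized invariant measures), but the equality itself does not use invariance.
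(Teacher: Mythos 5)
Your proposal is correct and matches the paper's argument: the paper proves Proposition~\ref{prop_G1} as ``a direct consequence of identity (\ref{eq6_P_19})'', which is precisely the pointwise identity $Q \circ s_{\mathrm{g}}^{\mathbf{v}} = Q$ that you establish from (\ref{eq6_3_1}), (\ref{eq6_2_1}) and the additive composition law for Galilean maps, followed by the same push-forward computation.
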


\subsection{Symmetries in the normalized system}
\label{subsec_sym_G}

Here we show that the group $\mathcal{S}$ from (\ref{eq6_G1}) extends to the normalized system, provided that the velocity $\mathbf{V}(x)$ has proper transformation properties.

\begin{proposition}
\label{prop_Gsym}
Let us assume that the function $\mathbf{V}(x)$ satisfies the conditions
	\begin{equation}
	\label{eq6_4_S1}
	\mathbf{V} \circ s^{\mathbf{Q}}_{\mathrm{r}}(x) = \mathbf{Q}^{-1}\mathbf{V}(x),\quad
	\mathbf{V} \circ s^a_{\mathrm{ts}}(x) = \frac{\mathbf{V}(x)}{a},\quad
	\mathbf{V} \circ s^b_{\mathrm{ss}}(x) = b\mathbf{V}(x).
	\end{equation}
Then, the projector $Q$ commutes with all elements $s \in \mathcal{S}$. 
\end{proposition}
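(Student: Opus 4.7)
The plan is to verify $Q\circ s = s\circ Q$ separately for each of the three generators of $\mathcal{S}$, namely $s^{\mathbf{Q}}_{\mathrm{r}}$, $s^{a}_{\mathrm{ts}}$ and $s^{b}_{\mathrm{ss}}$, and then to get the general composition $s = s^{\mathbf{Q}}_{\mathrm{r}}\circ s^{a}_{\mathrm{ts}}\circ s^{b}_{\mathrm{ss}}$ for free. In each case the pattern is the same: unfold the definition $Q(x) = s_{\mathrm{g}}^{\mathbf{V}(x)}(x)$, pull $s$ past the Galilean map using the appropriate row from Tab.~\ref{tab1}, and match the resulting Galilean parameter using the transformation rule (\ref{eq6_4_S1}) for $\mathbf{V}$.

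Concretely, for $s = s^{\mathbf{Q}}_{\mathrm{r}}$ I would compute
\begin{equation*}
Q\circ s^{\mathbf{Q}}_{\mathrm{r}}(x)
= s_{\mathrm{g}}^{\mathbf{V}\circ s^{\mathbf{Q}}_{\mathrm{r}}(x)}\circ s^{\mathbf{Q}}_{\mathrm{r}}(x)
= s_{\mathrm{g}}^{\mathbf{Q}^{-1}\mathbf{V}(x)}\circ s^{\mathbf{Q}}_{\mathrm{r}}(x),
\end{equation*}
where the second equality uses the first rule of (\ref{eq6_4_S1}). The entry of Tab.~\ref{tab1} at row $s^{\mathbf{Q}}_{\mathrm{r}}$, column $s^{\mathbf{v}}_{\mathrm{g}}$ gives $s^{\mathbf{Q}}_{\mathrm{r}}\circ s^{\mathbf{v}}_{\mathrm{g}} = s^{\mathbf{Q}^{-1}\mathbf{v}}_{\mathrm{g}}\circ s^{\mathbf{Q}}_{\mathrm{r}}$; applied with $\mathbf{v} = \mathbf{V}(x)$, the right-hand side becomes $s^{\mathbf{Q}}_{\mathrm{r}}\circ s_{\mathrm{g}}^{\mathbf{V}(x)}(x) = s^{\mathbf{Q}}_{\mathrm{r}}\circ Q(x)$. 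The temporal and spatial scaling cases are identical in spirit: for $s = s^{a}_{\mathrm{ts}}$ use $\mathbf{V}\circ s^{a}_{\mathrm{ts}} = \mathbf{V}/a$ together with the row $s^{a}_{\mathrm{ts}}$, column $s^{\mathbf{v}}_{\mathrm{g}}$ of Tab.~\ref{tab1}, i.e., $s^{a}_{\mathrm{ts}}\circ s^{\mathbf{v}}_{\mathrm{g}} = s^{\mathbf{v}/a}_{\mathrm{g}}\circ s^{a}_{\mathrm{ts}}$; and for $s = s^{b}_{\mathrm{ss}}$ use $\mathbf{V}\circ s^{b}_{\mathrm{ss}} = b\mathbf{V}$ with $s^{b}_{\mathrm{ss}}\circ s^{\mathbf{v}}_{\mathrm{g}} = s^{b\mathbf{v}}_{\mathrm{g}}\circ s^{b}_{\mathrm{ss}}$. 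In each line the factor from (\ref{eq6_4_S1}) is exactly the one needed to cancel the prefactor introduced by the table, yielding $Q\circ s = s\circ Q$.

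Once commutation is established on generators, the general case follows by iterating: for $s = s^{\mathbf{Q}}_{\mathrm{r}}\circ s^{a}_{\mathrm{ts}}\circ s^{b}_{\mathrm{ss}}$ one applies the three identities in succession,
\begin{equation*}
Q\circ s^{\mathbf{Q}}_{\mathrm{r}}\circ s^{a}_{\mathrm{ts}}\circ s^{b}_{\mathrm{ss}}
= s^{\mathbf{Q}}_{\mathrm{r}}\circ Q\circ s^{a}_{\mathrm{ts}}\circ s^{b}_{\mathrm{ss}}
= s^{\mathbf{Q}}_{\mathrm{r}}\circ s^{a}_{\mathrm{ts}}\circ Q\circ s^{b}_{\mathrm{ss}}
= s^{\mathbf{Q}}_{\mathrm{r}}\circ s^{a}_{\mathrm{ts}}\circ s^{b}_{\mathrm{ss}}\circ Q,
\end{equation*}
finishing the proof. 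There is no real obstacle here; the only thing to be careful about is reading Tab.~\ref{tab1} with the correct convention (row map on the left, column map on the right) and matching this with the correct sign/scaling in (\ref{eq6_4_S1}) so that the Galilean parameter carried by $Q$ transforms consistently. The proof is essentially a bookkeeping exercise verifying that the three transformation rules assumed on $\mathbf{V}$ are precisely those dictated by the Galilean sub-column of the commutation table.
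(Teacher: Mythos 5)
Your proposal is correct and follows essentially the same route as the paper: the paper's proof unfolds $Q(x)=s_{\mathrm{g}}^{\mathbf{V}(x)}(x)$, applies the first rule of (\ref{eq6_4_S1}) together with the corresponding Galilean column of Tab.~\ref{tab1} to treat $s^{\mathbf{Q}}_{\mathrm{r}}$ explicitly, and then notes that the remaining generators $s^{a}_{\mathrm{ts}}$ and $s^{b}_{\mathrm{ss}}$ are handled identically. Your version merely spells out the two scaling cases and the composition step that the paper leaves implicit.
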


One can see that conditions (\ref{eq6_4_S1}) describe natural rules for transformations of velocity vectors under spatial rotations and scalings, as described by relations (\ref{eq3}) in Section~\ref{secEuler}. Since $\mathcal{Z} = Q(\mathcal{X})$, the commutativity in Proposition~\ref{prop_Gsym} implies that $s(\mathcal{Z}) = \mathcal{Z}$ and, hence, elements $s \in \mathcal{S}$ can be considered as the maps $s: \mathcal{Z} \mapsto \mathcal{Z}$.

\begin{theorem}
\label{theorem_G2}
Under conditions of Theorem~\ref{theorem_G} and Proposition~\ref{prop_Gsym}, mappings $s \in \mathcal{S}$ are symmetries of the normalized system, i.e., measures $s_\sharp\zeta$ are invariant for the normalized flow $\Omega^t$ (see Definition \ref{del_measure}). Commutation relations for these symmetries and the flow take the form
	\begin{equation}
	\label{eq6_4_S5}
	\Omega^t \circ s^{\mathbf{Q}}_{\mathrm{r}}
	= s^{\mathbf{Q}}_{\mathrm{r}} \circ \Omega^t, \quad 
	\Omega^t \circ s^a_{\mathrm{ts}}
	= s^a_{\mathrm{ts}} \circ \Omega^{t/a}, \quad
	\Omega^t \circ s^{b}_{\mathrm{ss}} 
	= s^{b}_{\mathrm{ss}} \circ \Omega^t,
	\end{equation}
the same as in the original system (see Tab.~\ref{tab1}).
\end{theorem}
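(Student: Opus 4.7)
The plan is to prove the commutation relations (\ref{eq6_4_S5}) first by direct unfolding of the definition $\Omega^t(z) = Q \circ s_{\mathrm{s}}^{\mathbf{R}^t(z)} \circ \Phi^t(z)$; the invariance claim $\Omega^t_\sharp(s_\sharp \zeta) = s_\sharp \zeta$ then follows from (\ref{eq6_4_S5}) as a short corollary. Because the three generators $s^{\mathbf{Q}}_{\mathrm{r}}, s^a_{\mathrm{ts}}, s^b_{\mathrm{ss}}$ of $\mathcal{S}$ commute pairwise by Tab.~\ref{tab1}, it suffices to treat each generator separately.

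For a single generator $s$ the core calculation consists of three successive reshufflings applied to $\Omega^t \circ s(z) = Q \circ s_{\mathrm{s}}^{\mathbf{R}^t(s(z))} \circ \Phi^t \circ s(z)$: (i) the $\Phi^t$--$s$ commutation of Tab.~\ref{tab1}, which in the case $s = s^a_{\mathrm{ts}}$ rescales the time to $t/a$; (ii) the $s^{\mathbf{r}}_{\mathrm{s}}$--$s$ commutation of Tab.~\ref{tab1}, which transforms the translation index $\mathbf{r}$ to $\mathbf{Q}\mathbf{r}$, to $\mathbf{r}$, or to $\mathbf{r}/b$ respectively; and (iii) the $Q$--$s$ commutation from Proposition~\ref{prop_Gsym}, which moves $s$ to the outside. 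What remains after these moves is to check that $\mathbf{R}^t(s(z))$ transforms under $s$ exactly so as to cancel the reindexing produced in step (ii).

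This transformation law for $\mathbf{R}^t$ I would obtain from uniqueness of solutions of the Cauchy problem (\ref{eq6_2A_1}). Combining (\ref{eq6_4_S1}) with the $s^{\mathbf{r}}_{\mathrm{s}}$--$s$ commutations, a short direct computation yields
\begin{equation*}
\mathbf{v}_{s^{\mathbf{Q}}_{\mathrm{r}}(z)}(\mathbf{r},t) = \mathbf{Q}^{-1}\mathbf{v}_z(\mathbf{Q}\mathbf{r},t),\quad
\mathbf{v}_{s^a_{\mathrm{ts}}(z)}(\mathbf{r},t) = \frac{1}{a}\mathbf{v}_z(\mathbf{r},t/a),\quad
\mathbf{v}_{s^b_{\mathrm{ss}}(z)}(\mathbf{r},t) = b\,\mathbf{v}_z(\mathbf{r}/b,t).
\end{equation*}
Substituting the candidate curves $\mathbf{Q}\,\mathbf{R}^t(s^{\mathbf{Q}}_{\mathrm{r}}(z))$, $\mathbf{R}^{at}(s^a_{\mathrm{ts}}(z))$, and $\mathbf{R}^t(s^b_{\mathrm{ss}}(z))/b$ into the Cauchy problem (\ref{eq6_2A_1}) driven by $\mathbf{v}_z$, a chain-rule calculation shows that all three solve the same initial value problem as $\mathbf{R}^t(z)$. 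Picard uniqueness then gives $\mathbf{R}^t(s^{\mathbf{Q}}_{\mathrm{r}}(z)) = \mathbf{Q}^{-1}\mathbf{R}^t(z)$, $\mathbf{R}^t(s^a_{\mathrm{ts}}(z)) = \mathbf{R}^{t/a}(z)$, and $\mathbf{R}^t(s^b_{\mathrm{ss}}(z)) = b\,\mathbf{R}^t(z)$, exactly the identities needed to complete (\ref{eq6_4_S5}).

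Once (\ref{eq6_4_S5}) is in hand, the symmetry property follows at once: for any $s \in \mathcal{S}$ with associated rescaled time $t'$ ($t' = t$ for rotations and spatial scalings, $t' = t/a$ for temporal scalings),
\begin{equation*}
\Omega^t_\sharp(s_\sharp \zeta) = (\Omega^t\circ s)_\sharp \zeta = (s\circ \Omega^{t'})_\sharp \zeta = s_\sharp(\Omega^{t'}_\sharp \zeta) = s_\sharp \zeta,
\end{equation*}
where the last equality uses the $\Omega^\tau$-invariance of $\zeta$ from Theorem~\ref{theorem_G}. The main obstacle I expect is bookkeeping: each generator permutes the translation index $\mathbf{r}$ differently, and it must be matched exactly against the corresponding transformation of $\mathbf{R}^t$. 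A secondary technical point is ensuring that $\mathbf{v}_{s(z)}$ still satisfies the regularity hypotheses of Definition~\ref{def_G2} so that Picard's theorem is applicable; this should follow from the fact that rotations and scalings act on physical space by smooth linear diffeomorphisms that preserve continuity, differentiability, and the divergence-free condition up to a positive multiplicative constant.
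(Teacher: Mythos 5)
Your proposal is correct, and the core of it --- the commutation relations --- follows essentially the same path as the paper: you derive the transformation laws for the velocity field $\mathbf{v}_{s(z)}$ from (\ref{eq6_4_S1}) and Tab.~\ref{tab1}, deduce $\mathbf{R}^t \circ s^{\mathbf{Q}}_{\mathrm{r}} = \mathbf{Q}^{-1}\mathbf{R}^t$, $\mathbf{R}^{at}\circ s^{a}_{\mathrm{ts}} = \mathbf{R}^t$, $\mathbf{R}^t \circ s^{b}_{\mathrm{ss}} = b\,\mathbf{R}^t$ by Picard uniqueness for the Cauchy problem (\ref{eq6_2A_1}), and then unfold $\Omega^t \circ s$ using Tab.~\ref{tab1} and the $Q$--$s$ commutativity of Proposition~\ref{prop_Gsym}; this is exactly the paper's lemma (\ref{eq6_P_37stA})--(\ref{eq6_P_37stC}) and the computation (\ref{eq6_P_38}). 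Where you diverge is the invariance claim: the paper does not deduce it from (\ref{eq6_4_S5}) but argues structurally, writing $s_\sharp \zeta = (s\circ Q)_\sharp\mu = (Q\circ s)_\sharp\mu = Q_\sharp\widetilde{\mu}$ with $\widetilde{\mu} = s_\sharp\mu$, noting that $\widetilde{\mu}$ is again homogeneous and invariant, and invoking Theorem~\ref{theorem_G} for $\widetilde{\mu}$. Your route $\Omega^t_\sharp(s_\sharp\zeta) = (s\circ\Omega^{t'})_\sharp\zeta = s_\sharp\zeta$ is equally valid (the needed fact $s(\mathcal{Z}) = \mathcal{Z}$ is supplied by Proposition~\ref{prop_Gsym}), and it actually buys slightly more: it shows $s$ preserves the invariance of \emph{any} $\Omega^t$-invariant measure on $\mathcal{Z}$, not only those of the form $Q_\sharp\mu$, whereas the paper's argument is shorter and independent of the commutation relations. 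Your closing remark about checking that $\mathbf{v}_{s(z)}$ retains the regularity and divergence-free property of Definition~\ref{def_G2} is a legitimate point that the paper glosses over, and your justification (smooth linear changes of variables in $\mathbb{R}^d$) is the right one.
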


We see that the normalized system inherits the symmetry group $\mathcal{S}$ of the original system together with the commutation relations. We remark that spatial translations $s_{\mathrm{s}}^{\mathbf{r}}$ could be extended to the normalized system in a similar way if one assumes $\mathbf{V} \circ s_{\mathrm{s}}^{\mathbf{r}}(x) = \mathbf{V}(x)$. But this condition is not of our interest: in the fluid dynamical representation it forbids the choice (\ref{eq6_2_4}), which associates the normalized flow with particle trajectories (\ref{eq6_2_5}). At the same time, one can check using (\ref{eq3}) that the choice (\ref{eq6_2_4}) satisfies all conditions in (\ref{eq6_4_S1}).

Similarly to Corollary~\ref{corr_sym_mes} from Section~\ref{subsec_SNM}, we formulate the symmetry relation between the original and normalized systems. It follows from Propositions~\ref{prop_G1} and \ref{prop_Gsym} as

\begin{corollary}\label{corr_sym_mes2}
If the measure $\mu$ is symmetric with respect to $s \circ s_{\mathrm{g}}^{\mathbf{v}}$ for some $s \in \mathcal{S}$ and $s_{\mathrm{g}}^{\mathbf{v}} \in \mathcal{H}_{\mathrm{g}}$, then the normalized measure $\zeta$ is symmetric with respect to $s$:
	\begin{equation}
	\label{eq2_A8_G}
	(s \circ s_{\mathrm{g}}^{\mathbf{v}})_\sharp \mu = \mu \quad \Rightarrow \quad 
	s_\sharp \zeta = \zeta.
	\end{equation}
\end{corollary}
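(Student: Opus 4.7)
The plan is to obtain the conclusion by a short chain of push-forward manipulations, combining the two ingredients we already have: Proposition~\ref{prop_Gsym}, which says the projector $Q$ commutes with every $s \in \mathcal{S}$, and Proposition~\ref{prop_G1}, which says that Galilean transformations leave the normalized measure unchanged. Theorem~\ref{theorem_G} supplies the defining relation $\zeta = Q_\sharp \mu$.

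Starting from the hypothesis $(s \circ s_{\mathrm{g}}^{\mathbf{v}})_\sharp \mu = \mu$, I would apply $Q_\sharp$ to both sides and then use functoriality of the push-forward, which gives
\begin{equation*}
\zeta = Q_\sharp \mu = Q_\sharp (s \circ s_{\mathrm{g}}^{\mathbf{v}})_\sharp \mu = (Q \circ s \circ s_{\mathrm{g}}^{\mathbf{v}})_\sharp \mu.
\end{equation*}
Next, I would invoke Proposition~\ref{prop_Gsym} to swap $Q$ past $s$, replacing $Q \circ s$ by $s \circ Q$, which turns the right-hand side into $(s \circ Q \circ s_{\mathrm{g}}^{\mathbf{v}})_\sharp \mu = s_\sharp\, Q_\sharp\, (s_{\mathrm{g}}^{\mathbf{v}})_\sharp \mu$. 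Finally, I would apply Proposition~\ref{prop_G1} to the Galilean-transformed measure $\widetilde{\mu} = (s_{\mathrm{g}}^{\mathbf{v}})_\sharp \mu$, concluding that $Q_\sharp \widetilde{\mu} = \zeta$, hence $\zeta = s_\sharp \zeta$, which is the required symmetry.

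Since all of the substantive work has been placed upstream, I do not anticipate a main obstacle here; the only point requiring a bit of care is verifying that the hypotheses of Theorem~\ref{theorem_G} (homogeneity, periodicity, incompressibility) and of Proposition~\ref{prop_Gsym} (the transformation rules \eqref{eq6_4_S1} for $\mathbf{V}$) are tacitly assumed in the statement of the corollary, so that both propositions are legitimately available. Given those standing assumptions, the argument is a three-line diagram chase, completely parallel to the proof of Corollary~\ref{corr_sym_mes} in Section~\ref{subsec_SNM}, with the roles of temporal scalings and the projector $P$ played here by Galilean transformations and the projector $Q$.
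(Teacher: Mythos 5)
Your argument is correct and is exactly the route the paper intends: the corollary is stated in the text as following directly from Propositions~\ref{prop_G1} and \ref{prop_Gsym} (via $\zeta = Q_\sharp\mu = (Q\circ s\circ s_{\mathrm{g}}^{\mathbf{v}})_\sharp\mu = s_\sharp Q_\sharp (s_{\mathrm{g}}^{\mathbf{v}})_\sharp\mu = s_\sharp\zeta$), and you have merely written out the diagram chase the author leaves implicit. Your closing remark about the standing hypotheses (homogeneity, periodicity, incompressibility, and the transformation rules for $\mathbf{V}$) is also right — these are assumed throughout Section~\ref{subsec_sym_G}.
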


\subsection{Proofs of Theorems~\ref{theorem_G} and \ref{theorem_G2} and Propositions \ref{prop_G1} and \ref{prop_Gsym}} 
\label{subsec_proof_G}

We first formulate and prove a few lemmas.

\begin{lemma}\label{lemma_G1}
The function $\mathbf{v}_x(\mathbf{r},t)$ from (\ref{eq6_2_2}) satisfies the following identities
	\begin{eqnarray}
	\mathbf{v}_z(\mathbf{r},t) & = & 
	\mathbf{v}_x\big(\mathbf{r}+\mathbf{V}(x)t,t\big)-\mathbf{V}(x), \quad z = Q(x);
	\label{eq6_P_1} \\[3pt]
	\mathbf{v}_x(\mathbf{r}+\mathbf{r}',t) & = & \mathbf{v}_{x'}(\mathbf{r},t), \quad
	x' = s_{\mathrm{s}}^{\mathbf{r}'}(x).
	\label{eq6_P_2}
	\end{eqnarray}
\end{lemma}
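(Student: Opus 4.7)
The plan is to prove both identities directly from the definition $\mathbf{v}_x(\mathbf{r},t) = \mathbf{V}\circ s_{\mathrm{s}}^{\mathbf{r}}\circ \Phi^t(x)$ in (\ref{eq6_2_2}) by chaining commutation relations from Tab.~\ref{tab1} with the characteristic property (\ref{eq6_2_1}) of the function $\mathbf{V}$. No limits or measure-theoretic steps are needed; the content is purely algebraic manipulation of compositions of maps.

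For the second identity (\ref{eq6_P_2}), which is the simpler one, I would first combine the two translations $s_{\mathrm{s}}^{\mathbf{r}+\mathbf{r}'} = s_{\mathrm{s}}^{\mathbf{r}}\circ s_{\mathrm{s}}^{\mathbf{r}'}$, then use the commutation $\Phi^t \circ s_{\mathrm{s}}^{\mathbf{r}'} = s_{\mathrm{s}}^{\mathbf{r}'}\circ \Phi^t$ from Tab.~\ref{tab1} to push the translation $s_{\mathrm{s}}^{\mathbf{r}'}$ to the right of $\Phi^t$. Collecting $s_{\mathrm{s}}^{\mathbf{r}'}(x) = x'$ inside the argument, we obtain $\mathbf{V}\circ s_{\mathrm{s}}^{\mathbf{r}}\circ \Phi^t(x') = \mathbf{v}_{x'}(\mathbf{r},t)$, as required.

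For the first identity (\ref{eq6_P_1}), I would start with $z = Q(x) = s_{\mathrm{g}}^{\mathbf{V}(x)}(x)$ and write
\begin{equation*}
\mathbf{v}_z(\mathbf{r},t) = \mathbf{V}\circ s_{\mathrm{s}}^{\mathbf{r}} \circ \Phi^t \circ s_{\mathrm{g}}^{\mathbf{V}(x)}(x).
\end{equation*}
The key step is to apply relation (\ref{eq6_3}), $\Phi^t \circ s_{\mathrm{g}}^{\mathbf{v}} = s_{\mathrm{s}}^{\mathbf{v}t}\circ s_{\mathrm{g}}^{\mathbf{v}}\circ \Phi^t$, with $\mathbf{v} = \mathbf{V}(x)$, producing a translation $s_{\mathrm{s}}^{\mathbf{V}(x)t}$ that merges with $s_{\mathrm{s}}^{\mathbf{r}}$ into $s_{\mathrm{s}}^{\mathbf{r}+\mathbf{V}(x)t}$. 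Then I would use the fact that spatial translations commute with Galilean transformations (Tab.~\ref{tab1}) to bring $s_{\mathrm{g}}^{\mathbf{V}(x)}$ directly next to $\mathbf{V}$, giving
\begin{equation*}
\mathbf{v}_z(\mathbf{r},t) = \mathbf{V}\circ s_{\mathrm{g}}^{\mathbf{V}(x)}\circ s_{\mathrm{s}}^{\mathbf{r}+\mathbf{V}(x)t}\circ \Phi^t(x).
\end{equation*}
Finally, applying the defining property $\mathbf{V}\circ s_{\mathrm{g}}^{\mathbf{v}} = \mathbf{V}(\cdot) - \mathbf{v}$ from (\ref{eq6_2_1}) pointwise to the argument $s_{\mathrm{s}}^{\mathbf{r}+\mathbf{V}(x)t}\circ \Phi^t(x)$ yields $\mathbf{v}_x(\mathbf{r}+\mathbf{V}(x)t,t) - \mathbf{V}(x)$, which is precisely (\ref{eq6_P_1}).

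There is no substantive obstacle here; the whole proof is a short, two-line calculation for each identity. The only thing to be careful about is the order of composition and the fact that (\ref{eq6_2_1}) characterizes $\mathbf{V}$ globally on $\mathcal{X}$, so it can be evaluated at the compound argument appearing after all the flow and translation maps have acted.
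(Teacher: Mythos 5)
Your proposal is correct and follows essentially the same route as the paper: both identities are obtained by unwinding the definition $\mathbf{v}_x(\mathbf{r},t) = \mathbf{V}\circ s_{\mathrm{s}}^{\mathbf{r}}\circ\Phi^t(x)$, using relation (\ref{eq6_3}) together with the commutativity of spatial translations with the flow and with Galilean maps from Tab.~\ref{tab1}, and finishing with property (\ref{eq6_2_1}) of $\mathbf{V}$. The intermediate expression $\mathbf{V}\circ s_{\mathrm{g}}^{\mathbf{V}(x)}\circ s_{\mathrm{s}}^{\mathbf{r}+\mathbf{V}(x)t}\circ\Phi^t(x)$ you reach is exactly the one in the paper's proof, so no further comparison is needed.
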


\begin{proof}
Using definitions (\ref{eq6_2_2}) and (\ref{eq6_3_1}), we write
	\begin{equation}
	\mathbf{v}_z(\mathbf{r},t) 
	= \mathbf{V} \circ s_{\mathrm{s}}^{\mathbf{r}} \circ \Phi^t (z) 
	= \mathbf{V} \circ s_{\mathrm{s}}^{\mathbf{r}} \circ \Phi^t \circ Q (x) 
	= \mathbf{V} \circ s_{\mathrm{s}}^{\mathbf{r}} \circ \Phi^t \circ s_{\mathrm{g}}^{\mathbf{V}(x)}(x). 
	\label{eq6_P_3}
	\end{equation}
Using commutation relations of Tab.~\ref{tab1}, we obtain
	\begin{equation}
	\mathbf{v}_z(\mathbf{r},t) 
	= \mathbf{V} \circ s_{\mathrm{g}}^{\mathbf{V}(x)} 
	\circ s_{\mathrm{s}}^{\mathbf{r}+\mathbf{V}(x)t} 
	\circ \Phi^t(x). 
	\label{eq6_P_4}
	\end{equation}
Using (\ref{eq6_2_1}) and (\ref{eq6_2_2}) in (\ref{eq6_P_4}) yields  the identity (\ref{eq6_P_1}) as
	\begin{equation}
	\mathbf{v}_z(\mathbf{r},t) 
	= \mathbf{V} \circ s_{\mathrm{s}}^{\mathbf{r}+\mathbf{V}(x)t} 
	\circ \Phi^t(x)-\mathbf{V}(x)
	= \mathbf{v}_x\big(\mathbf{r}+\mathbf{V}(x)t,t\big)-\mathbf{V}(x). 
	\label{eq6_P_5}
	\end{equation}
Equality (\ref{eq6_P_2}) is obtained using definition (\ref{eq6_2_2}) and relations of Tab.~\ref{tab1} as	\begin{equation}
	\mathbf{v}_x(\mathbf{r}+\mathbf{r}',t) 
	= \mathbf{V} \circ s_{\mathrm{s}}^{\mathbf{r}+\mathbf{r}'} \circ \Phi^t (x) 
	= \mathbf{V} \circ s_{\mathrm{s}}^{\mathbf{r}} \circ \Phi^t \circ s_{\mathrm{s}}^{\mathbf{r}'}  (x) 
	= \mathbf{V} \circ s_{\mathrm{s}}^{\mathbf{r}} \circ \Phi^t(x') 
	=  \mathbf{v}_{x'}(\mathbf{r},t). 
	\label{eq6_P_3ex}
	\end{equation}
\end{proof}

\begin{lemma}\label{lemma_G1b}
The function $\mathbf{R}^t(x)$ defined by (\ref{eq6_2A_1}) satisfies the following identities 
	\begin{eqnarray}
	\mathbf{R}^t(z) & = & \mathbf{R}^t(x)-\mathbf{V}(x)t, 
	\quad z = Q(x);
	\label{eq6_P_1b} \\[3pt]
	\mathbf{R}^{t_1+t_2}(x) & = & \mathbf{R}^{t_1}(x)+\mathbf{R}^{t_2}(x_1),
	\quad x_1 = s_{\mathrm{s}}^{\mathbf{R}^{t_1}(x)} \circ \Phi^{t_1}(x). 
	\label{eq6_P_2b}
	\end{eqnarray}
\end{lemma}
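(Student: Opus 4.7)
\textbf{Proof plan for Lemma~\ref{lemma_G1b}.} Both identities are statements about trajectories of the non-autonomous ODE (\ref{eq6_2A_1}); my plan is to derive each by reducing the right-hand side to a solution of the same Cauchy problem as the left-hand side and invoking uniqueness (the velocity field lies in the regularity class of Definition~\ref{def_G2}, so Picard applies).

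For (\ref{eq6_P_1b}), I set $\widetilde{\mathbf{R}}^t := \mathbf{R}^t(x) - \mathbf{V}(x)\,t$. Clearly $\widetilde{\mathbf{R}}^0 = \mathbf{0}$. Differentiating and using (\ref{eq6_2A_1}) for $\mathbf{R}^t(x)$,
\begin{equation*}
\frac{d\widetilde{\mathbf{R}}^t}{dt} = \mathbf{v}_x\bigl(\mathbf{R}^t(x),t\bigr) - \mathbf{V}(x) = \mathbf{v}_x\bigl(\widetilde{\mathbf{R}}^t + \mathbf{V}(x)t, t\bigr) - \mathbf{V}(x).
\end{equation*}
The right-hand side equals $\mathbf{v}_z(\widetilde{\mathbf{R}}^t, t)$ by identity (\ref{eq6_P_1}) of Lemma~\ref{lemma_G1}. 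Hence $\widetilde{\mathbf{R}}^t$ satisfies the Cauchy problem (\ref{eq6_2A_1}) written for $z$, and uniqueness yields $\widetilde{\mathbf{R}}^t = \mathbf{R}^t(z)$.

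For (\ref{eq6_P_2b}), I view $t \mapsto \widetilde{\mathbf{R}}^t := \mathbf{R}^{t_1+t}(x) - \mathbf{R}^{t_1}(x)$ as a new curve with $\widetilde{\mathbf{R}}^0 = \mathbf{0}$. Its derivative is
\begin{equation*}
\frac{d\widetilde{\mathbf{R}}^t}{dt} = \mathbf{v}_x\bigl(\mathbf{R}^{t_1+t}(x),\,t_1+t\bigr) = \mathbf{v}_x\bigl(\widetilde{\mathbf{R}}^t + \mathbf{R}^{t_1}(x),\,t_1+t\bigr).
\end{equation*}
The key step is a time-shifted version of (\ref{eq6_P_2}): using (\ref{eq6_2_2}), the flow property $\Phi^t \circ \Phi^{t_1} = \Phi^{t_1+t}$, the commutativity $\Phi^t \circ s_{\mathrm{s}}^{\mathbf{r}} = s_{\mathrm{s}}^{\mathbf{r}} \circ \Phi^t$ from Tab.~\ref{tab1}, and the composition rule $s_{\mathrm{s}}^{\mathbf{r}} \circ s_{\mathrm{s}}^{\mathbf{r}'} = s_{\mathrm{s}}^{\mathbf{r}+\mathbf{r}'}$, one computes directly
\begin{equation*}
\mathbf{v}_{x_1}(\mathbf{r},t) = \mathbf{V}\circ s_{\mathrm{s}}^{\mathbf{r}} \circ \Phi^t \circ s_{\mathrm{s}}^{\mathbf{R}^{t_1}(x)} \circ \Phi^{t_1}(x) = \mathbf{V}\circ s_{\mathrm{s}}^{\mathbf{r}+\mathbf{R}^{t_1}(x)} \circ \Phi^{t_1+t}(x) = \mathbf{v}_x\bigl(\mathbf{r}+\mathbf{R}^{t_1}(x),\,t_1+t\bigr).
\end{equation*}
Substituting with $\mathbf{r} = \widetilde{\mathbf{R}}^t$ gives $d\widetilde{\mathbf{R}}^t/dt = \mathbf{v}_{x_1}(\widetilde{\mathbf{R}}^t, t)$, so $\widetilde{\mathbf{R}}^t$ solves the Cauchy problem defining $\mathbf{R}^t(x_1)$. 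Uniqueness then yields $\widetilde{\mathbf{R}}^t = \mathbf{R}^t(x_1)$, i.e., (\ref{eq6_P_2b}).

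The only nontrivial step is the auxiliary identity $\mathbf{v}_{x_1}(\mathbf{r},t) = \mathbf{v}_x(\mathbf{r}+\mathbf{R}^{t_1}(x),\,t_1+t)$, which should be stated as a small intermediate claim; everything else is bookkeeping plus Picard's uniqueness. I do not anticipate a genuine obstacle here, since both identities are essentially cocycle-type relations for the flow of a time-dependent velocity field, and the needed commutations are exactly those already used in the proof of Lemma~\ref{lemma_G1}.
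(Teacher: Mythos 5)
Your proposal is correct and follows essentially the same route as the paper: both identities are proved by showing that the candidate curve solves the Cauchy problem (\ref{eq6_2A_1}) for the appropriate initial state (using identity (\ref{eq6_P_1}) of Lemma~\ref{lemma_G1} for the first, and the commutation relations of Tab.~\ref{tab1} for the second) and then invoking uniqueness. The only cosmetic differences are the direction of the substitution in (\ref{eq6_P_1b}) and that you isolate the auxiliary identity $\mathbf{v}_{x_1}(\mathbf{r},t) = \mathbf{v}_x\bigl(\mathbf{r}+\mathbf{R}^{t_1}(x),\,t_1+t\bigr)$ as a general claim where the paper performs the same computation at the specific point $\mathbf{r} = \widehat{\mathbf{R}}^{t_2}$.
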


\begin{proof}
Using (\ref{eq6_2A_1}), we write the Cauchy problem for the function $\mathbf{R}^t(z)$ as
	\begin{equation}
	\frac{d}{dt}\, \mathbf{R}^t(z) 
	= \mathbf{v}_z\big(\mathbf{R}^t(z),t\big),\quad 
	\mathbf{R}^0(z) = \mathbf{0}.
	\label{eq6_P_6}
	\end{equation}
Using (\ref{eq6_P_1}), we write (\ref{eq6_P_6}) as
	\begin{equation}
	\frac{d}{dt}\, \mathbf{R}^t(z) +\mathbf{V}(x)
	= \mathbf{v}_x\big(\mathbf{R}^t(z)+\mathbf{V}(x)t,t\big),\quad 
	\mathbf{R}^0(z) = \mathbf{0}.
	\label{eq6_P_7}
	\end{equation}
These expressions can be written as
	\begin{equation}
	\frac{d\widetilde{\mathbf{R}}^t}{dt} 
	= \mathbf{v}_x\big(\widetilde{\mathbf{R}}^t,t\big),\quad 
	\widetilde{\mathbf{R}}^0 = \mathbf{0},\quad 
	\widetilde{\mathbf{R}}^t = \mathbf{R}^t(z)+\mathbf{V}(x)t.
	\label{eq6_P_8}
	\end{equation}
Comparison with (\ref{eq6_2A_1}) yields $\widetilde{\mathbf{R}}^t = \mathbf{R}^t(x)$ proving (\ref{eq6_P_1b}).

Fixing $t_1$, let us consider 
	\begin{equation}
	\widehat{\mathbf{R}}^{t_2} = \mathbf{R}^{t_1+t_2}(x)-\mathbf{R}^{t_1}(x)
	\label{eq6_P_A1}
	\end{equation}
as a function of $t_2$. Using (\ref{eq6_2A_1}), we have
	\begin{equation}
	\frac{d}{dt_2}\, \widehat{\mathbf{R}}^{t_2} 
	= \frac{d}{dt_2}\, \mathbf{R}^{t_1+t_2}(x) 
	= \mathbf{v}_x\big(\mathbf{R}^{t_1+t_2}(x),t_1+t_2\big),\quad 
	\widehat{\mathbf{R}}^0 = \mathbf{0}.
	\label{eq6_P_A2}
	\end{equation}
Expressing $\mathbf{v}_x$ from (\ref{eq6_2_2}) yields
	\begin{equation}
	\mathbf{v}_x\big(\mathbf{R}^{t_1+t_2}(x),t_1+t_2\big)
	= \mathbf{V} \circ s_{\mathrm{s}}^{\mathbf{R}^{t_1+t_2}(x)} \circ \Phi^{t_1+t_2}(x)
	= \mathbf{V} \circ s_{\mathrm{s}}^{\widehat{\mathbf{R}}^{t_2}} 
	\circ \Phi^{t_2}(x_1),
	\label{eq6_P_A3}
	\end{equation}
where the last equality is derived using commutation relations of Tab.~\ref{tab1}, $\widehat{\mathbf{R}}^{t_2}$ from (\ref{eq6_P_A1}) and $x_1$ from (\ref{eq6_P_2b}). Using (\ref{eq6_2_2}) and (\ref{eq6_P_A3}), we reduce (\ref{eq6_P_A2}) to the Cauchy problem
	\begin{equation}
	\frac{d}{dt_2}\, \widehat{\mathbf{R}}^{t_2} 
	= \mathbf{v}_{x_1}\big(\widehat{\mathbf{R}}^{t_2},t_2\big),\quad 
	\widehat{\mathbf{R}}^0 = \mathbf{0}.
	\label{eq6_P_A4}
	\end{equation}
According to (\ref{eq6_2A_1}), this implies that $\widehat{\mathbf{R}}^{t_2} = \mathbf{R}^{t_2}(x_1)$ and, hence, we derive (\ref{eq6_P_2b}) from (\ref{eq6_P_A1}).
\end{proof}

\begin{lemma}\label{lemma_G2}
For an incompressible flow, the expression
	\begin{equation}
	\mathbf{T}_x^t(\mathbf{r}) = \mathbf{r}+\mathbf{R}^t \circ s_{\mathrm{s}}^{\mathbf{r}} (x)
	\label{eq6_P_9}
	\end{equation}
defines a volume-preserving map $\mathbf{T}_x^t: \mathbb{R}^d \mapsto \mathbb{R}^d$ for all $x$ and $t$. 
\end{lemma}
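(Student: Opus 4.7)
The plan is to identify $\mathbf{T}_x^t$ with the time-$t$ flow map of the time-dependent velocity field $\mathbf{v}_x(\mathbf{r},t)$ in physical space $\mathbb{R}^d$, and then invoke Liouville's theorem using the incompressibility condition~(\ref{eq6_2_3}).

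First, I would fix $x$ and an initial point $\mathbf{r}_0 \in \mathbb{R}^d$, and study the curve
	\begin{equation*}
	\mathbf{r}(t) = \mathbf{T}_x^t(\mathbf{r}_0) = \mathbf{r}_0 + \mathbf{R}^t \circ s_{\mathrm{s}}^{\mathbf{r}_0}(x).
	\end{equation*}
Clearly $\mathbf{r}(0) = \mathbf{r}_0$, since $\mathbf{R}^0 = \mathbf{0}$. Differentiating in $t$ and applying the Cauchy problem~(\ref{eq6_2A_1}) to the state $y = s_{\mathrm{s}}^{\mathbf{r}_0}(x)$ yields
	\begin{equation*}
	\frac{d\mathbf{r}(t)}{dt}
	= \mathbf{v}_{s_{\mathrm{s}}^{\mathbf{r}_0}(x)}\big(\mathbf{R}^t \circ s_{\mathrm{s}}^{\mathbf{r}_0}(x),t\big).
	\end{equation*}
Applying identity~(\ref{eq6_P_2}) of Lemma~\ref{lemma_G1} with $\mathbf{r}' = \mathbf{r}_0$ and $x' = s_{\mathrm{s}}^{\mathbf{r}_0}(x)$ rewrites the right-hand side as $\mathbf{v}_x(\mathbf{R}^t \circ s_{\mathrm{s}}^{\mathbf{r}_0}(x) + \mathbf{r}_0,t) = \mathbf{v}_x(\mathbf{r}(t),t)$. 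Hence $\mathbf{r}(t)$ solves the Cauchy problem
	\begin{equation*}
	\frac{d\mathbf{r}(t)}{dt} = \mathbf{v}_x(\mathbf{r}(t),t),\quad \mathbf{r}(0) = \mathbf{r}_0,
	\end{equation*}
which identifies $\mathbf{T}_x^t$ as the time-$t$ flow map of the vector field $\mathbf{v}_x(\cdot,t)$ on $\mathbb{R}^d$, parametrized by the initial condition $\mathbf{r}_0$.

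Once this identification is in place, volume preservation follows from the classical Liouville theorem: the Jacobian $J(\mathbf{r}_0,t) = \det(\partial \mathbf{T}_x^t/\partial \mathbf{r}_0)$ satisfies
	\begin{equation*}
	\frac{d}{dt}\log J(\mathbf{r}_0,t)
	= (\mathrm{div}\,\mathbf{v}_x)(\mathbf{r}(t),t) = 0
	\end{equation*}
by the incompressibility hypothesis in Definition~\ref{def_G2}. Since $J(\mathbf{r}_0,0) = 1$, we get $J \equiv 1$, so $\mathbf{T}_x^t$ preserves Lebesgue volume in $\mathbb{R}^d$. The continuous differentiability of $\mathbf{v}_x$ in $\mathbf{r}$ (with periodicity~(\ref{eq6_2_2b}) ensuring global existence) makes the Picard-Lindel\"of argument and the Jacobian calculation rigorous, so $\mathbf{T}_x^t$ is a well-defined $C^1$-diffeomorphism.

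The only non-routine step is the translation identity~(\ref{eq6_P_2}) that converts the derivative of $\mathbf{R}^t \circ s_{\mathrm{s}}^{\mathbf{r}_0}(x)$, which a priori involves the velocity field associated with the shifted state $s_{\mathrm{s}}^{\mathbf{r}_0}(x)$, into an expression in $\mathbf{v}_x$ evaluated along $\mathbf{r}(t)$. This is precisely what allows different starting points $\mathbf{r}_0$ to be viewed as trajectories of a \emph{single} autonomous-in-$x$ velocity field, so that Liouville's theorem applies globally rather than per starting point. Everything else is standard ODE theory, and I do not anticipate further obstacles.
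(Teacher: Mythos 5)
Your proof is correct and follows essentially the same route as the paper: identify $\mathbf{T}_x^t$ as the flow map of the single field $\mathbf{v}_x(\cdot,t)$ and conclude by Liouville's theorem under the incompressibility hypothesis. The only cosmetic difference is that you invoke the already-established translation identity~(\ref{eq6_P_2}) of Lemma~\ref{lemma_G1}, whereas the paper re-derives that step inline from definition~(\ref{eq6_2_2}) and the commutation relations of Tab.~\ref{tab1}; the two are equivalent.
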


\begin{proof}
Denoting $\widetilde{x} = s_{\mathrm{s}}^{\mathbf{r}} (x)$, we write $\mathbf{T}_x^t(\mathbf{r}) = \mathbf{r}+\mathbf{R}^t(\widetilde{x})$. Then, using (\ref{eq6_2A_1}) for $\widetilde{x}$, we obtain
	\begin{equation}
	\frac{d\mathbf{T}_x^t}{dt} = \mathbf{v}_{\widetilde{x}}
	\left(\mathbf{R}^t(\widetilde{x}),t\right).
	\label{eq6_P_12}
	\end{equation}
Substituting (\ref{eq6_2_2}) yields
	\begin{equation}
	\frac{d\mathbf{T}_x^t}{dt} = 
	\mathbf{V} \circ s_{\mathrm{s}}^{\mathbf{R}^t(\widetilde{x})} 
	\circ \Phi^t (\widetilde{x}).
	\label{eq6_P_13}
	\end{equation}
Since $\widetilde{x} = s_{\mathrm{s}}^{\mathbf{r}} (x)$, we modify expression (\ref{eq6_P_13}) as
	\begin{equation}
	\frac{d\mathbf{T}_x^t}{dt} = 
	\mathbf{V} \circ s_{\mathrm{s}}^{\mathbf{R}^t(\widetilde{x})} 
	\circ \Phi^t \circ s_{\mathrm{s}}^{\mathbf{r}} (x) 
	= \mathbf{V} \circ s_{\mathrm{s}}^{\mathbf{r}+\mathbf{R}^t(\widetilde{x})} \circ \Phi^t  (x)
	= \mathbf{V} \circ s_{\mathrm{s}}^{\mathbf{T}_x^t} \circ \Phi^t  (x),
	\label{eq6_P_14}
	\end{equation}
where we used commutation relations of Tab.~\ref{tab1}. 
Using (\ref{eq6_2_2}) in (\ref{eq6_P_14}) yields
	\begin{equation}
	\frac{d\mathbf{T}_x^t}{dt} = \mathbf{v}_x(\mathbf{T}_x^t,t).
	\label{eq6_P_15}
	\end{equation}
The incompressibility condition in Definition~\ref{def_G2} implies that the vector field in the right-hand side of (\ref{eq6_P_15}) is divergence-free. Also, since $\mathbf{R}^0 = \mathbf{0}$ in (\ref{eq6_2A_1}), the mapping $\mathbf{T}_x^0(\mathbf{r}) = \mathbf{r}$ is volume-preserving at $t = 0$. By Liouville’s theorem for divergence-free fields~\cite[\S{V.3}]{hartman2002ordinary}, $\mathbf{T}_x^t(\mathbf{r})$ is volume-preserving for any $t$. \end{proof}

\begin{proof}[Proof of Theorem~\ref{theorem_G}]
First, we prove that $\zeta$ is invariant, i.e., $\Omega^t_\sharp\zeta = \zeta$. Using (\ref{eq6_3_3}), we write
	\begin{equation}
	\Omega^t_\sharp\zeta 
	= \widehat{\Omega}^t_\sharp \mu, 
	\label{eq6_P_16}
	\end{equation}
where we introduced the mapping $\widehat{\Omega}^t: \mathcal{X} \mapsto \mathcal{Z}$ as
	\begin{equation}
	\widehat{\Omega}^t = \Omega^t \circ Q.
	\label{eq6_P_16b}
	\end{equation}
Using (\ref{eq6_3_2}) and (\ref{eq6_3_1}), we obtain
	\begin{equation}
	\widehat{\Omega}^t(x) 
	= Q \circ s_{\mathrm{s}}^{\mathbf{R}^t(z)} \circ \Phi^t \circ 
	s_{\mathrm{g}}^{\mathbf{V}(x)}(x),\quad 
	z = Q (x).
	\label{eq6_P_17}
	\end{equation}
Then, commutation relations of Tab.~\ref{tab1} yield
	\begin{equation}
	\widehat{\Omega}^t(x) 
	= Q \circ s_{\mathrm{g}}^{\mathbf{V}(x)} 
	\circ s_{\mathrm{s}}^{\mathbf{R}^t(z)+\mathbf{V}(x)t} \circ \Phi^t(x).
	\label{eq6_P_18}
	\end{equation}
Using (\ref{eq6_3_1}) and (\ref{eq6_2_1}), we obtain the identity
	\begin{equation}
	Q \circ s_{\mathrm{g}}^{\mathbf{v}}(x) 
	= s_{\mathrm{g}}^{\mathbf{V}\circ s_{\mathrm{g}}^{\mathbf{v}}(x)+\mathbf{v}}(x) 
	= s_{\mathrm{g}}^{\mathbf{V}(x)}(x) = Q(x)
	\label{eq6_P_19}
	\end{equation}
for any $x$ and $\mathbf{v}$. Using (\ref{eq6_P_1b}) and (\ref{eq6_P_19}), we reduce (\ref{eq6_P_18}) to the form
	\begin{equation}
	\widehat{\Omega}^t(x) 
	= Q \circ s_{\mathrm{s}}^{\mathbf{R}^t(x)} \circ \Phi^t(x).
	\label{eq6_P_20}
	\end{equation}
	
Now we use the homogeneity property (\ref{eq6_4}) and express (\ref{eq6_P_16}) as
	\begin{equation}
	\Omega^t_\sharp\zeta 
	= \left(\widehat{\Omega}_{\mathbf{r}}^t\right)_\sharp \mu
	\label{eq6_P_21}
	\end{equation}
with the new mapping $\widehat{\Omega}_{\mathbf{r}}^t: \mathcal{X} \mapsto \mathcal{Z}$ given by
	\begin{equation}
	\widehat{\Omega}_{\mathbf{r}}^t(x) = \widehat{\Omega}^t \circ s_{\mathrm{s}}^{\mathbf{r}}(x)
	\label{eq6_P_22}
	\end{equation}
for any $\mathbf{r} \in \mathbb{R}^d$. Using (\ref{eq6_P_20}), relations of Tab.~\ref{tab1} and (\ref{eq6_P_9}), we express (\ref{eq6_P_22}) as
	\begin{equation}
	\widehat{\Omega}_{\mathbf{r}}^t(x) 
	= Q \circ s_{\mathrm{s}}^{\mathbf{R}^t \circ
	s_{\mathrm{s}}^{\mathbf{r}}(x)} \circ \Phi^t \circ s_{\mathrm{s}}^{\mathbf{r}}(x)
	= Q \circ s_{\mathrm{s}}^{\mathbf{T}_x^t(\mathbf{r})} \circ \Phi^t(x).
	\label{eq6_P_23}
	\end{equation}

Considering a measure on $\mathcal{X} \times \mathbb{R}^d$ as a product of $\mu$ and $d$-dimensional volume measure, Fubini's theorem~\cite{rudin2006real} allows integrating equality (\ref{eq6_P_21}) with respect to $\mathbf{r}$. Taking into account the periodicity property (\ref{eq6_5}), we integrate over the periodic domain 
	\begin{equation}
	\label{eq6_P_24vol}
	\mathcal{T} = \left\{\mathbf{r} \in \mathbb{R}^d: \
	\mathbf{r} = a_1 \mathbf{e}_1+\cdots+a_d \mathbf{e}_d,\ a_1,\ldots,a_d \in [0,1] \right\}. 
	\end{equation}
Since the left-hand side in (\ref{eq6_P_21}) does not depend on $\mathbf{r}$, this yields
	\begin{equation}
	V_{\mathcal{T}}\,\Omega^t_\sharp\zeta 
	= \int_{\mathcal{T}}
	\left(\widehat{\Omega}_{\mathbf{r}}^t\right)_\sharp \mu\, d\mathbf{r}
	\label{eq6_P_24}
	\end{equation}
with the volume $V_{\mathcal{T}} = \int_{\mathcal{T}} d\mathbf{r}$. Now we change the integration variable as $\mathbf{r}' = \mathbf{T}_x^t(\mathbf{r})$. There is no extra (Jacobian determinant) factor in the new integral expression, because $\mathbf{T}_x^t$ is volume-preserving by Lemma~\ref{lemma_G2}. As a result, we write (\ref{eq6_P_24}) as
	\begin{equation}
	\Omega^t_\sharp\zeta 
	= \frac{1}{V_{\mathcal{T}}}\int_{\mathcal{T}'}
	\left(Q \circ s_{\mathrm{s}}^{\mathbf{r}'} \circ \Phi^t
	\right)_\sharp \mu\, d\mathbf{r}', \quad
	\mathcal{T}' = \mathbf{T}_x^t(\mathcal{T}),
	\label{eq6_P_25}
	\end{equation}
where we substituted expression (\ref{eq6_P_23}) written in terms of the new vector $\mathbf{r}'$. 
Since $\mu$ is invariant and homogeneous, the push-forward in (\ref{eq6_P_25}) is evaluated as
	\begin{equation}
	\left(Q \circ s_{\mathrm{s}}^{\mathbf{r}'} \circ \Phi^t
	\right)_\sharp \mu  = 
	Q_\sharp \mu  = \zeta,
	\label{eq6_P_26}
	\end{equation}
where we also used (\ref{eq6_3_3}). This expression does not depend on $\mathbf{r}'$. Using (\ref{eq6_P_26}) in (\ref{eq6_P_25}) yields
	\begin{equation}
	\Omega^t_\sharp\zeta 
	= \left(\frac{1}{V_{\mathcal{T}}}\int_{\mathcal{T}'} d\mathbf{r}'\right) \zeta
	= \zeta,
	\label{eq6_P_27}
	\end{equation}
because $\mathcal{T}'$ has the same volume as $\mathcal{T}$. This proves the invariance of $\zeta$.

It remains to prove that $\Omega^t$ is a flow. Since $\mathbf{R}^0 = \mathbf{0}$ from (\ref{eq6_2A_1}), expression (\ref{eq6_3_2}) yields $\Omega^0 = Q$, which is the identity map in $\mathcal{Z}$. We have to check the composition relation 
	\begin{equation}
	\Omega^{t_2} \circ \Omega^{t_1}(z) = \Omega^{t_1+t_2}(z).
	\label{eq6_P_28}
	\end{equation}
Denoting $z_1 = \Omega^{t_1}(z)$ and using (\ref{eq6_3_2}) and (\ref{eq6_3_1}), we write
	\begin{equation}
	z_1 = Q \circ s_{\mathrm{s}}^{\mathbf{R}^{t_1}(z)} \circ \Phi^{t_1}(z)
	= s_{\mathrm{g}}^{\mathbf{V}(x_1)} 
	\circ s_{\mathrm{s}}^{\mathbf{R}^{t_1}(z)} \circ \Phi^{t_1}(z)
	= Q(x_1),\quad
	x_1 = s_{\mathrm{s}}^{\mathbf{R}^{t_1}(z)} \circ \Phi^{t_1}(z).
	\label{eq6_P_29}
	\end{equation}
Using (\ref{eq6_3_2}) and (\ref{eq6_P_29}) we obtain
	\begin{equation}
	\Omega^{t_2} \circ \Omega^{t_1}(z) 
	= \Omega^{t_2}(z_1) 
	= Q \circ s_{\mathrm{s}}^{\mathbf{R}^{t_2}(z_1)} \circ \Phi^{t_2}(z_1)
	= Q \circ s_{\mathrm{s}}^{\mathbf{R}^{t_2}(z_1)} \circ \Phi^{t_2}
	\circ s_{\mathrm{g}}^{\mathbf{V}(x_1)} 
	\circ s_{\mathrm{s}}^{\mathbf{R}^{t_1}(z)} \circ \Phi^{t_1}(z).
	\label{eq6_P_30}
	\end{equation}
Commutation relations of Tab.~\ref{tab1} and property (\ref{eq6_P_19}) yield
	\begin{equation}
	\Omega^{t_2} \circ \Omega^{t_1}(z) 
	= Q \circ s_{\mathrm{s}}^{\mathbf{r}} \circ \Phi^{t_1+t_2}(z),
	\quad
	\mathbf{r} = \mathbf{R}^{t_2}(z_1)+\mathbf{V}(x_1)t_2+\mathbf{R}^{t_1}(z).
	\label{eq6_P_31}
	\end{equation}
Applying identities of Lemma~\ref{lemma_G1b} with $x_1$ from (\ref{eq6_P_29}), we have
	\begin{equation}
	\mathbf{r} = \mathbf{R}^{t_2}(x_1)+\mathbf{R}^{t_1}(z) = \mathbf{R}^{t_1+t_2}(z).
	\label{eq6_P_33}
	\end{equation}
Using this expression with (\ref{eq6_3_2}) in (\ref{eq6_P_31}), we obtain (\ref{eq6_P_28}).
\end{proof}

\begin{proof}[Proof of Proposition~\ref{prop_G1}] 
It is a direct consequence of identity (\ref{eq6_P_19}).
\end{proof}

\begin{proof}[Proof of Proposition~\ref{prop_Gsym}]
Let us consider the first relation in (\ref{eq6_4_S1}). Using (\ref{eq6_3_1}), we obtain
	\begin{equation}
	Q \circ s_{\mathrm{r}}^{\mathbf{Q}}(x) 
	= s_{\mathrm{g}}^{\mathbf{V}\circ s_{\mathrm{r}}^{\mathbf{Q}}(x)}
	\circ  s_{\mathrm{r}}^{\mathbf{Q}}(x)
 	= s_{\mathrm{g}}^{\mathbf{Q}^{-1}\mathbf{V}(x)}
	\circ  s_{\mathrm{r}}^{\mathbf{Q}}(x)
	= s_{\mathrm{r}}^{\mathbf{Q}}(x) \circ s_{\mathrm{g}}^{\mathbf{V}(x)}(x) 
	= s_{\mathrm{r}}^{\mathbf{Q}} \circ Q(x),
	\label{eq6_PP6_1}
	\end{equation}
where we used commutation relations of Tab.~\ref{tab1}. This proves the commutativity of $Q$ with $s_{\mathrm{r}}^{\mathbf{Q}}$. One can check that similar derivations yield the commutativity of $Q$ with the remaining generators $s^a_{\mathrm{ts}}$ and $s^b_{\mathrm{ss}}$ of the group (\ref{eq6_G1}).
\end{proof}

For the proof of Theorem~\ref{theorem_G2} we need 
\begin{lemma}
The following identities hold:
	\begin{eqnarray}
	s^{\mathbf{Q}}_{\mathrm{r}} \circ s_{\mathrm{s}}^{\mathbf{R}^t(x)}(x) 
	& = & \displaystyle
	s_{\mathrm{s}}^{\mathbf{R}^t \circ 
	s^{\mathbf{Q}}_{\mathrm{r}}(x)} \circ s^{\mathbf{Q}}_{\mathrm{r}}(x),
	\label{eq6_P_37stA}	
	\\[2pt]
	s^{a}_{\mathrm{ts}} \circ s_{\mathrm{s}}^{\mathbf{R}^t(x)}(x) 
	& = & \displaystyle
	s_{\mathrm{s}}^{\mathbf{R}^{at}\circ s^{a}_{\mathrm{ts}}(x)} \circ s^{a}_{\mathrm{ts}}(x),
	\label{eq6_P_37stB}	
	\\[2pt]
	s^{b}_{\mathrm{ss}} \circ s_{\mathrm{s}}^{\mathbf{R}^t(x)}(x) 
	& = & \displaystyle
	s_{\mathrm{s}}^{\mathbf{R}^t \circ s^b_{\mathrm{ss}}(x)} \circ s^{b}_{\mathrm{ss}}(x).
	\label{eq6_P_37stC}	
	\end{eqnarray}
\end{lemma}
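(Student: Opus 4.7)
My approach is uniform across all three identities: first move the symmetry $s \in \{s^{\mathbf{Q}}_{\mathrm{r}}, s^a_{\mathrm{ts}}, s^b_{\mathrm{ss}}\}$ past the spatial translation $s_{\mathrm{s}}^{\mathbf{R}^t(x)}$ using the commutation relations of Tab.~\ref{tab1}, then recognize the resulting translation vector as the trajectory $\mathbf{R}^{\bullet}$ evaluated at the transformed state by invoking uniqueness in the Cauchy problem (\ref{eq6_2A_1}).

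\emph{Step 1 (algebraic commutation).} From Tab.~\ref{tab1} one reads off
\begin{equation*}
s^{\mathbf{Q}}_{\mathrm{r}} \circ s_{\mathrm{s}}^{\mathbf{r}} = s_{\mathrm{s}}^{\mathbf{Q}^{-1}\mathbf{r}} \circ s^{\mathbf{Q}}_{\mathrm{r}}, \quad
s^a_{\mathrm{ts}} \circ s_{\mathrm{s}}^{\mathbf{r}} = s_{\mathrm{s}}^{\mathbf{r}} \circ s^a_{\mathrm{ts}}, \quad
s^b_{\mathrm{ss}} \circ s_{\mathrm{s}}^{\mathbf{r}} = s_{\mathrm{s}}^{b\mathbf{r}} \circ s^b_{\mathrm{ss}}.
\end{equation*}
Applied with $\mathbf{r} = \mathbf{R}^t(x)$, identities (\ref{eq6_P_37stA})--(\ref{eq6_P_37stC}) are reduced to the three trajectory transformation rules
\begin{equation*}
\mathbf{R}^t \circ s^{\mathbf{Q}}_{\mathrm{r}}(x) = \mathbf{Q}^{-1}\mathbf{R}^t(x), \quad
\mathbf{R}^{at} \circ s^a_{\mathrm{ts}}(x) = \mathbf{R}^t(x), \quad
\mathbf{R}^t \circ s^b_{\mathrm{ss}}(x) = b\,\mathbf{R}^t(x).
\end{equation*}

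\emph{Step 2 (Cauchy problem).} For each case I first transform the velocity field $\mathbf{v}_x(\mathbf{r},t) = \mathbf{V} \circ s_{\mathrm{s}}^{\mathbf{r}} \circ \Phi^t(x)$ from (\ref{eq6_2_2}), by commuting $\Phi^t$ and $s_{\mathrm{s}}^{\mathbf{r}}$ with $s$ via Tab.~\ref{tab1} and then applying the corresponding transformation rule (\ref{eq6_4_S1}) for $\mathbf{V}$. This gives
\begin{equation*}
\mathbf{v}_{s^{\mathbf{Q}}_{\mathrm{r}}(x)}(\mathbf{r},t) = \mathbf{Q}^{-1}\mathbf{v}_x(\mathbf{Q}\mathbf{r},t), \quad
\mathbf{v}_{s^a_{\mathrm{ts}}(x)}(\mathbf{r},t) = \tfrac{1}{a}\mathbf{v}_x(\mathbf{r},t/a), \quad
\mathbf{v}_{s^b_{\mathrm{ss}}(x)}(\mathbf{r},t) = b\,\mathbf{v}_x(\mathbf{r}/b,t).
\end{equation*}
Then I take the candidate curves $\widetilde{\mathbf{R}}^t := \mathbf{Q}^{-1}\mathbf{R}^t(x)$, $\widetilde{\mathbf{R}}^s := \mathbf{R}^{s/a}(x)$, and $\widetilde{\mathbf{R}}^t := b\,\mathbf{R}^t(x)$, respectively, and verify by direct differentiation plus the chain rule that each satisfies the Cauchy problem (\ref{eq6_2A_1}) written at the transformed state, with initial condition $\widetilde{\mathbf{R}}^0 = \mathbf{0}$. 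The Picard uniqueness invoked after Definition~\ref{def_G2} then forces $\widetilde{\mathbf{R}}$ to coincide with $\mathbf{R}^{\bullet}$ at the transformed state, yielding the three trajectory identities and, hence, (\ref{eq6_P_37stA})--(\ref{eq6_P_37stC}).

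The only delicate bookkeeping lies in the temporal-scaling case, where the commutation $\Phi^t \circ s^a_{\mathrm{ts}} = s^a_{\mathrm{ts}} \circ \Phi^{t/a}$ and the factor $1/a$ from $\mathbf{V} \circ s^a_{\mathrm{ts}} = \mathbf{V}/a$ must combine so that the transformed differential equation matches the original up to the time rescaling $s = at$; with the substitution $\widetilde{\mathbf{R}}^s = \mathbf{R}^{s/a}(x)$ the two factors of $1/a$ cancel and the argument goes through. The rotation and spatial-scaling cases are essentially the same calculation with simpler time dependence.
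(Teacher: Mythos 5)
Your proposal is correct and follows essentially the same route as the paper: commute the symmetry past the spatial translation via Tab.~\ref{tab1} to reduce the identities to the three trajectory transformation rules, then derive the transformed velocity field (using the commutation relations and properties (\ref{eq6_4_S1}) of $\mathbf{V}$) and conclude by uniqueness for the Cauchy problem (\ref{eq6_2A_1}). The paper writes out only the temporal-scaling case in detail and declares the others analogous, whereas you sketch all three; the substance is identical.
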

\begin{proof}
Using commutation relations of Tab.~\ref{tab1}, equalities (\ref{eq6_P_37stA})--(\ref{eq6_P_37stC}) reduce to the relations
	\begin{equation}
	\mathbf{R}^t \circ s^{\mathbf{Q}}_{\mathrm{r}}(x) = \mathbf{Q}^{-1}\mathbf{R}^t(x),\quad	
	\mathbf{R}^{at}\circ s^a_{\mathrm{ts}}(x) = \mathbf{R}^t(x),\quad
	\mathbf{R}^t \circ s^b_{\mathrm{ss}}(x) = b\mathbf{R}^t(x).
	\label{eq6_P_37}
	\end{equation}
Since the proof is similar for each relation in (\ref{eq6_P_37}), we demonstrate it only in the case of temporal scaling $s^{a}_{\mathrm{ts}}$. In this case, we represent the left-hand side of the second equality in (\ref{eq6_P_37}) as
	\begin{equation}
	\mathbf{R}^{at}\circ s^{a}_{\mathrm{ts}}(x) = \mathbf{R}^{at}(x'), \quad
	x' = s^{a}_{\mathrm{ts}}(x).
	\label{eq6_P_37bb}
	\end{equation}
The function $\mathbf{r}(t) = \mathbf{R}^t(x')$ solves the problem (\ref{eq6_2A_1}) written as
	\begin{equation}
	\frac{d\mathbf{r}}{dt} = \mathbf{v}_{x'}(\mathbf{r},t),\quad 
	\mathbf{r}(0) = \mathbf{0}.
	\label{eq6_P_37b}
	\end{equation}
We express the vector field using (\ref{eq6_2_2}) and $x' = s^{a}_{\mathrm{ts}}(x)$ as
	\begin{equation}
	\mathbf{v}_{x'}(\mathbf{r},t) 
	=  \mathbf{V} \circ s_{\mathrm{s}}^{\mathbf{r}} \circ \Phi^t \circ s^{a}_{\mathrm{ts}}(x)
	= \mathbf{V} \circ s^{a}_{\mathrm{ts}} \circ s_{\mathrm{s}}^{\mathbf{r}} \circ \Phi^{t/a} (x)
	= \frac{1}{a}\, \mathbf{V} \circ s_{\mathrm{s}}^{\mathbf{r}} \circ \Phi^{t/a} (x)
	= \frac{\mathbf{v}_x(\mathbf{r},t/a)}{a},
	\label{eq6_P_37c}
	\end{equation}
where we used commutation relations of Tab.~\ref{tab1} and properties (\ref{eq6_4_S1}). 
Using (\ref{eq6_P_37c}) in (\ref{eq6_P_37b}) and comparing with (\ref{eq6_2A_1}) one finds $\mathbf{r}(t) = \mathbf{R}^{t/a}(x)$. Hence, $\mathbf{R}^{at}(x') = \mathbf{R}^t(x)$, proving the second equality of (\ref{eq6_P_37}).
\end{proof}

\begin{proof}[Proof of Theorem~\ref{theorem_G2}]
Using (\ref{eq6_3_3}) and commutation property of Proposition~\ref{prop_Gsym}, we have 
	\begin{equation}
	s_\sharp \zeta = \left(s \circ Q\right)_\sharp \mu
	= \left(Q \circ s\right)_\sharp \mu 
	=  Q_\sharp \widetilde{\mu},\quad
	\widetilde{\mu} = s_\sharp \mu,
	\label{eq6_P_35ex}
	\end{equation}
for any $s \in \mathcal{S}$. As we mentioned in Section~\ref{subsec_symG}, the measure $\widetilde{\mu} = s_\sharp \mu$ is homogeneous. Hence, $\widetilde{\mu}$ satisfies conditions of Theorem~\ref{theorem_G}, which asserts that the measure $s_\sharp \zeta = Q_\sharp \widetilde{\mu}$ is invariant for the normalized flow $\Omega^t$. 

It remains to prove the commutation relations. 
Using (\ref{eq6_3_2}), commutativity properties of Proposition~\ref{prop_Gsym}  and Tab.~\ref{tab1}, and relation (\ref{eq6_P_37stA}), yields
	\begin{equation}
	\begin{array}{rcl}
	s^{\mathbf{Q}}_{\mathrm{r}} \circ \Omega^t(z) 
	& = & s^{\mathbf{Q}}_{\mathrm{r}} \circ Q \circ s_{\mathrm{s}}^{\mathbf{R}^t(z)} \circ \Phi^t(z)
	= Q \circ \Phi^t \circ s^{\mathbf{Q}}_{\mathrm{r}} \circ s_{\mathrm{s}}^{\mathbf{R}^t(z)}(z)
	\\[7pt]
	& = & 
	Q \circ \Phi^t \circ s_{\mathrm{s}}^{\mathbf{R}^t \circ 
	s^{\mathbf{Q}}_{\mathrm{r}}(z)} \circ s^{\mathbf{Q}}_{\mathrm{r}}(z)
	= Q \circ s_{\mathrm{s}}^{\mathbf{R}^t \circ 
	s^{\mathbf{Q}}_{\mathrm{r}}(z)} \circ \Phi^t \circ s^{\mathbf{Q}}_{\mathrm{r}}(z)
	= \Omega^t \circ s^{\mathbf{Q}}_{\mathrm{r}}(z),
	\end{array}
	\label{eq6_P_38}
	\end{equation}
proving the first relation in (\ref{eq6_4_S5}). Similarly, one can prove the other two relations in (\ref{eq6_4_S5}). 
\end{proof}

\section{Combining temporal scalings with Galilean transformations}
\label{sec_fus}

In this section, we generalize the quotient construction to the group 
	\begin{equation}
	\label{eq6.5.0}
	\mathcal{H} = 
	\{s^{a}_{\mathrm{ts}} \circ s^{\mathbf{v}}_{\mathrm{g}}:
	\ a > 0,\ \mathbf{v} \in \mathbb{R}^d\}, 
	\end{equation}
which includes both temporal scalings and Galilean transformations. These are all elements of our spatiotemporal symmetry group that do not commute with the flow; see Tab.~\ref{tab1} in Section~\ref{secEuler}. Now, the equivalence relation is considered with respect to any map from $\mathcal{H}$ as
	\begin{equation}
	x \sim x' \quad \textrm{if} \quad x' = h(x),\ h \in \mathcal{H},
	\label{eq6_5_0N}
	\end{equation}
and the representative set is introduced by selecting a unique element in each equivalence class. 

\subsection{Two-step quotient construction}

We perform our construction in two steps, by utilizing the quotient constructions introduced in Sections \ref{sec4} and \ref{sec_Gal} separately for temporal scalings and Galilean transformations. In the first step, we consider the equivalence with respect to Galilean transformations only, by following the theory of Section \ref{sec_Gal}. Assuming that the system has the properties of homogeneity, periodicity and incompressibility, Theorems~\ref{theorem_G} and \ref{theorem_G2} provide the representative set $\mathcal{Z}$, the flow $\Omega^t$, the invariant measure $\zeta$ and the symmetry group $\mathcal{S}$. One can see that this system possesses the properties required in Section~\ref{sec4}, considering $\mathcal{Z}$, $\Omega^t$ and $\zeta$ in place of $\mathcal{X}$, $\Phi^t$ and $\mu$. Recall that the results of Section~\ref{sec4} are based on composition and commutation relations (\ref{eq2_0ha})--(\ref{eq2_2}). Thus, we have

\begin{corollary}
Consider the group (\ref{eq6_G1}) as the direct sum $\mathcal{S} = \mathcal{H}_{\mathrm{ts}}+\mathcal{G}$ with
	\begin{equation}
	\mathcal{H}_{\mathrm{ts}} = \big\{s^{a}_{\mathrm{ts}}:\ a > 0\big\},\quad
	\mathcal{G} = \big\{
	s^{\mathbf{Q}}_{\mathrm{r}} \circ 
	s^{b}_{\mathrm{ss}}:\ 
	\mathbf{Q} \in \mathrm{O}(d),\,
	b > 0\big\}.
	\label{eq7_1}
	\end{equation}
By Theorem~\ref{theorem_G2}, relations (\ref{eq2_0ha})--(\ref{eq2_2}) are satisfied for $h^a = s^{a}_{\mathrm{ts}} \in \mathcal{H}_{\mathrm{ts}}$, $g \in \mathcal{G}$ and $\Omega^t$ in place of $\Phi^t$.
\end{corollary}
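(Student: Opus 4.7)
The plan is to show that the three groups of identities (\ref{eq2_0ha}), (\ref{eq2_2com}), (\ref{eq2_2}) all follow either from Tab.~\ref{tab1} (which survives the restriction of the symmetry maps to $\mathcal{Z}$, since Proposition~\ref{prop_Gsym} guarantees $s(\mathcal{Z})=\mathcal{Z}$ for every $s \in \mathcal{S}$) or directly from the commutation identities (\ref{eq6_4_S5}) of Theorem~\ref{theorem_G2}. Concretely, I would verify the three relations in turn.

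First, the composition law $h^{a_1} \circ h^{a_2} = h^{a_1 a_2}$ in (\ref{eq2_0ha}) is an identity of maps on $\mathcal{X}$ listed in Tab.~\ref{tab1}, and therefore holds a fortiori on the restricted domain $\mathcal{Z}$. For (\ref{eq2_2com}), one relation is purely algebraic: writing $g = s^{\mathbf{Q}}_{\mathrm{r}} \circ s^{b}_{\mathrm{ss}}$, Tab.~\ref{tab1} yields $s^{\mathbf{Q}}_{\mathrm{r}} \circ s^{a}_{\mathrm{ts}} = s^{a}_{\mathrm{ts}} \circ s^{\mathbf{Q}}_{\mathrm{r}}$ and $s^{b}_{\mathrm{ss}} \circ s^{a}_{\mathrm{ts}} = s^{a}_{\mathrm{ts}} \circ s^{b}_{\mathrm{ss}}$, whence $g \circ h^a = h^a \circ g$. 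For the other relation in (\ref{eq2_2com}), $\Omega^t \circ g = g \circ \Omega^t$, I would combine the first and third identities in (\ref{eq6_4_S5}):
\begin{equation}
\Omega^t \circ g
= \Omega^t \circ s^{\mathbf{Q}}_{\mathrm{r}} \circ s^{b}_{\mathrm{ss}}
= s^{\mathbf{Q}}_{\mathrm{r}} \circ \Omega^t \circ s^{b}_{\mathrm{ss}}
= s^{\mathbf{Q}}_{\mathrm{r}} \circ s^{b}_{\mathrm{ss}} \circ \Omega^t
= g \circ \Omega^t.
\label{eq7_plan_1}
\end{equation}

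Finally, the noncommutative relation (\ref{eq2_2}), namely $\Omega^t \circ h^a = h^a \circ \Omega^{t/a}$ with $h^a = s^a_{\mathrm{ts}}$, is exactly the second identity in (\ref{eq6_4_S5}). This finishes the verification.

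The only step that carries actual content is the last one, and that content has already been discharged inside the proof of Theorem~\ref{theorem_G2}, where the modified time parameter $t/a$ appears because the trajectory $\mathbf{R}^t$ and the Galilean projector $Q$ interact nontrivially with $s^a_{\mathrm{ts}}$ (via the identity $\mathbf{R}^{at}\circ s^{a}_{\mathrm{ts}}(x) = \mathbf{R}^t(x)$ established in Section~\ref{subsec_proof_G}). Thus no genuine obstacle arises: the Corollary is a bookkeeping statement asserting that the algebraic hypotheses (\ref{eq2_0ha})--(\ref{eq2_2}) required by the framework of Section~\ref{sec4} are available in the quotient system $(\mathcal{Z},\Omega^t,\zeta)$ produced by Section~\ref{sec_Gal}, allowing one to apply the temporal-scaling quotient construction on top of the Galilean one.
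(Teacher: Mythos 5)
Your proposal is correct and matches the paper's intent: the paper states this Corollary without a separate proof, treating it as an immediate consequence of Theorem~\ref{theorem_G2} (whose relations (\ref{eq6_4_S5}) give exactly the $\Omega^t$ commutations you cite) together with the purely algebraic identities of Tab.~\ref{tab1} and the fact, noted after Proposition~\ref{prop_Gsym}, that $s(\mathcal{Z})=\mathcal{Z}$ so all maps restrict to $\mathcal{Z}$. Your write-up simply makes that bookkeeping explicit, which is consistent with the paper's argument.
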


In the second step, Theorems~\ref{theorem0} and \ref{prop_g} of Section~\ref{sec4} with $\mathcal{X}$, $\mu$ and $\Phi^t$ replaced by $\mathcal{Z}$, $\zeta$ and $\Omega^t$ provide the ``second-generation'' normalized system defined on the representative set $\mathcal{Y}$ with the flow $\Psi^t$ and the invariant measure $\nu$. In this final system, each element $y \in \mathcal{Y}$ represents an equivalence class with respect to the full group (\ref{eq6.5.0}), and elements of the symmetry group $g \in \mathcal{G}$ define transformations $\nu \mapsto g_\star\nu$ preserving invariance of normalized measures. 

\subsection{The final normalized system}
\label{subsec_full}

We now describe all components in our final construction explicitly. The two-step normalization of the system yields the nested representative sets $\mathcal{Y} \subset \mathcal{Z} \subset \mathcal{X}$ with the two projectors
	\begin{equation}
	\label{eq7_2}
	\mathcal{X} \xmapsto{Q} \mathcal{Z} \xmapsto{P} \mathcal{Y}.
	\end{equation}
The projectors $Q:\mathcal{X} \mapsto \mathcal{Z}$ and $P:\mathcal{Z} \mapsto \mathcal{Y}$ are defined by (\ref{eq6_3_1}) and (\ref{eq2_2P}) as
	\begin{eqnarray}
	\label{eq7_3a}
	z &=& Q(x) = s_{\mathrm{g}}^{\mathbf{V}(x)}(x) \in \mathcal{Z}, 
	\\[2pt]
	\label{eq7_3b}
	y &=& P(z) = s_{\mathrm{ts}}^{A(z)}(z) \in \mathcal{Y}. 
	\end{eqnarray}
Here $\mathbf{V}: \mathcal{X} \mapsto \mathbb{R}^d$ and $A: \mathcal{Z} \mapsto \mathbb{R}_+$ are  measurable functions, which satisfy conditions
	\begin{equation}
	\begin{array}{c}
	\displaystyle
	A \circ s_{\mathrm{ts}}^a(z) = \frac{A(z)}{a},\quad 
	\mathbf{V} \circ s_{\mathrm{g}}^{\mathbf{v}}(x) 
	= \mathbf{V}(x)-\mathbf{v}, 
	\\[7pt]
	\displaystyle
	\mathbf{V} \circ s^{\mathbf{Q}}_{\mathrm{r}}(x) = \mathbf{Q}^{-1}\mathbf{V}(x),\quad
	\mathbf{V} \circ s^a_{\mathrm{ts}}(x) = \frac{\mathbf{V}(x)}{a},\quad
	\mathbf{V} \circ s^b_{\mathrm{ss}}(x) = b\mathbf{V}(x),
	\end{array}
	\label{eq7_4}
	\end{equation}
see (\ref{eq2_2Ap}), (\ref{eq6_2_1}) and (\ref{eq6_4_S1}). The final normalized flow $\Psi^\tau:\mathcal{Y} \mapsto \mathcal{Y}$ is given by combining (\ref{eq2_AmF}) and (\ref{eq6_3_2}) as
	\begin{equation}
	\Psi^\tau = P\circ \Omega_A^\tau,
	\label{eq7_5}
	\end{equation}
where $\Omega_A^\tau$ denotes a change of time (\ref{eq2_A1}) in the flow  
	\begin{equation}
	\Omega^t: \mathcal{Z} \mapsto \mathcal{Z}, \quad
	\Omega^t(z) 
	= Q \circ s_{\mathrm{s}}^{\mathbf{R}^t(z)} \circ \Phi^t(z), \quad
	z \in \mathcal{Z}.
	\label{eq7_6}
	\end{equation}
Here the function $\mathbf{R}^t(z)$ is defined by equations (\ref{eq6_2A_1}) and (\ref{eq6_2_2}). 

The invariant measure $\nu$ of the flow $\Psi^\tau$ is given by combining (\ref{eq2_Am}) and (\ref{eq6_3_3}) as
	\begin{equation}
	\nu = P_\sharp \zeta_A =  (P \circ Q)_\sharp \mu_{A \circ Q},
	\quad \zeta = Q_\sharp \mu,
	\label{eq7_7}
	\end{equation}
where the subscripts $A$ and $A \circ Q$ denote the change-of-time transformations (\ref{eq2_A1ac}); in the second equality we used relation (\ref{eq2_D2exD}) of Lemma~\ref{lemmaMs}. By Theorem~\ref{prop_g}, elements of the symmetry group $g \in \mathcal{G}$ define the transformations 
	\begin{equation}
	g_\star \nu = (P \circ {g})_\sharp\nu_C, \quad C = A \circ g,
	\label{eq7_8}
	\end{equation}
providing invariant measures $g_\star \nu$ for the same flow $\Psi^\tau$. Let us summarize these findings as

\begin{theorem}
\label{th_main}
Given measurable function $\mathbf{V}: \mathcal{X} \mapsto \mathbb{R}^d$ and $A: \mathcal{Z} \mapsto \mathbb{R}_+$ satisfying conditions (\ref{eq7_4}) and assuming the properties of homogeneity, periodicity and incompressibility (see Section~\ref{sec_Gal}), expressions (\ref{eq7_5}) and (\ref{eq7_6}) define the normalized flow $\Psi^\tau$ in the representative set $\mathcal{Y}$ with the invariant measure (\ref{eq7_7}). Group elements $g \in \mathcal{G}$ define statistical symmetries in the normalized system: they generate invariant measures by means of transformation (\ref{eq7_8}).
\end{theorem}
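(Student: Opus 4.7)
The plan is to reduce Theorem~\ref{th_main} to a sequential application of the two quotient constructions already developed in Sections~\ref{sec_Gal} and~\ref{sec4}, with only the verification of hypotheses requiring attention. I would proceed in two stages.

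First, I would apply the Galilean quotient construction of Section~\ref{sec_Gal} to $(\mathcal{X},\Phi^t,\mu)$. Under the assumed homogeneity, periodicity, and incompressibility, together with the first two conditions of (\ref{eq7_4}) on $\mathbf{V}$ that match Definition~\ref{def_G} and equation (\ref{eq6_2_1}), Theorem~\ref{theorem_G} delivers the intermediate representative set $\mathcal{Z}$, the intermediate flow $\Omega^t$ from (\ref{eq7_6}), and the invariant intermediate measure $\zeta = Q_\sharp \mu$. The remaining three conditions of (\ref{eq7_4}) on $\mathbf{V}$ are precisely the hypotheses of Proposition~\ref{prop_Gsym}, so the projector $Q$ commutes with every $s \in \mathcal{S}$; in particular $s(\mathcal{Z}) = \mathcal{Z}$ for all such $s$. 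Theorem~\ref{theorem_G2} then guarantees that every $s \in \mathcal{S}$ acts as a statistical symmetry of $\Omega^t$ and that the commutation relations (\ref{eq6_4_S5}) hold, with the same structural form as in Tab.~\ref{tab1}.

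Second, I would apply the temporal-scaling quotient construction of Section~\ref{sec4}, using $(\mathcal{Z},\Omega^t,\zeta)$ in place of $(\mathcal{X},\Phi^t,\mu)$. Here the critical checklist is: the group decomposition $\mathcal{S} = \mathcal{H}_{\mathrm{ts}}+\mathcal{G}$ from (\ref{eq7_1}); the composition law $h^{a_1}\circ h^{a_2} = h^{a_1 a_2}$, which is unchanged; the commutation relations (\ref{eq2_2com}) and (\ref{eq2_2}) with $\Omega^t$ in place of $\Phi^t$, which are exactly (\ref{eq6_4_S5}); and the representative set requirement of Definition~\ref{def1}, which follows from the first line of (\ref{eq7_4}) interpreted within $\mathcal{Z}$. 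With these verifications in hand, Theorem~\ref{theorem0} produces the normalized flow $\Psi^\tau = P\circ \Omega_A^\tau$ of (\ref{eq7_5}) on $\mathcal{Y}$ together with the invariant normalized measure $P_\sharp \zeta_A$, and Theorem~\ref{prop_g} yields the action $g_\star \nu = (P \circ g)_\sharp \nu_C$ with $C = A \circ g$, which is exactly (\ref{eq7_8}).

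The final bookkeeping step is to identify the two forms of $\nu$ asserted in (\ref{eq7_7}). Starting from $\nu = P_\sharp \zeta_A = P_\sharp (Q_\sharp \mu)_A$ and invoking identity (\ref{eq2_D2exD}) of Lemma~\ref{lemmaMs} with $f = Q$ and $B = A$, one obtains $(Q_\sharp \mu)_A = Q_\sharp \mu_{A \circ Q}$, hence $\nu = (P \circ Q)_\sharp \mu_{A\circ Q}$. The only genuine obstacle in this plan is the structural verification that the commutation pattern between $\Omega^t$ and the elements of $\mathcal{S}$ coincides with that between $\Phi^t$ and $\mathcal{S}$, so that Section~\ref{sec4} applies verbatim to the intermediate system; this is precisely the statement of Theorem~\ref{theorem_G2}, and therefore no further analytic work is required beyond bookkeeping.
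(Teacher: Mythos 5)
Your proposal is correct and follows essentially the same route as the paper: Section~\ref{sec_fus} establishes Theorem~\ref{th_main} precisely by first applying Theorems~\ref{theorem_G} and \ref{theorem_G2} (together with Proposition~\ref{prop_Gsym}) to obtain $(\mathcal{Z},\Omega^t,\zeta)$, then invoking Theorems~\ref{theorem0} and \ref{prop_g} with $(\mathcal{Z},\Omega^t,\zeta)$ in place of $(\mathcal{X},\Phi^t,\mu)$, the key observation being that the commutation relations (\ref{eq6_4_S5}) reproduce (\ref{eq2_0ha})--(\ref{eq2_2}). Your final identification of the two forms of $\nu$ in (\ref{eq7_7}) via Lemma~\ref{lemmaMs}, relation (\ref{eq2_D2exD}), is also exactly the paper's step.
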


Combining Propositions~\ref{prop_h} and \ref{prop_G1} we see that normalized measures are not sensitive to both temporal scalings and Galilean transformations. 

\begin{corollary} \label{corollary_H}
All invariant measures $\widetilde{\mu} = h_\sharp \mu$ with $h \in \mathcal{H}$ yield the same normalized measure $\nu$ by expression (\ref{eq7_7}).
\end{corollary}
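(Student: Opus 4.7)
The plan is to exploit the two-step structure of the final normalization: every $h \in \mathcal{H}$ decomposes as $h = h^a \circ s^{\mathbf{v}}_{\mathrm{g}}$, and the two factors are handled separately by the Galilean step and the temporal step. Concretely, starting from $\widetilde{\mu} = h^a_\sharp (s^{\mathbf{v}}_{\mathrm{g}})_\sharp \mu$, I would first show that the intermediate measure $\widetilde{\zeta} = Q_\sharp \widetilde{\mu}$ on $\mathcal{Z}$ coincides with $h^a_\sharp \zeta$, and then invoke Proposition~\ref{prop_h} applied in the inner system $(\mathcal{Z},\Omega^t,\zeta)$ to conclude that $\widetilde{\zeta}$ and $\zeta$ produce the same final normalized measure $\nu$.

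For the first step, set $\mu' = (s^{\mathbf{v}}_{\mathrm{g}})_\sharp \mu$. Proposition~\ref{prop_G1} gives $Q_\sharp \mu' = Q_\sharp \mu = \zeta$, so the Galilean factor is absorbed at the Galilean projection. Then write
\begin{equation*}
\widetilde{\zeta} = Q_\sharp \widetilde{\mu} = Q_\sharp (h^a_\sharp \mu') = (Q \circ h^a)_\sharp \mu' = (h^a \circ Q)_\sharp \mu' = h^a_\sharp \zeta,
\end{equation*}
where the third equality uses the commutativity $Q \circ s^a_{\mathrm{ts}} = s^a_{\mathrm{ts}} \circ Q$ established in Proposition~\ref{prop_Gsym} (the third condition of (\ref{eq7_4}) is exactly what makes this available). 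Thus the passage through $Q$ converts any element of $\mathcal{H}$ into a pure temporal scaling acting on $\zeta$.

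For the second step, apply Proposition~\ref{prop_h} in the normalized system obtained in Section~\ref{sec_Gal}: with $\Omega^t$ playing the role of $\Phi^t$ and $\zeta$ the role of $\mu$, the proposition asserts that every invariant measure of the form $h^a_\sharp \zeta$ yields the same final normalized measure $P_\sharp (h^a_\sharp \zeta)_A = P_\sharp \zeta_A = \nu$ defined by (\ref{eq7_7}). Combining the two steps gives $(P \circ Q)_\sharp \widetilde{\mu}_{A \circ Q} = P_\sharp \widetilde{\zeta}_A = \nu$ for every $h \in \mathcal{H}$.

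The only subtlety I anticipate is verifying that Proposition~\ref{prop_h} does apply verbatim to the inner system, i.e.\ that $(\mathcal{Z},\Omega^t,\zeta)$ together with $\mathcal{H}_{\mathrm{ts}}$ satisfies the hypotheses (\ref{eq2_0ha})--(\ref{eq2_2}) needed in Section~\ref{sec4}. This is exactly the content of the corollary stated right after (\ref{eq7_1}), which uses Theorem~\ref{theorem_G2} to transfer the commutation relations from $\Phi^t$ to $\Omega^t$; once that is invoked, the argument above is essentially a one-line composition of Propositions~\ref{prop_G1},~\ref{prop_Gsym} and~\ref{prop_h}.
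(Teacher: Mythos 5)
Your proposal is correct and follows essentially the same route as the paper, which disposes of this corollary in one line by ``combining Propositions~\ref{prop_h} and \ref{prop_G1}''; your argument is exactly the detailed version of that combination, with the one genuinely needed intermediate step --- passing the temporal scaling through the Galilean projector via $Q \circ s^a_{\mathrm{ts}} = s^a_{\mathrm{ts}} \circ Q$ from Proposition~\ref{prop_Gsym} so that $Q_\sharp\widetilde{\mu} = h^a_\sharp\zeta$ --- made explicit. The subtlety you flag (that Proposition~\ref{prop_h} applies verbatim to $(\mathcal{Z},\Omega^t,\zeta)$) is indeed covered by the corollary following (\ref{eq7_1}) via Theorem~\ref{theorem_G2}, so nothing is missing.
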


Also, combining Corollaries~\ref{corr_sym_mes} and \ref{corr_sym_mes2} yields the following symmetry relation between the original and normalized systems.

\begin{corollary}\label{corr_sym_mes_main}
If the measure $\mu$ is symmetric with respect to a composition $g \circ h$ for some $g \in \mathcal{G}$ and $h \in \mathcal{H}$, then the normalized measure $\nu$ is symmetric with respect to $g$:
	\begin{equation}
	\label{eq2_A8ex_M}
	(g \circ h)_\sharp \mu = \mu \quad \Rightarrow \quad g_\star \nu = \nu.
	\end{equation}
\end{corollary}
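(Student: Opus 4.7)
The plan is to obtain the statement by composing the two prior Corollaries~\ref{corr_sym_mes} and \ref{corr_sym_mes2} along the two-step quotient construction summarized in Section~\ref{subsec_full}. Any element $h \in \mathcal{H}$ factors as $h = h^a \circ s^{\mathbf{v}}_{\mathrm{g}}$ with $h^a = s^a_{\mathrm{ts}} \in \mathcal{H}_{\mathrm{ts}}$ and $s^{\mathbf{v}}_{\mathrm{g}} \in \mathcal{H}_{\mathrm{g}}$, so the hypothesis $(g \circ h)_\sharp \mu = \mu$ rewrites as
\[
(s \circ s^{\mathbf{v}}_{\mathrm{g}})_\sharp \mu = \mu, \qquad s := g \circ h^a.
\]

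First I would verify that $s = g \circ h^a$ belongs to the group $\mathcal{S}$ of (\ref{eq6_G1}). Writing $g = s^{\mathbf{Q}}_{\mathrm{r}} \circ s^{b}_{\mathrm{ss}}$ and exploiting the commutativity of $s^{a}_{\mathrm{ts}}$ with $s^{b}_{\mathrm{ss}}$ that appears in Tab.~\ref{tab1}, one gets the representation $s = s^{\mathbf{Q}}_{\mathrm{r}} \circ s^{a}_{\mathrm{ts}} \circ s^{b}_{\mathrm{ss}} \in \mathcal{S}$, so that Corollary~\ref{corr_sym_mes2} applies with this $s$ and the Galilean element $s^{\mathbf{v}}_{\mathrm{g}}$. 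The conclusion of that corollary reads
\[
s_\sharp \zeta = \zeta, \qquad \text{equivalently,}\qquad (g \circ h^a)_\sharp \zeta = \zeta,
\]
where $\zeta = Q_\sharp \mu$ is the intermediate normalized measure on the Galilean representative set $\mathcal{Z}$.

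Next I would work in the second-generation system $(\mathcal{Z}, \Omega^t, \zeta)$. By Theorem~\ref{theorem_G2}, the flow $\Omega^t$ satisfies the composition and commutation relations (\ref{eq2_0ha})--(\ref{eq2_2}) with respect to $h^a \in \mathcal{H}_{\mathrm{ts}}$ and $g \in \mathcal{G}$, which are the only ingredients underlying Section~\ref{sec4}. Therefore Corollary~\ref{corr_sym_mes}, applied to the triple $(\mathcal{Z}, \Omega^t, \zeta)$ in place of $(\mathcal{X}, \Phi^t, \mu)$, turns the symmetry $(g \circ h^a)_\sharp \zeta = \zeta$ into $g_\star \nu = \nu$ for the final normalized measure $\nu = P_\sharp \zeta_A$ of (\ref{eq7_7}).

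The main obstacle is purely bookkeeping: one must check that the element $s = g \circ h^a$ invoked in the use of Corollary~\ref{corr_sym_mes2} genuinely lies in the group $\mathcal{S}$ (which hinges on the commutation between $s^a_{\mathrm{ts}}$ and $s^b_{\mathrm{ss}}$ in Tab.~\ref{tab1}), and that the $\mathcal{G}$-action $g_\star$ arising in the second step coincides with the one appearing in the statement of the corollary. Both points are guaranteed by the construction of Section~\ref{subsec_full}, so no additional analytic work is required beyond the two preceding corollaries.
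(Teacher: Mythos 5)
Your proposal is correct and follows exactly the route the paper intends: the paper derives this corollary by "combining Corollaries~\ref{corr_sym_mes} and \ref{corr_sym_mes2}," and your write-up supplies precisely that combination — factoring $h = h^a \circ s^{\mathbf{v}}_{\mathrm{g}}$, applying Corollary~\ref{corr_sym_mes2} with $s = g \circ h^a \in \mathcal{S}$ to get $(g\circ h^a)_\sharp\zeta=\zeta$, and then applying Corollary~\ref{corr_sym_mes} to the second-generation system $(\mathcal{Z},\Omega^t,\zeta)$. The bookkeeping points you flag (that $s^a_{\mathrm{ts}}$ and $s^b_{\mathrm{ss}}$ commute so $s\in\mathcal{S}$, and that the $g_\star$ action agrees with (\ref{eq7_8})) are the right ones and are indeed settled by Tab.~\ref{tab1} and the construction of Section~\ref{subsec_full}.
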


The representative set $\mathcal{Y}$ defines a configuration space for our normalized system, which depends on a choice of the functions $\mathbf{V}(x)$ and $A(z)$ in (\ref{eq7_3a}) and (\ref{eq7_3b}).
The following statement extends Theorem~\ref{theorem3} from Section~\ref{subsec_SNM} ensuring that symmetry relations are not sensitive to a choice of $\mathcal{Y}$ (for the proof see Section~\ref{proof_th7}). 

\begin{theorem}
\label{theorem_indep}
If the normalized invariant measure $\nu$ in (\ref{eq7_7}) is symmetric with respect to $g \in \mathcal{G}$, i.e. $g_\star \nu = \nu$, for some representative set $\mathcal{Y}$, then the same is true for any representative set.
\end{theorem}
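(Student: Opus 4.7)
The plan is to adapt the strategy of Theorem~\ref{theorem3}, using the composed projector $\Pi = P \circ Q : \mathcal{X} \mapsto \mathcal{Y}$ as a substitute for the single-stage projector in that earlier argument. First I would verify that the full construction admits a one-stage reformulation: combining (\ref{eq7_7}) with Lemma~\ref{lemmaMs} yields $\nu = \Pi_\sharp \mu_B$ with $B = A \circ Q$. The function $B : \mathcal{X} \mapsto \mathbb{R}_+$ inherits the temporal homogeneity $B \circ s^a_{\mathrm{ts}} = B/a$ from the commutation $Q \circ s^a_{\mathrm{ts}} = s^a_{\mathrm{ts}} \circ Q$ of Proposition~\ref{prop_Gsym} and the first identity in (\ref{eq7_4}), and it is Galilean invariant, $B \circ s^{\mathbf{v}}_{\mathrm{g}} = B$, by identity (\ref{eq6_P_19}). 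An analogous manipulation, using Proposition~\ref{prop_muC} together with $Q \circ g = g \circ Q$, produces the compact formula $g_\star \nu = (\Pi \circ g)_\sharp \mu_{B \circ g}$, so that the symmetry hypothesis reads $\Pi_\sharp \mu_B = (\Pi \circ g)_\sharp \mu_{B \circ g}$.

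Next I would record two projector identities that replace (\ref{eq2_PP}) in the combined setting. For any $h = s^a_{\mathrm{ts}} \circ s^{\mathbf{v}}_{\mathrm{g}} \in \mathcal{H}$, decomposing $\Pi \circ h$ and using $Q \circ s^{\mathbf{v}}_{\mathrm{g}} = Q$, $Q \circ s^a_{\mathrm{ts}} = s^a_{\mathrm{ts}} \circ Q$, and the relation $P \circ s^a_{\mathrm{ts}} = P$ from Section~\ref{sec4} gives $\Pi \circ h = \Pi$. Since $\Pi(x)$ always lies in the $\mathcal{H}$-equivalence class of $x$, this immediately implies the collapse identity $\widetilde{\Pi} \circ \Pi = \widetilde{\Pi}$ for any second representative set $\widetilde{\mathcal{Y}}$ determined by a pair $(\widetilde{\mathbf{V}}, \widetilde{A})$ satisfying (\ref{eq7_4}).

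With these ingredients the argument mirrors the proof of Theorem~\ref{theorem3}. I would apply the change-of-time transformation $(\cdot)_{\widetilde{B}}$, with $\widetilde{B} = \widetilde{A} \circ \widetilde{Q}$, to both sides of the symmetry equality. By Lemma~\ref{lemmaMs} the two sides become $\Pi_\sharp \mu_F$ and $(\Pi \circ g)_\sharp \mu_H$ with $F = (\widetilde{B} \circ \Pi) B$ and $H = (\widetilde{B} \circ \Pi \circ g)(B \circ g)$. A calculation in the spirit of (\ref{eq2_H9})--(\ref{eq2_H10}), which unwinds $\widetilde{B} \circ \Pi$ using $\widetilde{Q} \circ s^a_{\mathrm{ts}} = s^a_{\mathrm{ts}} \circ \widetilde{Q}$, $\widetilde{Q} \circ s^{\mathbf{v}}_{\mathrm{g}} = \widetilde{Q}$ and the homogeneity of $\widetilde{A}$, reduces $F$ to $\widetilde{B}$ and $H$ to $\widetilde{B} \circ g$. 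Pushing the resulting equality $\Pi_\sharp \mu_{\widetilde{B}} = (\Pi \circ g)_\sharp \mu_{\widetilde{B} \circ g}$ forward by $\widetilde{\Pi}_\sharp$ and invoking $\widetilde{\Pi} \circ \Pi = \widetilde{\Pi}$ produces exactly the desired relation $\widetilde{\nu} = g_\star \widetilde{\nu}$.

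The main obstacle I anticipate is the collapse $F = \widetilde{B}$. It is routine but has to be executed with care because it mixes commutativities from both normalization stages: one must verify $\widetilde{B}(\Pi(x)) = \widetilde{B}(x)/B(x)$ by commuting $\widetilde{Q}$ past the temporal scaling $s^{B(x)}_{\mathrm{ts}}$ (which requires $\widetilde{\mathbf{V}} \circ s^a_{\mathrm{ts}} = \widetilde{\mathbf{V}}/a$ from (\ref{eq7_4})), using the Galilean-invariance $\widetilde{Q} \circ s^{\mathbf{v}}_{\mathrm{g}} = \widetilde{Q}$ to absorb the Galilean factor $s^{\mathbf{V}(x)}_{\mathrm{g}}$, and finally applying the homogeneity of $\widetilde{A}$. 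Once this single algebraic identity is established, the remaining steps are mechanical bookkeeping with Lemma~\ref{lemmaMs} and Tab.~\ref{tab1}.
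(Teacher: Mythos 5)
Your argument is correct, and it reaches the conclusion by a route that is organized differently from the paper's. The paper first disposes of the freedom in $\widetilde{A}$ by invoking Theorem~\ref{theorem3}, so that it may fix the canonical choice $\widetilde{A} = A \circ Q$ in (\ref{eq7_3_2}); it then proves the auxiliary identities $Q \circ \widetilde{Q} = Q$ and $Q \circ \widetilde{P} \circ \widetilde{Q} = P \circ Q$ of (\ref{eq7_3_4r}), shows the two push-forward densities coincide ($F = \widetilde{F}$), and transfers the symmetry by applying $Q_\sharp$. You instead treat the composed projector $\Pi = P \circ Q$ as a single-stage projector for the full group $\mathcal{H}$ and rerun the proof of Theorem~\ref{theorem3} verbatim at that level: the identities $\Pi \circ h = \Pi$ and $\widetilde{\Pi} \circ \Pi = \widetilde{\Pi}$ replace (\ref{eq2_PP}), the collapse $(\widetilde{B} \circ \Pi)\,B = \widetilde{B}$ replaces (\ref{eq2_H9})--(\ref{eq2_H10}), and the final push-forward is by $\widetilde{\Pi}_\sharp$ rather than $Q_\sharp$. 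I checked the two computations you flag as delicate: the compact formula $g_\star \nu = (\Pi \circ g)_\sharp \mu_{B \circ g}$ agrees with the paper's (\ref{eq7_3_7z}), since the paper's density $F = (A \circ g \circ P \circ Q)(A \circ Q)$ reduces to $B \circ g$ by the same cancellation as in (\ref{eq2_H6}); and the identity $\widetilde{B} \circ \Pi = \widetilde{B}/B$ follows exactly as you describe from $\widetilde{Q} \circ s^{\mathbf{v}}_{\mathrm{g}} = \widetilde{Q}$, the commutation of $\widetilde{Q}$ with $s^a_{\mathrm{ts}}$ (Proposition~\ref{prop_Gsym}), and the homogeneity of $\widetilde{A}$. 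What your version buys is a fully self-contained, symmetric-in-the-two-sets argument that makes transparent the slogan that the two-stage construction behaves as a single quotient by $\mathcal{H}$; what the paper's version buys is economy, since half of the work (the dependence on $\widetilde{A}$) is delegated to the already-proven Theorem~\ref{theorem3} and only the Galilean freedom needs fresh treatment.
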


Since the group $\mathcal{G}$ in (\ref{eq7_1}) contains spatial scalings, the theory of Section~\ref{sec_int} is applicable for the study of structure functions and intermittency in our final normalized system. 
The central feature of this theory is  the notion of hidden symmetry: the normalized measure can be symmetric despite the original measure is not, i.e., $g_\star \nu = \nu$ while $(g \circ h)_\sharp \mu \ne \mu$ for any $h \in \mathcal{H}$. As we discuss in the next section, including Galilean transformations into the full quotient construction is crucial for applications in fluid dynamics. 

Similarly to Proposition \ref{prop_erg} in Section \ref{subsec_red}, we can express statistical averages in (\ref{eq2_Avr2}) for any measurable test function $\psi(y)$ in the normalized system through analogous averages (\ref{eq2_Avr1})  in the original system. Using (\ref{eq7_7}), the derivation analogous to (\ref{eq2_prP3_B1}) yields
	\begin{equation}
	\label{eq7_9}
	\langle \psi \rangle_\nu 
	= \frac{\langle \varphi \rangle_\mu}{\langle A\circ Q \rangle_\mu}, 
	\quad \varphi(x) = \psi\circ P \circ Q(x) \, A\circ Q(x).
	\end{equation}
This identity relates ensemble averages like in the second equality of (\ref{eq2_Avr3}). We cannot generalize the first equality in (\ref{eq2_Avr3}), which relates temporal averages for particular solutions. Technically, this is because spatial translations are not among symmetries in our final normalized system. However, the relation between temporal averages may follow from (\ref{eq7_9}) assuming the ergodicity~\cite{cornfeld2012ergodic}.

\subsection{Application to the Euler system}
\label{subsec_NST}

Here present the quotient construction applied to the Euler system (\ref{eqE3b}) that we started with in Section~\ref{secEuler}. In this system velocity fields $x = \mathbf{u}(\mathbf{r})$ are considered as elements of a configuration space $\mathcal{X}$. We proceed formally by assuming the existence of a flow (evolution) operator $\Phi^t:\mathcal{X} \mapsto \mathcal{X}$, which satisfies the commutation relations of Tab.~\ref{tab1} with symmetry maps (\ref{eq3}). 

The full quotient construction of Section~\ref{subsec_full} requires two projectors (\ref{eq7_2}), and we denote the respective velocity fields as
	\begin{equation}
	\label{eq_D_PV}
 	z = Q(x) = \widetilde{\mathbf{u}}(\mathbf{r}), \quad 
	y = P(z)= \mathbf{U}(\mathbf{r}). 
 	\end{equation}
As suggested earlier in Section~\ref{subsec_incomp}, we define $\mathbf{V}(x) \in \mathbb{R}^d$ as the velocity vector at the origin:
	\begin{equation}
	\label{eq_D_V}
	\mathbf{V}(x) = \mathbf{u}(\mathbf{0}).
	\end{equation}
Then, relations (\ref{eq7_3a}) and (\ref{eq7_3b}) with symmetries (\ref{eq3}) yield
	\begin{equation}
	\widetilde{\mathbf{u}}(\mathbf{r})
	= \mathbf{u}(\mathbf{r})-\mathbf{u}(\mathbf{0}), \quad
	\mathbf{U}(\mathbf{r}) = \frac{\mathbf{u}(\mathbf{r})-\mathbf{u}(\mathbf{0})}{A(z)}.
	\label{eq_D_1}
	\end{equation}
One can check that conditions (\ref{eq7_4}) are satisfied for $\mathbf{V}(x)$ from (\ref{eq_D_V}) and $A(z)$ being a positive-homogeneous function of degree $1$, i.e. $A(\alpha z) = \alpha A(z)$ for $\alpha > 0$. For example, one can take
	\begin{equation}
	A(z)
	= \left(\int K(r)\,\| \widetilde{\mathbf{u}}(\mathbf{r}) \|^2 \, d\mathbf{r}\right)^{1/2},
	\label{eq_D_2}
	\end{equation}
where $K: \mathbb{R}_+ \mapsto \mathbb{R}_+$ is some positive function vanishing (or decaying rapidly) at large $r = \|\mathbf{r}\|$. {\color{black}Definitions based on vorticity in place of velocity can also be used in (\ref{eq_D_2}), which would resemble the definition (\ref{eq3_B1}) for the shell model.}

The Euler system (\ref{eqE3b}) already includes the incompressibility condition. Under extra assumptions of homogeneity and periodicity, Theorem~\ref{th_main} yields the normalized flow $\Psi^\tau$. This flow has the invariant measure $\nu$, which describes the probability distribution of normalized velocity fields $y = \mathbf{U}(\mathbf{r})$ and is given explicitly in terms of the original  distribution $\mu$ for $x = \mathbf{u}(\mathbf{r})$. Also, symmetries of the group $\mathcal{G}$ (rotations and spatial scalings) extend to the normalized system in the statistical sense: they generate invariant measures $g_\star \nu$ for the flow $\Psi^\tau$. In applications, invariant probability measures are usually accessed with the ergodicity hypothesis, which allows substituting averages with respect to a measure by averages with respect to time. For this purpose, it is useful to have explicit relations between solutions for different flows. We devote the rest of this subsection to this issue.

Let $\Phi^t(x) = \mathbf{u}(\mathbf{r},t)$ be the velocity field describing a solution with the initial condition $x = \mathbf{u}(\mathbf{r})$ at $t = 0$. Similarly, we denote by $\Omega^t(z) = \widetilde{\mathbf{u}}(\mathbf{r},t)$ the velocity field generated by the flow (\ref{eq7_6}). As we explained in Sections~\ref{subsec_incomp} and \ref{subsec_NFG}, the field $\widetilde{\mathbf{u}}(\mathbf{r},t)$ is obtained by following the original system in the reference frame moving along a Lagrangian trajectory $\mathbf{r} = \mathbf{R}^t(x)$. Thus, we have
	\begin{equation}
	\widetilde{\mathbf{u}}(\mathbf{r},t) 
	= \mathbf{u}\left(\mathbf{R}^t+\mathbf{r},t\right)-\mathbf{u}\left(\mathbf{R}^t,t\right), 
	\label{eq_D_3_A1}
	\end{equation}
where $\mathbf{R}^t$ is defined by equations (\ref{eq6_2A_1}) and (\ref{eq6_2_5}) as
	\begin{equation}
	\frac{d\mathbf{R}^t}{dt} =  \mathbf{u}(\mathbf{R}^t,t),\quad 
	\mathbf{R}^0 = \mathbf{0}.
	\label{eq_D_3_A1R}
	\end{equation}
One can also derive these relations directly from expressions (\ref{eq7_6}) and (\ref{eq_D_1}) with the help of identity (\ref{eq6_P_1b}) and commutation relations of Tab.~\ref{tab1} for symmetries (\ref{eq3}).  The final velocity field $\Psi^\tau(y) = \mathbf{U}(\mathbf{r},\tau)$ of the normalized flow (\ref{eq7_5}) is obtained as 
	\begin{equation}
	\mathbf{U}(\mathbf{r},\tau) 
	= \frac{\widetilde{\mathbf{u}}(\mathbf{r},t)}{a_z(t)},
	\quad
	\tau = \int_0^t a_z(s)\,ds, \quad
	a_z(t) = A \circ \Omega^t(z),
	\label{eq_D_3}
	\end{equation}
where the first relation is given by the projector $P$ and the second relation introduces the change of time; see (\ref{eq2_A1}).
Solution (\ref{eq_D_3}) has the physical meaning of the velocity field, which is considered in a reference frame moving along a Lagrangian trajectory, and having the temporal scale adjusted dynamically by the function $a_z(t) $. 

Finally, let us describe solutions obtained by the spatial scaling from (\ref{eq3}). In order to comply with our notations in Sections~\ref{sec_shell1} and \ref{sec_int} we introduce the map
	\begin{equation}
	g = s^2_{\mathrm{ss}},\quad
	g: \mathbf{u}(\mathbf{r}) \mapsto 2\,\mathbf{u}\left(\frac{\mathbf{r}}{2}\right),
	\label{eq_D_S1}
	\end{equation}
which decreases the spatial scale by the factor of two. Then, $g^m = s^b_{\mathrm{ss}}$ with $b = 2^m$.
According to Tab.~\ref{tab1} and Theorem~\ref{theorem_G2}, $g$ commutes with both $\Phi^t$ and $\Omega^t$. Hence, the scaled solution $\Phi^t \circ g^m(x) = \mathbf{u}^{(m)}(\mathbf{r},t)$ for the original system is expressed as
	\begin{equation}
	 \mathbf{u}^{(m)}(\mathbf{r},t) = b\,\mathbf{u}\left(\frac{\mathbf{r}}{b},t\right), \quad 
	 b = 2^m.
	\label{eq_D_S2}
	\end{equation}
Using (\ref{eq_D_3_A1}), we obtain the corresponding scaled solution $\Omega^t \circ g^m(z) = \widetilde{\mathbf{u}}^{(m)}(\mathbf{r},t)$ in $\mathcal{Z}$ as
	\begin{equation}
	\widetilde{\mathbf{u}}^{(m)}(\mathbf{r},t) 
	= b\,\widetilde{\mathbf{u}}\left(\frac{\mathbf{r}}{b},t\right)
	= b\,\mathbf{u}\left(\mathbf{R}^t+\frac{\mathbf{r}}{b},t\right)
	-b\,\mathbf{u}\left(\mathbf{R}^t,t\right).
	\label{eq_D_S3}
	\end{equation}
The scaled solution $\mathbf{U}^{(m)}(\mathbf{r},\tau^{(m)})$ of the flow $\Psi^\tau$ is obtained from (\ref{eq_D_S3}) similarly to (\ref{eq_D_3}) as 
	\begin{equation}
	\mathbf{U}^{(m)}(\mathbf{r},\tau^{(m)}) 
	= \frac{\widetilde{\mathbf{u}}^{(m)}(\mathbf{r},t)}{a_z^{(m)}(t)},
	\quad
	\tau^{(m)} = \int_0^t a_z^{(m)}(s)\,ds,\quad
	a_z^{(m)}(t) = A \circ \Omega^t \circ g^m(z).
	\label{eq_D_S4}
	\end{equation}
This transformation features the change of both state and time.
{\color{black}Similarly to the shell model in Section~\ref{subsec_shell_norm}, one can show that $\mathbf{U}^{(m)}(\mathbf{r},\tau^{(m)})$ satisfies the same system of normalized Euler equations as $\mathbf{U}(\mathbf{r},\tau)$, therefore, demonstrating the hidden scaling symmetry. We refer to a subsequent paper~\cite{mailybaev2022hidden}, where such derivations were carried out explicitly.}

\subsection{Hidden symmetry in developed turbulence of the Navier--Stokes system} \label{subs_6_4}

Let us apply the developed formalism to the analysis of turbulence in the Navier--Stokes system. We will use similar arguments as in Section~\ref{subsec_conj} but now applied to the full normalized system. In the dimensionless form, the Navier--Stokes system is obtained from the Euler system by adding a forcing term $\mathbf{f}$ and a viscous term $\mathrm{Re}^{-1}\Delta\mathbf{u}$ on the right-hand side of the first equation in (\ref{eqE3b})~\cite{frisch1999turbulence}. The regime of developed turbulence corresponds to large $\mathrm{Re} \gg 1$ and features the so-called inertial interval of scales $\ell$ expressed by the Kolmogorov theory~\cite{kolmogorov1941local,frisch1999turbulence} as 
	\begin{equation}
	\label{eqNS_C1}
	\mathrm{Re}^{-3/4} \ll \ell \ll 1.
	\end{equation}
Here $\mathrm{Re}^{-3/4}$ represents the so-called Kolmogorov viscous scale and $\ell \sim 1$ corresponds to the scales at which external forces are applied. In the inertial interval (\ref{eqNS_C1}), the flow is described asymptotically by the Euler system. We already discussed a similar dynamics for the shell model in Section~\ref{subsec_conj} (see Fig.~\ref{figHS}), where the scale is defined as $\ell \sim 1/k_n$. Similarly to the shell model, we now apply the spatial scaling $g^m$ from (\ref{eq_D_S1}), which modifies the inertial interval (\ref{eqNS_C1}) as
	\begin{equation}
	\label{eqNS_C2}
	\mathrm{Re}^{-3/4} \ll \frac{\ell}{b} \ll 1, \quad b = 2^m.
	\end{equation}
This interval extends to all scales $\ell > 0$ by considering the double limit: first taking $\mathrm{Re}\to \infty$ and then $m \to \infty$. One expects that the limiting dynamics (if it exists) is governed by the Euler system.

It is known that the scaling invariance is broken in the developed hydrodynamic turbulence due to the intermittency phenomenon~\cite{frisch1999turbulence,falkovich2009symmetries}, which precludes the convergence of the double limit for the turbulent statistics. As we have shown in this paper (see Sections~\ref{subsec_conj} and \ref{sec_intS}), the intermittency is not an obstacle for a similar convergence in the normalized system. As in Section~\ref{subsec_conj}, we denote by $\mu^{\mathrm{Re}}$ the probability measure of the statistically stationary state in the Navier--Stokes system for a given Reynolds number, and by $\nu^{\mathrm{Re}}$ the corresponding normalized measure. Then the limiting normalized measure is defined as the double limit
	\begin{equation}
	\label{eqNS_T2}
	\nu^\infty = \lim_{m \to \infty} \lim_{{\mathrm{Re}} \to \infty} 
	g^m_\star\nu^{\mathrm{Re}}.
	\end{equation}
Existence of this limit implies that the limiting normalized measure is symmetric: $g_\star\nu^\infty = \nu^\infty$. 

{\color{black}Once the hidden symmetry is established, the theory of Section~\ref{sec_int} applies. This theory explains the power-law scaling for structure functions, and associates the scaling exponents to Perron--Frobenius eigenvalues defined in terms of $\nu^\infty$. In this theory, structure functions are written in the generalized form (\ref{eq3_17}). For example, consider the standard structure function $S_p(\ell) = \langle \|\delta_\ell \mathbf{u}\|^p \rangle$, where $\ell = 2^{-n}$ and $\delta_\ell \mathbf{u} = \mathbf{u}(\mathbf{r}')-\mathbf{u}(\mathbf{r})$ is a difference of fluid velocities at a distance $\ell  = \|\mathbf{r}'-\mathbf{r}\| > 0$. It is expressed in the form (\ref{eq3_17}) with $k_n = 1/\ell$, the function $F(x)$ defined as the average of $\|\mathbf{u}(\mathbf{r})-\mathbf{u}(\mathbf{0})\|^p$ at distances $\|\mathbf{r}\| = 1$, and the operator $g$ given by (\ref{eq_D_S1}) and describing the doubling of spatial resolution. One can write similar expressions for the longitudinal and transverse structure functions.} 

Current understanding of the Navier--Stokes system does not allow a rigorous study of the limit (\ref{eqNS_T2}); see e.g. \cite{bardos2013mathematics}. Nevertheless, the convergence can be verified numerically using expressions (\ref{eq_D_S4}) and (\ref{eq_D_S3}) with the ergodicity assumption. In this numerical analysis, the measure $g^m_\star\nu^{\mathrm{Re}}$ is approximated by the temporal statistics of the velocity field $\mathbf{U}^{(m)}(\mathbf{r},\tau^{(m)})$ obtained from a solution $\mathbf{u}(\mathbf{r},t)$ of the Navier--Stokes system for a large Reynolds number. Hence, the convergence in (\ref{eqNS_T2}) implies that this statistics is independent of $m$ at the scales of  inertial interval (\ref{eqNS_C2}). We emphasize that Galilean transformations play important role in this construction: they yield the Quasi--Lagrangian form of the velocity fields (\ref{eq_D_3_A1}) and (\ref{eq_D_S3}) considered in the reference frame moving with a Lagrangian particle. Indeed, subtracting the Lagrangian particle speed in (\ref{eq_D_S3}) eliminates the so-called sweeping effect~{\color{black}\cite{belinicher1987scale,l1991scale,biferale1999multi,frisch1999turbulence} (crucial for multi-time statistics and caused by large-scale motions), which otherwise would prevent the limit (\ref{eqNS_T2}). We refer the reader to the subsequent work~\cite{mailybaev2022hidden}, where the numerical verification of hidden symmetry for the Navier--Stokes system was carried out.

Finally, let us remark on the connection of hidden scaling symmetry} with the co-called \textit{Kolmogorov multipliers} defined as~\cite{kolmogorov1962refinement,chen2003kolmogorov}
	\begin{equation}
	w_{ij}(\mathbf{r};\ell,\ell') =
	\frac{\delta_i u_j(\mathbf{r},\ell)}{\delta_i u_j(\mathbf{r},\ell')},\quad
	\delta_i \mathbf{u}(\mathbf{r},\ell) = \mathbf{u}(\mathbf{r}+\ell\mathbf{e}_i)-\mathbf{u}(\mathbf{r}),
	\label{eq_D_5}
	\end{equation}
where the indices $i$ and $j$ denote vector components, $\mathbf{e}_i$ are unit vectors in $\mathbb{R}^3$, and $\ell,\ell' \in \mathbb{R}_+$ are two positive scales.
Using (\ref{eq_D_S4}) and (\ref{eq_D_S3}), the multipliers evaluated along the Lagrangian trajectory $\mathbf{r} = \mathbf{R}^t$ are expressed as
	\begin{equation}
	w_{ij}(\mathbf{r};\ell,\ell') = \frac{U_j^{(m)}(\mathbf{e}_i)}{U_j^{(m)}(\gamma\mathbf{e}_i)},\quad
	\ell = 2^{-m},\quad \gamma = \frac{\ell'}{\ell}.
	\label{eq_D_6}
	\end{equation}
{\color{black}It was first conjectured by Kolmogorov~\cite{kolmogorov1962refinement} and observed systematically both in numerical simulations and experimental data~\cite{chen2003kolmogorov} that  multipliers (\ref{eq_D_5}) have scale-invariant statistics depending only on the ratio $\gamma = \ell'/\ell$ and the vector indices. Using representation (\ref{eq_D_6}) this scale invariance becomes the direct consequence of the hidden scaling symmetry: the latter implies that the statistics does not depend on $m$. Strictly speaking, there are some reservations to this argument, because the statistics of normalized variables (\ref{eq_D_6}) is considered with respect to a different time $\tau^{(m)}$. However, this argument can be extended to the original time $t$ as shown in~\cite{mailybaev2022hidden}.} 

\subsection{Proof of Theorem~\ref{theorem_indep}}
\label{proof_th7}

Let us consider a different choice of the representative set denoted by tildes as
	\begin{equation}
	\label{eq7_3_1}
	\mathcal{X} \xmapsto{\widetilde{Q}} \widetilde{\mathcal{Z}} 
	\xmapsto{\widetilde{P}} \widetilde{\mathcal{Y}}.
	\end{equation}
This system is defined by two functions $\widetilde{\mathbf{V}}(x)$ and $\widetilde{A}(\widetilde{z})$ satisfying symmetry relations (\ref{eq7_4}). Recall that the independence of condition $g_\star \nu = \nu$ to a choice of $\widetilde{A}(\widetilde{z})$ has already been proven in Theorem~\ref{theorem3} of Section~\ref{subsec_SNM}. Thus, {\color{black}we can assume a specific form of this function as}
	\begin{equation}
	\label{eq7_3_2}
	\widetilde{A} = A \circ Q.
	\end{equation}
The first condition in (\ref{eq7_4}) is verified for (\ref{eq7_3_2}) as
	\begin{equation}
	\label{eq7_3_3}
	\begin{array}{rcl}
	\widetilde{A} \circ  s_{\mathrm{ts}}^a(z) 
	&=&\displaystyle
	A \circ Q \circ s_{\mathrm{ts}}^a(z)
	= A \circ s_{\mathrm{g}}^{\mathbf{V}\circ s_{\mathrm{ts}}^a(z)} \circ s_{\mathrm{ts}}^a(z)
	= A \circ s_{\mathrm{g}}^{\mathbf{V}(z)/a} \circ s_{\mathrm{ts}}^a(z) 
	\\[7pt]
	&=& \displaystyle
	A \circ s_{\mathrm{ts}}^a(z) \circ s_{\mathrm{g}}^{\mathbf{V}(z)}(z)
	= \frac{A \circ s_{\mathrm{g}}^{\mathbf{V}(z)}(z)}{a} 
	= \frac{A \circ Q(z)}{a} 
	= \frac{\widetilde{A}(z)}{a} ,
	\end{array}
	\end{equation}
where we consecutively used (\ref{eq7_3_2}), (\ref{eq7_3a}), the fourth equality in (\ref{eq7_4}), commutation relation from Tab.~\ref{tab1}, the first equality in (\ref{eq7_4}), (\ref{eq7_3a}) and (\ref{eq7_3_2}).

\begin{lemma}
For the choice (\ref{eq7_3_2}), the following relations hold: 
	\begin{equation}
	Q \circ \widetilde{Q} = Q, \quad 
	Q \circ \widetilde{P} \circ \widetilde{Q} = P \circ Q,\quad
	\widetilde{A} \circ \widetilde{Q} = A \circ Q.
	\label{eq7_3_4r}
	\end{equation}
\end{lemma}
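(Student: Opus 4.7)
The plan is to establish the three identities in order, leveraging the key identity (\ref{eq6_P_19}), namely $Q \circ s_\mathrm{g}^{\mathbf v} = Q$, together with the commutation relations of Tab.~\ref{tab1} and the transformation properties (\ref{eq7_4}).

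First, identity $Q \circ \widetilde{Q} = Q$ is almost immediate. Since $\widetilde{Q}(x) = s_\mathrm{g}^{\widetilde{\mathbf V}(x)}(x)$ is obtained by a Galilean transformation of $x$, applying (\ref{eq6_P_19}) with $\mathbf v = \widetilde{\mathbf V}(x)$ gives $Q\circ\widetilde{Q}(x) = Q\circ s_\mathrm{g}^{\widetilde{\mathbf V}(x)}(x) = Q(x)$. The third identity $\widetilde A \circ \widetilde Q = A \circ Q$ then follows trivially by substituting (\ref{eq7_3_2}): $\widetilde A \circ \widetilde Q = A \circ Q \circ \widetilde Q = A \circ Q$.

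The middle identity $Q \circ \widetilde{P} \circ \widetilde{Q} = P \circ Q$ is the main content. I would set $a = A\circ Q(x)$, which by the already-proved third identity equals $\widetilde A \circ \widetilde Q(x)$. Unfolding the definitions gives
\[
\widetilde{P}\circ\widetilde{Q}(x) = s_\mathrm{ts}^{a}\circ s_\mathrm{g}^{\widetilde{\mathbf V}(x)}(x),\qquad P\circ Q(x) = s_\mathrm{ts}^{a}\circ s_\mathrm{g}^{\mathbf V(x)}(x).
\]
Now use the commutation $s_\mathrm{ts}^{a}\circ s_\mathrm{g}^{\mathbf v} = s_\mathrm{g}^{\mathbf v/a}\circ s_\mathrm{ts}^{a}$ from Tab.~\ref{tab1} on both expressions, and then apply (\ref{eq6_P_19}) twice to absorb the remaining Galilean pieces. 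For $Q\circ\widetilde{P}\circ\widetilde{Q}(x)$, this yields $Q\circ s_\mathrm{g}^{\widetilde{\mathbf V}(x)/a}\circ s_\mathrm{ts}^{a}(x) = Q\circ s_\mathrm{ts}^{a}(x)$. For $P\circ Q(x)$, the same commutation plus the fourth transformation rule in (\ref{eq7_4}), $\mathbf V\circ s_\mathrm{ts}^{a}(x) = \mathbf V(x)/a$, shows that $P\circ Q(x) = s_\mathrm{g}^{\mathbf V(x)/a}\circ s_\mathrm{ts}^{a}(x)$ coincides with $Q\circ s_\mathrm{ts}^{a}(x)$ by definition (\ref{eq7_3a}) of $Q$.

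The only step that requires care is the middle identity: one must juggle four objects (the two projectors in both systems) and make sure the commutations with $s_\mathrm{ts}^{a}$ are done in a way compatible with the transformation law $\mathbf V\circ s_\mathrm{ts}^{a} = \mathbf V/a$. Once the correct order of commutations is chosen, the computation collapses because every Galilean factor that appears is immediately swallowed by $Q$ via (\ref{eq6_P_19}).
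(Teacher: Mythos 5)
Your proposal is correct and follows essentially the same route as the paper: the first and third identities are exactly the paper's argument (the paper's computation in its proof of the first identity is just a re-derivation of (\ref{eq6_P_19})), and for the middle identity you use the same ingredients — the commutation $s_{\mathrm{ts}}^{a}\circ s_{\mathrm{g}}^{\mathbf v}=s_{\mathrm{g}}^{\mathbf v/a}\circ s_{\mathrm{ts}}^{a}$, the law $\mathbf V\circ s_{\mathrm{ts}}^{a}=\mathbf V/a$, and absorption of Galilean factors by $Q$ — merely meeting both sides at the common expression $Q\circ s_{\mathrm{ts}}^{a}(x)$ rather than reducing the left-hand side directly to $P\circ Q(x)$ as the paper does.
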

\begin{proof}
The first relation is obtained using (\ref{eq7_3a}) and the second equality in (\ref{eq7_4}) as
	\begin{equation}
	Q \circ \widetilde{Q}(x) 
	= s_{\mathrm{g}}^{\mathbf{V} \circ s_{\mathrm{g}}^{\widetilde{\mathbf{V}}(x)}(x)}
	\circ s_{\mathrm{g}}^{\widetilde{\mathbf{V}}(x)}(x)
	= s_{\mathrm{g}}^{\mathbf{V}(x)-\widetilde{\mathbf{V}}(x)}
	\circ s_{\mathrm{g}}^{\widetilde{\mathbf{V}}(x)}(x)
	= s_{\mathrm{g}}^{\mathbf{V}(x)}(x) = Q(x).
	\label{eq7_3_4a}
	\end{equation}
Denoting 
	\begin{equation}
	\label{eq7_3_4nt}
	\widetilde{z} = \widetilde{Q}(x), \quad 
	\widetilde{a} = \widetilde{A}(\widetilde{z}),
	\end{equation}
we derive
	\begin{equation}
	\begin{array}{rcl}
	Q \circ \widetilde{P} \circ \widetilde{Q}(x) 
	&=& 
	Q \circ \widetilde{P}(\widetilde{z})
	= s_{\mathrm{g}}^{\mathbf{V} \circ \widetilde{P}(\widetilde{z})} \circ \widetilde{P}(\widetilde{z})
	= s_{\mathrm{g}}^{\mathbf{V} \circ s_{\mathrm{ts}}^{\widetilde{a}}(\widetilde{z})}
	\circ s_{\mathrm{ts}}^{\widetilde{a}}(\widetilde{z})
	\\[5pt]
	&=& s_{\mathrm{g}}^{\mathbf{V}(\widetilde{z})/\widetilde{a}}
	\circ s_{\mathrm{ts}}^{\widetilde{a}}(\widetilde{z})
	= s_{\mathrm{ts}}^{\widetilde{a}} \circ s_{\mathrm{g}}^{\mathbf{V}(\widetilde{z})}(\widetilde{z})
	= s_{\mathrm{ts}}^{\widetilde{a}} \circ Q(\widetilde{z}).
	\end{array}
	\label{eq7_3_4b}
	\end{equation}
where we consecutively used (\ref{eq7_3a}), (\ref{eq7_3b}), the fourth equality in (\ref{eq7_4}), commutation relation from Tab.~\ref{tab1}, and again (\ref{eq7_3a}). Using (\ref{eq7_3_4nt}), (\ref{eq7_3_4a}) and (\ref{eq7_3_2}), we obtain
	\begin{equation}
	\label{eq7_3_4fa}
	Q(\widetilde{z}) = Q \circ \widetilde{Q}(x) = Q(x),
	\quad
	\widetilde{a} = \widetilde{A}(\widetilde{z}) = A \circ Q \circ \widetilde{Q}(x) = A \circ Q(x).
	\end{equation}
Substituting these relations into the right-hand side of (\ref{eq7_3_4b}) and using (\ref{eq7_3b}) yields
	\begin{equation}
	Q \circ \widetilde{P} \circ \widetilde{Q}(x) 
	= s_{\mathrm{ts}}^{A \circ Q(x)} \circ Q(x)
	= P \circ Q(x).
	\label{eq7_3_4fn}
	\end{equation}
Using the first equality of (\ref{eq7_3_4r}) and (\ref{eq7_3_2}), we have
	\begin{equation}
	\widetilde{A} \circ \widetilde{Q} = A \circ Q \circ \widetilde{Q} = A \circ Q.
	\label{eq7_3_5}
	\end{equation} 
\end{proof}

Let us write invariant measures (\ref{eq7_7}) for the two normalized systems as
	\begin{equation}
	\nu = (P \circ Q)_\sharp \mu_{A \circ Q}, \quad
	\widetilde{\nu} = (\widetilde{P} \circ \widetilde{Q})_\sharp \mu_{\widetilde{A} \circ \widetilde{Q}}.
	\label{eq7_3_4}
	\end{equation}
Substituting (\ref{eq7_3_4}) into (\ref{eq7_8}), we expresses the invariant measure $g_\star \nu$ as
	\begin{equation}
	g_\star \nu 
	=  (P \circ {g})_\sharp\nu_C
	=  (P \circ {g})_\sharp \Big((P \circ Q)_\sharp \mu_{A \circ Q}\Big)_{A \circ g}.
	\end{equation}
We modify this expression using relation (\ref{eq2_C2g}) as
	\begin{equation}
	g_\star \nu 
	=  (P \circ {g} \circ P \circ Q)_\sharp \mu_F,
	\label{eq7_3_7}
	\end{equation}
where 
	\begin{equation}
	F = (A \circ g \circ P \circ Q)\, A\circ Q.
	\label{eq7_3_7b}
	\end{equation}
Using (\ref{eq2_B2x}) and commutativity of $g$ with $Q$ (see Proposition~\ref{prop_Gsym} from Section~\ref{subsec_sym_G}), we further reduce (\ref{eq7_3_7}) to the form
	\begin{equation}
	g_\star \nu 
	=  (P \circ {g} \circ Q)_\sharp \mu_F
	=  (P \circ Q \circ {g})_\sharp \mu_F.
	\label{eq7_3_7z}
	\end{equation}
	
The similar derivation for the measure $g_\star \widetilde{\nu}$ in the other normalized system yields
	\begin{equation}
	g_\star \widetilde{\nu} 
	=  (\widetilde{P} \circ \widetilde{Q} \circ {g})_\sharp \mu_{\widetilde{F}},\quad
	\widetilde{F} = (\widetilde{A} \circ g \circ \widetilde{P} \circ \widetilde{Q})\, 
	\widetilde{A}\circ \widetilde{Q}.
	\label{eq7_3_8}
	\end{equation}
For this function $\widetilde{F}$, we have
	\begin{equation}
	\widetilde{F} 
	= (A \circ Q \circ g \circ \widetilde{P} \circ \widetilde{Q})\,(A\circ Q)
	= (A \circ g \circ Q \circ \widetilde{P} \circ \widetilde{Q})\,(A\circ Q)
	= (A \circ g \circ P \circ Q)\,(A\circ Q) = F,
	\label{eq7_3_9}
	\end{equation}
where we used (\ref{eq7_3_2}), (\ref{eq7_3_4r}), commutativity of $g$ with $Q$ by Proposition~\ref{prop_Gsym}, and (\ref{eq7_3_7b}).

Let us assume that $g_\star \widetilde{\nu} = \widetilde{\nu}$. This equality can be written  using (\ref{eq7_3_8}), (\ref{eq7_3_9}) and (\ref{eq7_3_4}) as
	\begin{equation}
	(\widetilde{P} \circ \widetilde{Q} \circ g)_\sharp \mu_F 
	= (\widetilde{P} \circ \widetilde{Q})_\sharp \mu_{\widetilde{A} \circ \widetilde{Q}}.
	\label{eq7_3_10}
	\end{equation}
Applying the push-forward $Q_\sharp$ to both sides of this equality and using (\ref{eq7_3_4r}) yields
	\begin{equation}
	(P \circ Q \circ g)_\sharp \mu_F 
	= (P \circ Q)_\sharp \mu_{A \circ Q}.
	\label{eq7_3_11}
	\end{equation}
According to (\ref{eq7_3_7z}) and (\ref{eq7_3_4}), this yields the symmetry property $g_\star \nu = \nu$, proving that this property does not depend on a choice of the representative set.

\section{Conclusion}

In this work we studied symmetries given by a sum of spatiotemporal scaling and Galilean groups, which are represented by maps in an infinite-dimensional configuration space. Here we understand symmetries in the statistical sense, i.e., as maps that preserve the invariance of probability measures with respect to a flow (evolution operator). We focused on the equivalence relation imposed by the two symmetries, which do not commute with the flow: temporal scalings and Galilean transformations. Equivalence classes with respect to these symmetries define the so-called quotient space. In the noncommutative case, the equivalence relation is not preserved by the flow and, therefore, the dynamics cannot be extended to the quotient space in general. 

In this paper, we have shown that, despite of noncommutativity, a quotient-like construction is possible due to the specific form of commutation relations. This yields a normalized system with a corresponding flow, invariant measure and symmetries, which are restricted to a representative set containing a single element within each equivalence class. Such normalized flow and invariant measure are explicitly related to the flow and  invariant measure of the original system. In this construction, temporal scalings induce a state-dependent time synchronization. The role of Galilean transformations is two-fold: they impose extra conditions of homogeneity and incompressibility, and they induce the normalized system resembling the Quasi--Lagrangian representation of fluid dynamics. 

Our construction leads to the notion of a \textit{hidden symmetry}: this is a symmetry, which is broken in the original system but restored in the normalized system. As an application, we show that the hidden symmetry implies asymptotic power law scaling for structure functions in the theory of turbulence, with the exponents expressed as Perron--Frobenius eigenvalues. These exponents can be anomalous, i.e., depending nonlinearly on the order of a structure function. {\color{black}This theory is verified both numerically and analytically for anomalous scaling exponents in shell models of turbulence~\cite{mailybaev2021solvable,mailybaev2022shell}.} 

Finally, we formulated the quotient construction and the concept of hidden scaling symmetry for the incompressible Euler and Navier--Stokes systems. In this paper, we mostly focused on general properties of this construction, which have potential applications in the theory of turbulence. {\color{black}For the detailed formalism of hidden symmetry applied to the Navier--Stokes equations see~\cite{mailybaev2022hidden}.}

There are several aspects to address in future developments. The hidden scaling symmetry can be verified by numerical methods in specific systems and, if confirmed, used in theoretical and applied studies. Important theoretical questions (not discussed in the present work) are related to the role of inviscid invariants and the form of dissipative mechanism in the normalized system. {\color{black} We revealed a peculiar role of incompressibility, which was obtained as a necessary condition from the analysis of symmetry groups alone. Thus,} the proposed quotient construction does not apply to the compressible fluid dynamics, e.g. the turbulence in Burgers equation~\cite{frisch2001burgulence}. 

\vspace{5mm}\noindent
\textbf{Acknowledgments.} The author is grateful to Gregory L. Eyink and Simon Thalabard for fruitful discussions and comments on the manuscript. The work was supported by CNPq grants 308721/2021-7 and FAPERJ grant E-26/201.054/2022. 

\bibliographystyle{plain}
\bibliography{refs}

\end{document}